\numberwithin{equation}{section}
\titleformat{\section}{\Large\bfseries}{\thesection}{1.5ex}{}
\titleformat{\subsection}{\large\bfseries}{\thesubsection}{1.5ex}{}
\newtheorem{theorem}{Theorem}
\newtheorem{proposition}{Proposition}
\newtheorem{lemma}{Lemma}
\newtheorem{corollary}{Corollary}[theorem]
\newtheorem*{conjecture*}{Conjecture}
\theoremstyle{definition}
\newtheorem{definition}{Definition}
\theoremstyle{remark}
\newtheorem{remark}{Remark}
\crefname{equation}{}{}
\crefname{enumi}{}{}
\crefname{theorem}{Theorem}{Theorems}
\crefname{corollary}{Corollary}{Corollaries}
\crefname{example}{Example}{Examples}
\crefname{lemma}{Lemma}{Lemmas}
\crefname{proposition}{Proposition}{Propositions}
\crefname{figure}{Figure}{Figures}
\crefname{table}{Table}{Tables}
\crefname{appendix}{appendix}{appendices}
\Crefname{equation}{}{}
\Crefname{enumi}{}{}
\Crefname{theorem}{Theorem}{Theorems}
\Crefname{corollary}{Corollary}{Corollaries}
\Crefname{example}{Example}{Examples}
\Crefname{lemma}{Lemma}{Lemma}
\Crefname{proposition}{Proposition}{Proposition}
\Crefname{figure}{Figure}{Figures}
\Crefname{table}{Table}{Tables}
\Crefname{appendix}{Appendix}{Appendices}
\newcommand{\uvec}[1]{\boldsymbol{\hat{\mathbf{#1}}}}
\newcommand\Ra{\mbox{\textit{R}}}  
\renewcommand\Pr{\mbox{\textit{Pr}}}  
\newcommand\Nu{\mbox{\textit{Nu}}}  
\newcommand{\wT}{\mbox{$\smash{\mean{wT}}$}}
\newcommand{\abs}[1]{\left\vert #1 \right\vert}
\newcommand{\vx}{\boldsymbol{x}}
\newcommand{\vu}{\boldsymbol{u}}
\newcommand{\volav}[1]{\left\langle #1 \right\rangle}
\newcommand{\smallvolav}[1]{\langle #1 \rangle}
\newcommand{\timeav}[1]{\overline{#1}}
\newcommand{\mean}[1]{\timeav{\volav{#1}}}
\newcommand{\bR}{\mathbb{R}}
\newcommand{\Tspace}{\mathcal{H}}
\newcommand{\dz}{{\rm d}z}
\newcommand{\bfield}{\tau} 
\newcommand{\lm}{\lambda} 
\newcommand{\bp}{\beta} 
\newcommand{\translation}[1]{\mathscr{S}_{#1}}
\newcommand{\subeqref}[2]{\hyperref[#1]{(\ref*{#1}#2)}}
\definecolor{matlabblue}{RGB}{0,113,188}
\definecolor{matlabred}{RGB}{216,82,24}
\definecolor{mygrey}{rgb}{0.7,0.7,0.7}
\definecolor{matlabgreen}{rgb}{0.4660,0.6740,0.1880} 
\definecolor{lightblue}{RGB}{0,180,200}
\definecolor{lightred}{RGB}{250,100,40}
\definecolor{matlabyellow}{rgb}{0.93,0.69,0.13} 
\definecolor{colorbar1}{rgb}{1.000000,0.909091,0.000000}
\definecolor{colorbar2}{rgb}{1.000000,0.818182,0.000000}
\definecolor{colorbar3}{rgb}{1.000000,0.727273,0.000000}
\definecolor{colorbar4}{rgb}{1.000000,0.636364,0.000000}
\definecolor{colorbar5}{rgb}{1.000000,0.545455,0.000000}
\definecolor{colorbar6}{rgb}{1.000000,0.454545,0.000000}
\definecolor{colorbar7}{rgb}{1.000000,0.363636,0.000000}
\definecolor{colorbar8}{rgb}{1.000000,0.272727,0.000000}
\definecolor{colorbar9}{rgb}{1.000000,0.181818,0.000000}
\definecolor{colorbar10}{rgb}{1.000000,0.090909,0.000000}
\definecolor{colorbar11}{rgb}{1.000000,0.000000,0.000000}
\definecolor{colorbar12}{rgb}{0.909091,0.000000,0.000000}
\definecolor{colorbar13}{rgb}{0.818182,0.000000,0.000000}
\definecolor{colorbar14}{rgb}{0.727273,0.000000,0.000000}
\definecolor{colorbar15}{rgb}{0.636364,0.000000,0.000000}
\definecolor{colorbar16}{rgb}{0.545455,0.000000,0.000000}
\definecolor{colorbar17}{rgb}{0.454545,0.000000,0.000000}
\definecolor{colorbar18}{rgb}{0.363636,0.000000,0.000000}
\definecolor{colorbar19}{rgb}{0.272727,0.000000,0.000000}
\definecolor{colorbar20}{rgb}{0.181818,0.000000,0.000000}
\definecolor{colorbar21}{rgb}{0.090909,0.000000,0.000000}
\definecolor{grey}{rgb}{0.7,0.7,0.7}
\newcommand\solidrule[1][10pt]{\rule[0.5ex]{#1}{1.5pt}}
\newcommand\dashedrule{\mbox{\solidrule[2pt]\hspace{2pt}\solidrule[2pt]\hspace{2pt}\solidrule[2pt]}}
\newcommand\dottedrule{\mbox{\solidrule[1pt]\hspace{1pt}\solidrule[1pt]\hspace{1pt}\solidrule[1pt]\hspace{1pt}\solidrule[1pt]\hspace{1pt}\solidrule[1pt]\hspace{1pt}}}
\author[1]{Ali Arslan\thanks{Corresponding author: \href{mailto:a.arslan18@imperial.ac.uk}{a.arslan18@imperial.ac.uk}}}
\author[1]{Giovanni Fantuzzi}
\author[2]{John Craske}
\author[1]{Andrew Wynn}
\affil[1]{\small Dept. of Aeronautics, Imperial College London, London, SW7 2AZ, UK}
\affil[2]{\small Dept. of Civil and Environmental Engineering, Imperial College London, SW7 2AZ, UK}
\date{\today}
\title{\Large\bf 
Rigorous scaling laws for internally heated convection at infinite Prandtl number  }
\begin{document}


\maketitle

\begin{abstract}\noindent
    New bounds are proven on the mean vertical convective heat transport, $\wT$, for uniform internally heated (IH) convection in the limit of infinite Prandtl number. 
    For fluid in a horizontally-periodic layer between isothermal boundaries, we show that $\wT \leq \frac12 - c R^{-2}$, where $R$ is a nondimensional `flux' Rayleigh number quantifying the strength of internal heating and  $c = 216$.
    Then, $\wT = 0$ corresponds to vertical heat transport by conduction alone, while $\wT > 0$ represents the enhancement of vertical heat transport upwards due to convective motion.
    If, instead, the lower boundary is a thermal insulator, then we obtain $\wT \leq \frac12 - c R^{-4}$, with $c\approx 0.0107$. This result implies that the Nusselt number \Nu, defined as the ratio of the total-to-conductive heat transport, satisfies $\Nu \lesssim \Ra^{4}$.
    Both bounds are obtained by combining the background method with a minimum principle for the fluid's temperature and with Hardy--Rellich inequalities to exploit the link between the vertical velocity and temperature.
    In both cases, power-law dependence on \Ra\ improves the previously best-known bounds, which, although valid at both infinite and finite Prandtl numbers, approach the uniform bound exponentially with \Ra.
\end{abstract}

\section{Introduction}\label{sec:intro}

Convective flows driven by internal sources of heat have attracted renewed interest in recent years~\cite{arslan2021ih2, arslan2021IH1, Wang2020, creyssels_2021, kumar2021ihc, Tobasco2022}. Such flows are commonly encountered in geophysics, where atmospheric convection~\cite{emanuel1994atmospheric} and mantle convection~\cite{schubert2001mantle, MantleConvectioninTerrestrialPlanets} are typical examples. They also exhibit unique features not seen in boundary-driven Rayleigh--B\'enard convection: for instance, it has recently been observed experimentally that internally heated (IH) convection can transport heat more efficiently than Rayleigh-B\'enard convection~\cite{miquel2019convection,bouillaut2019transition}. Nevertheless, the former remains much less studied. In particular, it remains a largely open challenge to rigorously predict how key statistical properties such as the mean vertical heat flux depend on the heating strength and on the fluid's Prandtl number $\Pr$, defined as the ratio between the fluid's kinematic viscosity $\nu$ and its thermal diffusivity $\kappa$.

One source of difficulty for mathematical studies of IH convection is that the mean thermal dissipation, the mean viscous dissipation, and the mean vertical convective heat flux cannot all be related to each other via \textit{a priori} relationships. This is in contrast with Rayleigh-B\'enard convection, where such relationships enable one to rigorously bound the convective heat transfer through variational analysis of the mean thermal dissipation~\cite{doering1996variational}. Applying the same strategy to IH flows yields bounds on the mean temperature of the fluid~\cite{lu2004bounds,whitehead2011internal,goluskin2016internally,whitehead2012slippery} but not the convective heat flux. Recently, this variational strategy was extended by taking into account a minimum principle for the temperature, leading to bounds on the mean convective heat flux that approach a constant exponentially fast as the heating strength is increased~\cite{arslan2021IH1,kumar2021ihc}. Here, we demonstrate that these bounds can be improved to algebraic powers when the Prandtl number is taken to be infinite. 

Using standard non-dimensional variables~\cite{goluskin2016internally}, we consider a fluid in a horizontally periodic domain $\Omega = \mathbbm{T}_{[0,L_x]} \times \mathbbm{T}_{[0,L_y]}   \times [0,1]$, the motion of which is governed by the infinite Prandtl number Boussinessq equations
\begin{subequations}\label{e:governing-equations}
    \begin{align}
    \nabla \cdot \vu &= 0\, , \label{continuit} \\
    \nabla p &= \Delta\vu + \Ra\, T \uvec{z}\, , \label{nondim_momentum} \\
    \partial_t T + \vu\cdot \nabla T  &= \Delta T + 1.
    \label{nondim_energy}
\end{align}
\end{subequations}
Here, $\vu=(u,v,w)$ is the fluid velocity in cartesian components, $p$ is the pressure, $T$ is the temperature, and the unit forcing in~\eqref{nondim_energy} represents the non-dimensional internal heating rate. The flow is controlled by a `flux' Rayleigh number that measures the destabilising effect of the heating compared to the stabilising effects of diffusion,
\begin{equation}
    \Ra := \frac{g \alpha Q d^{5}}{\rho c_p \nu \kappa^{2}}.
\end{equation}
Here $g$ is the acceleration of gravity, $\rho$ is the density, $c_p$ is the specific heat capacity, $\alpha$ is the thermal expansion coefficient and $Q$ is the heating rate per unit volume.

We consider two separate configurations that differ in the choice of boundary conditions at the top ($z=1$) and bottom ($z=0$) of the domain. In the first configuration, referred to as IH1 and sketched in \cref{fig:config}\textit{(a)}, the velocity satisfies no-slip conditions and the temperature of both vertical boundaries is held at a constant value, which can be taken as zero without loss of generality. Hence, we enforce
\begin{subequations}
\begin{equation}
\text{IH1:} \qquad  \vu\vert_{z\in\{0,1\}}=\boldsymbol{0}, \qquad
T\vert_{z=0} = 0, \qquad
T\vert_{z=1} = 0.
\label{bc_T_IH1}
\end{equation}
In the second configuration, illustrated by \cref{fig:config}\textit{(b)}, the bottom plate is replaced by a perfect thermal insulator, giving
\begin{equation}
\text{IH3:} \qquad \vu\vert_{z\in\{0,1\}}=\boldsymbol{0}, \qquad
\partial_z T\vert_{z=0} =0, \qquad
T\vert_{z=1} = 0.   
\label{bc_T_IH3} 
\end{equation}
\end{subequations}

We seek bounds on the mean vertical convective heat transport, \wT, a quantity that is directly proportional to the viscous dissipation for both the IH1 and IH3 configurations. 
Throughout this paper overbars denote infinite time averages, while angled brackets denote volume averages:
\begin{subequations}
\begin{align}
    \volav{f} &= 
    \frac{1}{L_x L_y} \int^{L_x}_0 \int^{L_y}_0 \int^{1}_{0} f(x,y,z,t) \textrm{d}z\, \textrm{d}y\, \textrm{d}x,\\
    \mean{f} &= \limsup_{\tau\rightarrow \infty} \frac{1}{\tau} 
    \int^{\tau}_{0} \volav{f} \textrm{d}t.
\end{align}
\end{subequations}

For either set of boundary conditions considered in this paper, it is known that $0 \leq \wT \leq \frac12$ uniformly in \Ra\ and \Pr~\cite{goluskin2012convection}. The zero lower bound is saturated by the (possibly unstable) state in which the flow does not move and heat is transported vertically by conduction alone. The upper bound of $\frac12$, instead, takes on a different meaning depending on the thermal boundary conditions. As will become apparent from \eqref{e:fluxes_ih1} below, for the IH1 configuration a flow with $\wT=\frac12$ would see all heat escape the domain through the upper boundary.  
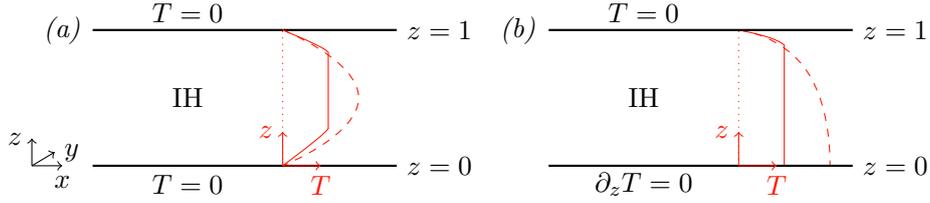
\begin{figure}
    \centering
    \begin{tikzpicture}[every node/.style={scale=0.95}]
    \draw[black,thick] (-6,0) -- (-2,0) node [anchor=west] {$z=0$};
    \draw[black,thick] (-6,1.8) -- (-2,1.8) node [anchor = west] {$z=1$};
    \draw [dashed,colorbar10] plot [smooth, tension = 1] coordinates {(2.5,1.8) (3.4,1.25) (3.7,0)};
    \draw [colorbar10] plot [smooth, tension =1] coordinates {(2.5,1.8) (2.9,1.7) (3.1,1.6)};
    \draw [colorbar10] (3.1,1.6) -- (3.1,0);
    \node at (1.25,-0.25) {$ \partial_z T = 0 $};
    \node at (1.25,2.025) {$ T = 0 $};
    \node at (-4.75,0.9) { IH };
    \draw[->,colorbar10] (-3.5,0) -- (-3.5,0.45) node [anchor = east]{$z$};
    \draw [dotted,colorbar10] (2.5,0) -- (2.5,1.8);
    \draw[->,colorbar10] (-3.5,0) -- (-3,0) node [anchor=north]{$T$};
    \draw[->] (-6.8,0) -- (-6.8,0.36) node [anchor=east]{$z$};
    \draw[->] (-6.8,0) -- (-6.4,0) node [anchor=north]{$x$};
    \draw[->] (-6.8,0) -- (-6.5,0.18) node [anchor=west]{$y$};
    \draw[black,thick] (0,0) -- (4,0) node [anchor=west] {$z=0$};
    \draw[black,thick] (0,1.8) -- (4,1.8) node [anchor = west] {$z=1$};
    \node at (-4.75,-0.25) {$ T = 0 $};
    \node at (-4.75,2.025) {$  T = 0 $};
    \draw [dashed,colorbar10] plot [smooth, tension = 1] coordinates { (-3.5,1.8) (-2.5,0.9) (-3.5,0) };
    \draw [colorbar10] plot [smooth,tension=1] coordinates {(-3.5,1.8) (-3.05,1.6) (-2.9,1.5)};
    \draw [colorbar10] (-2.9,1.5) -- (-2.9,0.5);
    \draw [colorbar10] plot [smooth, tension =1] coordinates {(-2.9,0.5) (-3.05,0.35) (-3.5,0)};
    \draw [dotted,colorbar10] (-3.5,0) -- (-3.5,1.8) ;
    \draw[->,colorbar10] (2.5,0) -- (2.5,0.45) node [anchor = east]{$z$};
    \draw[->,colorbar10] (2.5,0) -- (3,0) node [anchor=north]{$T$};
    \node at (1.25,0.9) {\textrm{IH}};
    \node at (-6.4,1.8) {\textit{(a)}};
    \node at (-0.4,1.8) {\textit{(b)}};
    \end{tikzpicture}
    \caption{IH convection with \textit{(a)} isothermal boundaries \eqref{bc_T_IH1}, and \textit{(b)} insulating lower boundary \eqref{bc_T_IH3}. In both panels, IH represents the uniform unit internal heat generation. Red lines denote the conductive temperature profiles ({\color{colorbar10}\dashedrule}) and indicative mean temperature profiles in the turbulent regime ({\color{colorbar10}\solidrule}).}
    \label{fig:config}
\end{figure}
This, however, cannot be achieved at any finite value of \Ra. Precisely, our first main results reveal that the mean vertical heat flux \wT\ is strictly smaller than $\frac12$ by an amount that cannot decrease faster than quadratically as \Ra\ is raised.

\begin{theorem}[Isothermal boundaries, $\Pr=\infty$]\label{thm:ih1-main-result}
Suppose that $\boldsymbol{u}=(u,v,w)$ and $T$ solve \eqref{e:governing-equations} subject to the no-slip isothermal boundary conditions \eqref{bc_T_IH1}. There exists a constant $c>0$ such that, for all sufficiently large $R>0$, 
\begin{equation}\label{e:thm1_b}
    \mean{wT} \leq\frac12 - c\,R^{-2}.
\end{equation}
\end{theorem}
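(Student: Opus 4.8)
The plan is to run the background method of Constantin, Doering and Hopf, specialised to $\Pr=\infty$ and supplemented by a pointwise minimum principle for $T$ and by Hardy--Rellich inequalities --- the three ingredients announced in the abstract. I would start by recording the elementary \emph{a priori} relations. Dotting \eqref{nondim_momentum} with $\vu$ and averaging gives the power balance $\mean{|\nabla\vu|^2}=R\,\mean{wT}$, so a bound on $\mean{wT}$ is the same as a bound on the mean viscous dissipation. Testing \eqref{nondim_energy} against $z-1$ and averaging gives the identity $\mean{wT}=\tfrac12-\timeav{\horav{\partial_z T\vert_{z=0}}}$, which exhibits $\tfrac12$ as the conductive value and reduces \eqref{e:thm1_b} to the strictly positive lower bound $\timeav{\horav{\partial_z T\vert_{z=0}}}\ge c\,R^{-2}$ on the mean heat flux escaping the bottom plate. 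Finally, since $T$ solves a parabolic equation with strictly positive source and vanishes on both plates, a maximum-principle argument gives $T\ge0$ in the long-time limit; in particular $\partial_z T\vert_{z=0}\ge0$, which already recovers the known uniform bound $\mean{wT}\le\tfrac12$ and is exactly what must now be made quantitative.

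Next I would decompose $T=\tau(z)+\theta$, with $\tau(0)=\tau(1)=0$ so that $\theta$ also vanishes on the plates, taking $\tau$ to be the conductive profile $\tfrac12 z(1-z)$ modified inside boundary layers of width $\delta$ near each plate so as to create an $O(\delta)$ gain over $\tfrac12$. The point of $\Pr=\infty$ is that \eqref{nondim_momentum} slaves the velocity to the temperature: taking the double curl gives $\Delta^2 w=-R(\partial_{xx}+\partial_{yy})T=-R(\partial_{xx}+\partial_{yy})\theta$ with $w=\partial_z w=0$ on the plates, so $w$ is a bounded linear functional of $\theta$, obtained mode by mode in the horizontal variables by inverting $(\partial_{zz}-|\mathbf{k}|^2)^2$ with clamped conditions. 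Consequently $w$ vanishes quadratically at each plate and weighted near-wall norms of $w$ and $\partial_z w$ are controlled by $R$ times Sobolev norms of $\theta$; this is exactly what lets the Hardy--Rellich inequalities localise the buoyancy coupling into the thin layers where $\tau$ has been altered --- an estimate unavailable at finite $\Pr$, which is why the algebraic rate is special to this limit.

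The core is then a single variational inequality, obtained by writing $\mean{wT}$ as itself plus Lagrange-multiplier terms that vanish on solutions --- namely the evolution equation for $\theta$ tested against $2a\theta$ for a balance parameter $a>0$, together with $\lambda\,\mean{g(z)\,T}$ for $\lambda\ge0$ and $g\ge0$ supported in the bottom layer, which is admissible because $\mean{g(z)\,T}\ge0$ by the minimum principle --- and then expanding. After time-averaging one reaches an inequality of the schematic form $\mean{wT}\le\tfrac12-c_1\delta+\mathcal{Q}\{\theta\}$, where $c_1\delta$ is the gain from the modified background and $\mathcal{Q}\{\theta\}$ is a quadratic functional built from $\mean{|\nabla\theta|^2}$ and the sign-indefinite buoyancy term $\mean{\tau'(z)\,w\,\theta}$ with $w=w[\theta]$ slaved as above. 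The heating produces only a term proportional to $\mean{(\tau''+1)\,\theta}$, supported in the layers, which is absorbed into $\mean{|\nabla\theta|^2}$ by Poincar\'e's inequality there; the part of the buoyancy term living in the bottom layer, where $\tau'$ is largest, is rendered one-signed by the added $\lambda\,\mean{g(z)\,T}$ term; and the remaining, genuinely indefinite, contribution is handled with Hardy--Rellich --- using that $\tau'$ is concentrated within distance $\delta$ of the walls while $w$ vanishes quadratically with near-wall norms controlled by $R\,\|\theta\|$, inequalities such as $\int_0^\delta w^2 z^{-4}\,\mathrm{d}z\lesssim\int_0^1|\partial_{zz}w|^2\,\mathrm{d}z$ and $\int_0^\delta\theta^2 z^{-2}\,\mathrm{d}z\lesssim\int_0^1|\partial_z\theta|^2\,\mathrm{d}z$ give a bound of the form $c_2\,R^{p}\delta^{q}\,\mean{|\nabla\theta|^2}$ with fixed positive exponents $p,q$. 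The spectral constraint $\mathcal{Q}\{\theta\}\le0$ then holds as soon as $R^{p}\delta^{q}$ is below a fixed constant; choosing $\delta$ as large as this allows and matching it against $c_1\delta$ yields $\mean{wT}\le\tfrac12-c\,\delta\le\tfrac12-c\,R^{-2}$, the exponent $2$ being the outcome of this bookkeeping.

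The main obstacle is precisely this last step: verifying the spectral constraint with the sharp power of $R$, which requires tracking carefully how the factor $R$ from the $w$--$T$ slaving, the layer width $\delta$, and the Hardy--Rellich constants combine, and deploying the minimum principle exactly where a crude bound on the buoyancy term would be too lossy to close the argument. Subsidiary technical points are the rigorous justification of the long-time nonnegativity of $T$ and of the a priori regularity needed to legitimise the testing and the time-averaging.
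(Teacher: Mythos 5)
Your proposal assembles the right ingredients (auxiliary/background functional, minimum principle, the infinite-$\Pr$ slaving $\Delta^2 w=-R\Delta_h T$, Hardy-type inequalities), but two of them are deployed in ways that would not close. First, you use the minimum principle in the wrong place: you add $\lambda\mean{g(z)T}\ge 0$ and claim it renders the bottom-boundary-layer part of the buoyancy term $\mean{\bfield' w\theta}$ one-signed, but that term is quadratic in the unknown while $\lambda\mean{g(z)T}$ is linear, so it cannot sign it; the spectral constraint must be verified for the quadratic form alone. In the paper the minimum principle instead acts on the \emph{linear} terms: the multiplier $\lm=-\delta^{-1}\mathbbm{1}_{(0,\delta)}$ cancels the steep slope $\bfield'=-1/\delta$ inside the penalty $\tfrac{1}{4\bp}\volav{|\bfield'-\lm-\bp(z-\tfrac12)|^2}$ of \cref{th:ih1-bounding-problem}, which would otherwise be $O(1/(\bp\delta))$ and destroy any bound approaching $\tfrac12$ (this is exactly why, without the minimum principle, one only gets $\wT\lesssim R^{1/5}$). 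Your sketch never explains how this steep-layer penalty is paid for — the only linear term you discuss is the heating term $\mean{(\tau''+1)\theta}$ — so the step that makes a bound of the form $\tfrac12-c_1\delta$ possible at all is missing.

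Second, the spectral-constraint verification is underspecified and internally inconsistent. You take the background to be the conductive profile $\tfrac12 z(1-z)$ modified in layers, yet later assume $\tau'$ is concentrated within $O(\delta)$ of the walls; with the conductive bulk, $\tau'=\tfrac12-z$ is $O(1)$ and sign-indefinite in the interior, and your estimates give a bulk contribution of size $\sim R\mean{|\nabla\theta|^2}$ that forces the balance parameter to be $O(R)$ and wipes out the $O(\delta)$ gain. The paper avoids this by a quite different construction: $\bfield$ grows logarithmically in the bulk with small amplitude $A\sim\delta^{3/2}$, and the \emph{signed} Hardy--Rellich inequalities of \cref{lemma:doering_estimate}, $\volav{wT/z}\ge\tfrac{4}{R}\volav{w^2/z^3}$ (which use the PDE coupling, not just the quadratic vanishing of $w$ at the wall), turn that positive bulk slope into a stabilizing term against which the boundary-layer contribution is measured; the binding balance $\volav{f^2z^4}\sim\delta^3\lesssim A\bp/R\sim\delta^{5/2}/R$ is what yields $\delta\sim R^{-2}$, together with the asymmetric choice $\delta\sim R^{-2}\ll\varepsilon$ (the paper notes that $\delta=\varepsilon$ only gives $\tfrac12-O(R^{-2}\ln^{-2}R)$, which does not imply \eqref{e:thm1_b}). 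Your plain inequalities $\int_0^\delta w^2z^{-4}\lesssim\int|\partial_{zz}w|^2$ and $\int_0^\delta\theta^2z^{-2}\lesssim\int|\partial_z\theta|^2$ are not these estimates, and the assertion that your bookkeeping produces the exponent $2$ is stated rather than derived: no values of $p,q$, of the balance parameter, or of the layer widths are computed, and with the choices you do specify the constraint either fails (conductive bulk) or the required balance is not established. As written, therefore, the proof does not go through without substantial repair of both the linear-term treatment and the spectral-constraint argument.
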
%
\begin{remark}\label{rem:prefactors-ih1}
It is shown in \cref{ss:proof-IH1} that \eqref{e:thm1_b} holds with $c = 216$ for any $R > 1892$.
\end{remark}

The bound on \wT\ can be given a clear physical interpretation as a measure of the asymmetry of the heat transport due to heating. Indeed, upon computing $\mean{z\cdot \eqref{nondim_energy} }$ one can show that the average nondimensional heat fluxes through the top and bottom boundaries, denoted by $\mathcal{F}_T$ and $\mathcal{F}_B$ respectively, can be expressed as  
\begin{subequations}\label{e:fluxes_ih1}
\begin{align}
\label{e:top_flux_ih1}
\mathcal{F}_T &:= -\partial_z\mean{T}_h\vert_{z=1} = \frac12 + \mean{wT} , \\
\label{e:bottom_flux_ih1}
\mathcal{F}_B &:= \phantom{-}\partial_z\mean{T}_h\vert_{z=0} = \frac12 - \mean{wT} ,
\end{align}
\end{subequations}
where $\volav{\cdot}_h$ denotes a spatial average over the horizontal directions alone.
Thus, our upper bound on \wT\ immediately implies bounds on $\mathcal{F}_T$ and $\mathcal{F}_B$.

\begin{corollary}
For all sufficiently large $R$,
\begin{equation}\label{e:main-result-ih1}
    \mathcal{F}_T \leq 1 -  c\, R^{-2}
    \qquad \textrm{and} \qquad \mathcal{F}_B \geq c\, R^{-2} .
\end{equation}
\end{corollary}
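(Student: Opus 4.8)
The plan is to read the corollary off \cref{thm:ih1-main-result} using the flux identities \eqref{e:fluxes_ih1}, so the only genuinely new work is bookkeeping. Taking \eqref{e:top_flux_ih1}--\eqref{e:bottom_flux_ih1} as given and substituting the bound \eqref{e:thm1_b},
\begin{align}
\mathcal{F}_T &= \tfrac12 + \mean{wT} \;\le\; \tfrac12 + \Big(\tfrac12 - c\,R^{-2}\Big) \;=\; 1 - c\,R^{-2}, \notag\\
\mathcal{F}_B &= \tfrac12 - \mean{wT} \;\ge\; \tfrac12 - \Big(\tfrac12 - c\,R^{-2}\Big) \;=\; c\,R^{-2}, \notag
\end{align}
for exactly the same range of $R$ and the same constant $c$ as in \cref{thm:ih1-main-result} (in particular $c=216$ for $R>1892$ by \cref{rem:prefactors-ih1}). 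Note that the second line is consistent with the first through the identity $\mathcal{F}_T+\mathcal{F}_B=1$, which merely expresses that all internally generated heat eventually leaves the layer through the two horizontal boundaries.

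Since \eqref{e:fluxes_ih1} is quoted without proof in the paragraph preceding the corollary, I would also record its derivation. The idea is to take the infinite-time and horizontal average of the temperature equation \eqref{nondim_energy}, both directly and after multiplying by $z$. For the plain average, incompressibility rewrites the advection term as $\nabla\cdot(\vu T)$, whose volume average collapses to boundary fluxes that vanish by $\vu\vert_{z\in\{0,1\}}=\boldsymbol 0$; the diffusion term integrates to $\partial_z\smallvolav{T}_h\vert_{z=1}-\partial_z\smallvolav{T}_h\vert_{z=0}$; and the unit source contributes $1$. Provided $\volav{T}$ stays bounded in time, the time average of $\partial_t\volav{T}$ vanishes, giving $\mathcal{F}_T+\mathcal{F}_B=1$. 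Repeating with the weight $z$, the advection term becomes $\nabla\cdot(z\vu T)-wT$, again with no boundary contribution; two integrations by parts together with $T\vert_{z\in\{0,1\}}=0$ give $\int_0^1 z\,\partial_{zz}T\,\mathrm{d}z=\partial_z\smallvolav{T}_h\vert_{z=1}$; and $\smallvolav{z}=\tfrac12$. Assuming $\volav{zT}$ is bounded so that the time average of $\partial_t\volav{zT}$ vanishes, one obtains $-\mean{wT}=\partial_z\mean{T}_h\vert_{z=1}+\tfrac12$, i.e.\ precisely \eqref{e:top_flux_ih1}; then \eqref{e:bottom_flux_ih1} follows from $\mathcal{F}_T+\mathcal{F}_B=1$.

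There is really no obstacle here: all the difficulty is already contained in \cref{thm:ih1-main-result}. The only point deserving minor care is the vanishing of the time-averaged $\partial_t$ terms, which must be treated with the $\limsup$ in the definition of $\mean{\cdot}$ rather than an ordinary limit — this is standard once one knows that the temperature, and hence $\volav{T}$ and $\volav{zT}$, remains uniformly bounded under the IH1 boundary conditions, so I would invoke the relevant a priori estimate rather than reprove it.
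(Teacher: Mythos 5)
Your proposal is correct and matches the paper's (implicit) argument exactly: the corollary is read off by substituting the bound of \cref{thm:ih1-main-result} into the flux identities \eqref{e:fluxes_ih1}, with the same constant and range of $R$. Your supplementary derivation of \eqref{e:fluxes_ih1} via the plain and $z$-weighted averages of \eqref{nondim_energy} is also the computation the paper alludes to with ``$\mean{z\cdot\eqref{nondim_energy}}$'', so there is nothing further to add.
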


The results discussed so far apply to the IH1 configuration, where the two horizontal domain boundaries are isothermal. Similar results hold also for the IH3 case, where the bottom boundary is perfectly insulating. In this case, however, the deviation of \wT\ from $\frac12$ may decay as fast as $\Ra^{-4}$.

\begin{theorem}[Insulating bottom \& isothermal top, $\Pr=\infty$] Suppose that $\boldsymbol{u}=(u,v,w)$ and $T$ solve \eqref{e:governing-equations} subject to the boundary conditions \eqref{bc_T_IH3}. There exists a constant $c>0$ such that, for all sufficiently large $\Ra>0$,
\begin{equation}\label{e:thm2_b}
    \mean{wT} \leq \frac12 - c \, \Ra^{-4}.
\end{equation}
\label{thm:ih3}
\end{theorem}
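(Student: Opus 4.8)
The plan is to follow the proof of \cref{thm:ih1-main-result} and to change only the ingredients that are sensitive to the boundary conditions. First I would record the exact budgets obtained by taking volume and infinite-time averages of \eqref{e:governing-equations}. Multiplying \eqref{nondim_energy} by $z$ and using $\partial_z T\vert_{z=0}=0$, $T\vert_{z=1}=0$ and $\vu\vert_{z\in\{0,1\}}=\boldsymbol{0}$ yields
\begin{equation*}
\mean{wT} = \tfrac12 - \horav{T}\big\vert_{z=0},
\end{equation*}
so \eqref{e:thm2_b} is equivalent to the lower bound $\horav{T}\vert_{z=0}\ge c\,\Ra^{-4}$ on the mean temperature of the insulating plate. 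Multiplying \eqref{nondim_energy} by $T$ gives $\overline{\volav{|\nabla T|^2}}=\overline{\volav{T}}$, and since $T$ vanishes at $z=1$ a Poincar\'e inequality forces $\overline{\volav{T}}$ and $\overline{\volav{T^2}}$ to be $O(1)$ uniformly in $\Ra$. Dotting \eqref{nondim_momentum} with $\vu$ gives $\overline{\volav{|\nabla\vu|^2}}=\Ra\,\mean{wT}\le\tfrac12\Ra$, while taking the double curl of \eqref{nondim_momentum} produces the slaving relation $\Delta^2 w=-\Ra\,\Delta_h T$ with $w=\partial_z w=0$ at $z\in\{0,1\}$; together with the preceding estimates this controls $\overline{\volav{|\nabla^2 w|^2}}=\overline{\volav{|\Delta w|^2}}$ by a fixed power of $\Ra$.

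Second, I would invoke the minimum principle: because the heating is positive and the thermal data are compatible with nonnegativity -- the insulating condition at $z=0$ being treated by even reflection across that plane -- one has $T\ge 0$ in the long-time-averaged sense. Introducing a background decomposition $T=\tau(z)+\theta$ with a profile satisfying the thermal boundary conditions $\tau'(0)=0$, $\tau(1)=0$, the fluctuation then obeys the sign constraint $\theta\ge-\tau$ together with $\partial_z\theta\vert_{z=0}=0$ and $\theta\vert_{z=1}=0$. For $\tau$ I would take a multiple of the conductive profile $\tfrac12(1-z^2)$ outside a layer of width $\delta$ adjoining the isothermal plate $z=1$, steepened inside that layer; near $z=0$ no modification is needed since $\tau'(z)=-z$ is already small there.

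Third comes the background-method step proper. Testing the evolution equation for $\theta$ against $\theta$ and time-averaging gives a balance of the form $\overline{\volav{|\nabla\theta|^2}}+\overline{\volav{\tau' w\theta}}=\overline{\volav{\tau''\theta}}+\overline{\volav{\theta}}$; since $\horav{w}\equiv 0$ one also has $\mean{wT}=\overline{\volav{w\theta}}$, and combining these with the identities above reduces \eqref{e:thm2_b} to a spectral constraint: the nonnegativity, up to an $O(\Ra^{-4})$ error, of a quadratic functional $\mathcal Q$ of $\theta$ (with $w$ slaved to $\theta$ via $\Delta^2 w=-\Ra\,\Delta_h\theta$, and, after a horizontal Fourier transform, a one-parameter family of one-dimensional forms). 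The delicate term is the indefinite contribution $\int\tau'(z)\,\overline{\horav{w\theta}}(z)\,\mathrm dz$, concentrated in the layer near $z=1$. On it I would (i) use $\theta\ge-\tau$ to replace $\theta$ by the nonnegative quantity $\tau+\theta$ and discard a favourable term, (ii) use the slaving relation to re-express $w$ through $\theta$, and (iii) apply Hardy--Rellich inequalities, exploiting that $w$ and $\partial_z w$ vanish at the plates, to bound this boundary-layer integral by $\overline{\volav{|\nabla^2 w|^2}}$ times a power of $\delta$. Balancing the $\delta$-dependent contributions and choosing $\delta$ optimally then produces the $\Ra^{-4}$ rate (and, with an explicit choice of $\tau$, the explicit constant $c\approx0.0107$). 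The two extra powers of $\Ra$ relative to \eqref{e:thm1_b} come from the insulating condition: near $z=0$ one controls $\theta$ only through the Neumann datum $\partial_z\theta=0$ rather than through $\theta=0$, and $\horav{T}\vert_{z=0}$ -- the quantity being bounded -- itself governs the size of $\theta$ in that region, so the estimate closes through a self-consistent loop rather than directly.

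The main obstacle is exactly this last step -- establishing the sign of $\mathcal Q$ up to the $O(\Ra^{-4})$ error -- which requires orchestrating the minimum principle, the infinite-Prandtl slaving and the Hardy--Rellich inequalities simultaneously and getting the powers of $\delta$ and $\Ra$ to combine into precisely $-4$.
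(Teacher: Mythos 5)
Your skeleton (energy identities, minimum principle, infinite-$\Pr$ slaving, Hardy--Rellich estimates, boundary-layer optimisation) matches the paper's strategy in spirit, and your reduction $\mean{wT}=\tfrac12-\timeav{\horav{T}}\vert_{z=0}$ is correct. However, there is a genuine gap at the step you yourself flag as the main obstacle, and it occurs precisely where you claim no care is needed: the insulating boundary $z=0$. Your background is a multiple of the conductive profile, decreasing throughout the bulk and modified only near $z=1$, and you assert that ``near $z=0$ no modification is needed since $\tau'(z)=-z$ is already small there.'' In the paper the bottom is the crux of the IH3 case: because $T$ does not vanish at $z=0$, the estimate $\volav{T^2}_h\leq z\volav{|\nabla T|^2}$ that underpins the IH1 treatment of the lower layer is unavailable, and the Hardy--Rellich inequalities of \cref{lemma:doering_estimate} alone cannot control the sign-indefinite term there. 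The paper replaces this by a different ingredient, the Whitehead--Wittenberg integral estimate (\cref{lemma_3_main}), which is only applicable because the bulk part of $\bfield$ is chosen to grow like $z^{1-\alpha}$ with $\alpha\in(\tfrac12,1)$, i.e.\ the profile must be \emph{increasing} (stably stratified) in the bulk so that the slaved positivity can be harvested to pay for the boundary layers. A profile that decreases everywhere, like your conductive-profile multiple, has the dangerous sign of $\volav{\bfield' wT}$ throughout the bulk and offers no such positive term; your proposal contains no substitute for \cref{lemma_3_main}.

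The second, related gap is that nothing in your sketch actually generates the quantitative gap $c\Ra^{-4}$. The minimum principle by itself only gives $\timeav{\horav{T}}\vert_{z=0}\geq 0$; the extra algebraic amount in the paper comes from a specific dual mechanism: with the Neumann condition at $z=0$ the Lagrange multiplier must satisfy $\lm\geq-1$ (cf.\ \eqref{e:dual-form-ih3} and \cref{th:ih3-representation}), which caps the admissible slope of $\bfield$ in the bottom layer at $O(1)$, so $\bfield(0)\sim\delta$ and the negative term is $\volav{\bfield}\sim\delta^2$ rather than $\sim\delta$ as in IH1; the spectral constraint at infinite $\Pr$ then forces $\delta\lesssim\Ra^{-2}$ (\cref{lemma:spectral_ih3}), whence $\tfrac12-O(\delta^2)=\tfrac12-O(\Ra^{-4})$. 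Your explanation of the exponent via a ``self-consistent loop'' near $z=0$ is not this mechanism and is not an argument, and ``balancing the $\delta$-dependent contributions'' is asserted rather than carried out. Finally, your step (iii) conflates a priori attractor bounds such as $\timeav{\volav{|\Delta w|^2}}\lesssim\Ra^{p}$ with the pointwise-in-time quadratic-form estimates the spectral constraint requires: the constraint must hold for the quadratic form evaluated on arbitrary admissible fields (here, all $T\in\Tspace_+$ with $w$ slaved to $T$), so magnitude bounds on solutions cannot be inserted there, and the Hardy--Rellich inequalities do not bound boundary-layer integrals by $\volav{|\nabla^2 w|^2}$ in the way you describe.
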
%
\begin{remark}\label{rem:prefactors-ih3}
It is shown in \cref{ss:bm-IH3} that \eqref{e:thm2_b} holds with $c \approx 0.0107$ for all $R > 2961$. 
\end{remark}
Since the IH3 boundary conditions imply that the conductive heat flux is positive, the effects of convection on the enhancement of heat transport in the system can be described using a Nusselt number, $\Nu$. This is defined as the ratio of the mean total heat flux to the mean conductive heat flux, and can be expressed in terms of \wT\ as
\begin{equation}
    \Nu = \left( 1-2\mean{wT} \right)^{-1}.
    \label{e:Nu_ih3}
\end{equation}
Thus, our upper bound on \wT\ can be transformed into an upper bound on \Nu.
\begin{corollary}
For all sufficiently large \Ra, $\Nu \leq c R^4$.
\end{corollary}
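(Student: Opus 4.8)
The plan is to derive this bound as an immediate consequence of \cref{thm:ih3} together with the identity \eqref{e:Nu_ih3} relating $\Nu$ to $\mean{wT}$. By \cref{thm:ih3}, there is a constant $c_0>0$ such that $\mean{wT}\leq \tfrac12 - c_0\,\Ra^{-4}$ for all sufficiently large $\Ra$. First I would substitute this into $\Nu = (1-2\mean{wT})^{-1}$: since $1 - 2\mean{wT} \geq 1 - 2(\tfrac12 - c_0\,\Ra^{-4}) = 2c_0\,\Ra^{-4} > 0$, the denominator is strictly positive for large $\Ra$, so $\Nu$ is well-defined and the map $x\mapsto (1-2x)^{-1}$ is increasing on the relevant range. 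Hence $\Nu \leq (2c_0\,\Ra^{-4})^{-1} = \tfrac{1}{2c_0}\,\Ra^4$, and setting $c := \tfrac{1}{2c_0}$ gives the claim.

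There is essentially no obstacle here: the entire content is packaged in \cref{thm:ih3}, and this corollary is merely the translation of that bound into the language of the Nusselt number. The only points requiring a word of care are (i) that the hypothesis ``sufficiently large $\Ra$'' must be at least as strong as that of \cref{thm:ih3} (so that the bound on $\mean{wT}$ applies and the denominator $1-2\mean{wT}$ stays positive), and (ii) that, using the explicit constant from \cref{rem:prefactors-ih3} ($c_0 \approx 0.0107$ for $R>2961$), one could if desired record an explicit value $c = 1/(2c_0) \approx 47$ valid for $R > 2961$. I would state the proof in one or two lines accordingly.
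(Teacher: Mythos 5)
Your proposal is correct and is exactly the argument the paper intends: the corollary is an immediate consequence of \cref{thm:ih3} combined with the identity \eqref{e:Nu_ih3}, noting that $1-2\mean{wT}\geq 2c_0\Ra^{-4}>0$ so the reciprocal gives $\Nu\leq \Ra^4/(2c_0)$. Your remark on tracking the explicit constant from \cref{rem:prefactors-ih3} is a fine optional addition; no gaps.
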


The proofs of \cref{thm:ih1-main-result,thm:ih3} rely on two key ingredients. The first is a variational problem giving an upper bound on \wT. 
This variational problem is derived by enforcing a minimum principle for the fluid's temperature within the classical ``background method''~\cite{doering1996variational,doering1994variational,constantin1995variational}, which for simplicity we formulate using the language of a more general framework for bounding infinite-time averages~\cite{chernyshenko2014polynomial,Chernyshenko2022,tobasco2018optimal,rosa2020optimal} (see~\cite{Chernyshenko2022,Fantuzzi2022} for further discussion of the link between the two approaches). Using this minimum principle is essential to obtain bounds on \wT\ that asymptote to $\frac12$ from below. This has already been shown for the IH1 configuration at finite \Pr: for this case, without the miniumum principle one obtains only $\wT \lesssim R^{1/5}$~\cite{arslan2021IH1}, while with it one can prove that $\wT \leq \frac12 - O(R^{1/5} \exp{(- R^{3/5})})$ uniformly in \Pr~\cite{kumar2021ihc}. A similar (but not identical) exponentially-varying bound of $\wT \leq \frac12 -  O(R^{-1/5} \exp{(- R^{3/5}))}$ uniformly in \Pr\ was also obtained for the IH3 configuration~\cite{kumar2021ihc}.

The second key ingredient in our proofs are estimates of Hardy--Rellich type, obtained by observing that the reduced momentum equation \eqref{nondim_momentum} determines the vertical velocity field as a function of the temperature field. Specifically, taking the vertical component of the double curl of~\eqref{nondim_momentum} gives
\begin{equation}
\label{e:w_and_T_eq}
\Delta^2 w = -R \, \Delta_h T,
\end{equation}
where $\Delta_h := \partial^2_x + \partial^2_y $, is the horizontal Laplacian. Using the no-slip boundary conditions with the incompressibility condition ~\eqref{continuit}, the vertical velocity $w$ satisfies
\begin{equation}\label{e:w-bcs}
    w\vert_{z=0} = \partial_z w\vert_{z=0} =
    w\vert_{z=1} = \partial_z w\vert_{z=1} = 0.
\end{equation}
Equation \eqref{e:w_and_T_eq} was exploited in Rayleigh-B\'enard convection to improve the scaling of upper bounds on \Nu~\cite{doering2006bounds}. This was achieved by using \eqref{e:w_and_T_eq} to derive inequalities of the Hardy--Rellich type (see \cref{lemma:doering_estimate} below) that help the construction of a background field with a logarithmically-varying stable stratification in the bulk~\cite{doering2006bounds,Whitehead2014mixed}.
Here, we use the same inequalities to construct (different) background fields suited to IH convection, which will enable us to bound \wT\ in the infinite \Pr\ limit.

\section{Bounds for the IH1 configuration}\label{sec:IH1}

We first consider the IH1 configuration, where the top and bottom plate are held at zero temperature. In \cref{ss:bm-IH1}, we show that $\smash{\mean{wT}}$ can be bounded from above by constructing suitably constrained functions of the vertical coordinate $z$. \Cref{ss:ansatz-IH1} describes parametric ans\"atze for such functions, while \cref{ss:lemmas-IH1} establishes auxiliary results that simplify the verification of the constraints and the evaluation of the bound. We then prove~\cref{thm:ih1-main-result} in \cref{ss:proof-IH1} by prescribing \Ra-dependent values of the free parameters in our ans\"atze.

To simplify the notation, we introduce two sets of temperature fields that encode the thermal boundary conditions and the pointwise nonnegativity constraint implied by the minimum principle:
\begin{subequations}
	\begin{gather}
	\label{e:T-space-ih1}
	\Tspace_{n} := \{T \in H^1(\Omega):\; T \text{ is horizontally periodic \&  \eqref{bc_T_IH1}}  \veebar  \eqref{bc_T_IH3} \},\\
	\Tspace_+ := \{ T \in \Tspace_n: T(\vx) \geq 0 \text{ a.e. } \vx \in \Omega \},
	\label{e:T-positive-cone-ih1}
	\end{gather}
\end{subequations}
where $n =$ 1 or 3, depending on the boundary conditions of IH1 and IH3 respectively. In \cref{sec:IH1}, $T$ belongs to $\Tspace_{1}$.

\subsection{Bounding framework}\label{ss:bm-IH1}

To bound $\smash{\mean{wT}}$ we employ the auxiliary function method~\cite{chernyshenko2014polynomial,Fantuzzi2022}. The method relies on the observation that the time derivative of any bounded functional $\mathcal{V}\{T(t)\}$ along solutions of the Boussinesq equations~\eqref{e:governing-equations} averages to zero over infinite time, so 
\begin{equation}\label{e:af-method}
\mean{wT} = \timeav{\volav{wT} + \tfrac{\rm d}{ {\rm d} t}\mathcal{V}\{T(t)\}  }.
\end{equation}
If $\mathcal{V}$ is chosen such that the quantity being averaged on the right-hand side is bounded above pointwise in time, then this pointwise bound is also an upper bound on $\smash{\mean{wT}}$.

Following analysis at finite Prandtl number~\cite{arslan2021IH1}, we restrict our attention to quadratic functionals taking the form
\begin{equation}\label{e:V-IH1}
\mathcal{V}\{T\} = \volav{ \tfrac{\bp}{2} \abs{T}^2 - [\bfield(z)+z-1] T },
\end{equation}
which are parametrized by a positive constant $\bp \in \bR_+$ and a piecewise-differentiable function $\bfield:[0,1] \to \bR$ with square-integrable derivative. We require $\bfield$ to satisfy
\begin{equation}\label{e:psi-bc-ih1}
\bfield(0)=1, \qquad \bfield(1)=0,
\end{equation}
so the coefficient multiplying $T$ in~\eqref{e:V-IH1} vanishes at $z=0$ and $z=1$. This choice enables us to integrate by parts without picking up boundary terms when calculating $\tfrac{\rm d}{ {\rm d} t}\mathcal{V}\{T(t)\}$. To ensure that the resulting expression $\volav{wT} + \tfrac{\rm d}{ {\rm d} t}\mathcal{V}\{T\}$ can be bounded from above poinwise in time, we also require that the pair $(\bp,\bfield)$ satisfies a condition called the \emph{spectral constraint}.
\begin{definition}[Spectral constraint]\label{def:sc}
	The pair $(\bp,\bfield)$ is said to satisfy the \emph{spectral constraint} if
	\begin{equation}\label{e:sc}
	\volav{\bp \abs{\nabla T}^2 + \bfield' wT } \geq 0 \qquad T \in \Tspace_n,
	\end{equation}
	where $w = -R \Delta^{-2} \Delta_h T$ solves~\eqref{e:w_and_T_eq} with the boundary conditions~\cref{e:w-bcs}.
\end{definition}

If the spectral constraint is satisfied, then it is possible to bound $\smash{\mean{wT}}$ from above in terms of $\bfield$, $\bp$, and another suitably constrained function $\lm:[0,1]\to \bR$.

\begin{proposition}[Bounding framework, IH1]\label{th:ih1-bounding-problem}
	Suppose that the pair $(\bp,\bfield)$ satisfies the spectral constraint and the boundary conditions in~\eqref{e:psi-bc-ih1}. Further, let $\lm \in L^2(0,1)$ be a nondecreasing function such that $\volav{\lm} = -1$.
	Then,
	\begin{equation*}
	\mean{wT} \leq 
	\tfrac12 + 
	\volav{ \tfrac{1}{4\bp} \abs{ \bfield' - \lm - \bp \left(z - \tfrac{1}{2}\right) }^2 - \bfield } =: U(\bfield,\lm,\bp)
	\end{equation*}
\end{proposition}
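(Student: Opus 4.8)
The plan is to substitute the quadratic functional \eqref{e:V-IH1} into the auxiliary–function identity \eqref{e:af-method} and integrate by parts repeatedly until the integrand depends only on the horizontal mean $\horav{T}(z,t)$, and then to complete a square pointwise in $z$. Concretely, I would differentiate $\mathcal{V}\{T(t)\}$ and substitute $\partial_t T = \Delta T + 1 - \vu\cdot\nabla T$ from \eqref{nondim_energy}; integrating by parts in $z$ and discarding boundary terms by virtue of the conditions \eqref{bc_T_IH1} on $T$ and \eqref{e:psi-bc-ih1} on $\bfield$ (the advective and cubic contributions vanishing by incompressibility and no-slip), one should obtain the exact identity
\begin{equation*}
\volav{wT} + \tfrac{\rm d}{{\rm d}t}\mathcal{V}\{T\}
= -\volav{\bp\abs{\nabla T}^2 + \bfield' wT}
 + \volav{(\bfield'+1)\partial_z T} + \bp\volav{T} + \tfrac12 - \volav{\bfield},
\end{equation*}
whose first bracket is exactly the left-hand side of the spectral constraint \eqref{e:sc}.

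The key point is \emph{not} to discard the whole of that bracket. Since $w$ solves \eqref{e:w_and_T_eq} and has zero horizontal mean, $w$ is unchanged when $T$ is replaced by its horizontal fluctuation $T-\horav{T}$, which still belongs to $\Tspace_1$; moreover $\volav{\bfield' wT}=\volav{\bfield' w(T-\horav{T})}$ and $\volav{\abs{\nabla T}^2}=\volav{\abs{\partial_z\horav{T}}^2}+\volav{\abs{\nabla(T-\horav{T})}^2}$. Applying \eqref{e:sc} to $T-\horav{T}$ then gives $-\volav{\bp\abs{\nabla T}^2 + \bfield' wT}\le -\bp\volav{\abs{\partial_z\horav{T}}^2}$, so one full copy of the mean-profile dissipation survives. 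The remaining terms already involve only $\horav{T}$: since $\horav{T}$ vanishes at $z=0,1$, one has $\volav{(\bfield'+1)\partial_z T}=\volav{\bfield'\partial_z T}$ and $\bp\volav{T}=-\bp\volav{(z-\tfrac12)\partial_z T}$.

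Next I would bring in the minimum principle. Since it forces $T\ge 0$, the mean $\horav{T}$ is nonnegative and vanishes at both endpoints, so for any nondecreasing $\lm$ an integration by parts yields $\volav{\lm\,\partial_z T}=-\volav{\lm'\,T}\le 0$ (the endpoint contributions vanishing because $\horav{T}\in H^1$ has zero trace, which dominates the possible endpoint growth of a monotone $L^2$ function). Adding the nonnegative quantity $-\volav{\lm\,\partial_z T}$ and collecting terms, the right-hand side of the identity is bounded above by
\begin{equation*}
-\bp\volav{\abs{\partial_z\horav{T}}^2}
 + \volav{\big(\bfield'-\lm-\bp(z-\tfrac12)\big)\partial_z T} + \tfrac12 - \volav{\bfield}.
\end{equation*}
Completing the square pointwise in $z$, i.e.\ using $-\bp\,y^2 + h(z)\,y\le h(z)^2/(4\bp)$ with $y=\partial_z\horav{T}(z)$ and $h=\bfield'-\lm-\bp(z-\tfrac12)$, and integrating in $z$, bounds this above by $U(\bfield,\lm,\bp)$ pointwise in time. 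Since $\mathcal{V}\{T(t)\}$ remains bounded along the infinite-Prandtl dynamics, time-averaging \eqref{e:af-method} then gives $\mean{wT}\le U(\bfield,\lm,\bp)$. (The normalisation $\volav{\lm}=-1$ is not needed for this inequality; it merely makes $h$ have zero mean and is used later, when the free functions are optimised.)

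The step I expect to be the main obstacle is the careful use of the spectral constraint in the second paragraph: retaining the surviving term $-\bp\volav{\abs{\partial_z\horav{T}}^2}$ is indispensable, because otherwise the leftover $\volav{h\,\partial_z T}$ is not bounded above over admissible temperature fields and the completion of the square fails. The only other point requiring attention is the justification that the boundary terms genuinely vanish in the minimum-principle integration by parts for a merely $L^2$, monotone $\lm$.
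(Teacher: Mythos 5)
Your proposal is correct, but it reaches the bound by a genuinely different route from the paper. After the same exact identity for $\volav{wT}+\tfrac{\rm d}{{\rm d}t}\mathcal{V}$ (your form is equivalent to the paper's, since $\volav{\partial_z T}=0$ and $\bp\volav{T}=-\bp\volav{(z-\tfrac12)\partial_z T}$ for IH1), the paper evaluates the supremum over $\Tspace_+$ via the convex-duality machinery of \cref{ss:duality-ih1} (Legendre transforms, Fenchel--Rockafellar, translation invariance, and the representation \cref{th:ih1-representation} of positive translation-invariant functionals), which is where the constrained multiplier $\lm$ with $\volav{\lm}=-1$ comes from. You instead give a direct primal estimate: apply the spectral constraint only to the fluctuation $T-\horav{T}$ (legitimate, since $T-\horav{T}\in\Tspace_1$ and $w$ is unchanged because $\Delta_h\horav{T}=0$ and $\horav{w}=0$), retain the mean-profile dissipation $-\bp\volav{|\partial_z\horav{T}|^2}$, add the nonnegative quantity $-\volav{\lm\,\partial_z T}=\volav{\lm' T}\ge 0$ coming from the minimum principle and monotonicity of $\lm$, and complete the square in $\partial_z\horav{T}$. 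This is essentially the computation the paper performs \emph{inside} the appendix when evaluating $\Phi^*$ (the split $T=\eta+\xi$ and the Euler--Lagrange step), reorganised so that only the one-sided inequality is produced. What each buys: your argument is shorter and avoids all functional-analytic apparatus, and it even shows (correctly) that the normalisation $\volav{\lm}=-1$ is not needed for the inequality itself; the paper's duality argument proves more, namely that \eqref{e:dual-form-ih1} is \emph{equivalent} to the primal bound \eqref{e:primal-form-ih1}, so that no sharpness is lost by passing to $\lm$ --- a fact relevant to the discussion of the best bound provable within the framework, but not required by the Proposition as stated. Two small points to tidy up: the parenthetical claim that the ``advective'' contributions vanish is loose --- the advective term generated by the linear part of $\mathcal{V}$ does not vanish but produces exactly the $-\volav{\bfield' wT}-\volav{wT}$ terms in your identity, which you have in fact accounted for; and the endpoint terms in $\volav{\lm\,\partial_z T}=-\volav{\lm' T}$ do vanish as you suspect, because a nondecreasing $\lm\in L^2(0,1)$ satisfies $\lm(z)=o(z^{-1/2})$ near $z=0$ (similarly at $z=1$) while $\horav{T}(z)=o(z^{1/2})$ by the zero trace and $H^1$ regularity; finally, as in the paper, the pointwise-in-time bound should be understood on the attractor (or after transients), which is how the minimum principle enters.
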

\begin{proof}
	A standard calculation using integration by parts, the incompressibility condition~\eqref{continuit}, and the boundary conditions on $\vu$ and $T$ yields
	\begin{align}
	\mean{wT} 
	&= \timeav{ \volav{wT} + \tfrac{\rm d}{ {\rm d} t}\mathcal{V}\{T(t)\}  }
	\nonumber\\[0.5ex]
	&= \tfrac12 +\mean{
		-\bp \abs{\nabla T}^2 - \bfield' wT 
		+ (\bfield' - \bp z) \partial_z T
		- \bfield
	}.
	\end{align}
	The infinite-time average on the right-hand side is bounded above by the largest value of the argument over the global attractor of the infinite-Prandtl-number Boussinesq equations~\eqref{e:governing-equations}. The minimum principle for the temperature field implies that the global attractor is contained in the set $\Tspace_+$ defined in~\eqref{e:T-positive-cone-ih1}. Consequently, we can estimate
	\begin{equation}\label{e:primal-form-ih1}
	\mean{wT}  \leq \tfrac12 + 
	\sup_{\substack{T \in \Tspace_+ \\ w = -R\Delta^{-2}\Delta_h T}}
	\volav{
		-\bp \abs{\nabla T}^2 - \bfield' wT 
		+ (\bfield' - \bp z) \partial_z T
		- \bfield
	}.
	\end{equation}
	This upper bound is finite if and only if unless the pair $(\bp,\bfield)$ satisfies the spectral constraint (cf. \cref{def:sc}), in which case the supremum over $T$ can be evaluated using a technical convex duality argument detailed in \cref{ss:duality-ih1}. The result of the argument is that the bound in~\cref{e:primal-form-ih1} is equivalent to
	\begin{equation}\label{e:dual-form-ih1}
	\mean{wT}  \leq \tfrac12 + 
	\inf_{ \substack{\lm \in L^2(0,1) \\ \lm \text{ \rm nondecreasing} \\ \langle \lm \rangle = -1}}
	\volav{ \tfrac{1}{4\bp} \abs{ \bfield' - \lm - \bp \left(z - \tfrac{1}{2}\right) }^2 - \bfield }.
	\end{equation}
	This inequality clearly implies the upper bound on $\smash{\mean{wT}}$ stated in the proposition, which is therefore proven.
\end{proof}

\begin{remark}
The function $\lm$ that arises when deriving~\cref{e:dual-form-ih1} from~\cref{e:primal-form-ih1} can be viewed as a Lagrange multiplier enforcing the pointwise nonnegativity of temperature fields in the set $\Tspace_+$. Further details regarding this interpretation (with slightly different notation) are given in~\cite[\S4.4]{arslan2021IH1}.
\end{remark}

\begin{remark} 
	The best upper bound on $\smash{\mean{wT}}$ provable with our approach is found upon minimizing the expression $U(\bfield,\lm,\bp)$ over all choices of $\bfield$, $\lm$ and $\bp$ that satisfy the conditions of \cref{th:ih1-bounding-problem}. This is hard to do analytically, but can be done computationally using a variety of numerical schemes (see~\cite{Fantuzzi2022} and references therein). We leave such computations to future work and focus on proving \cref{thm:ih1-main-result} by constructing suboptimal $\bfield$, $\lm$ and $\bp$ analytically.
\end{remark}

\subsection{Ans\"atze}\label{ss:ansatz-IH1}

To prove the upper bound on \wT, we seek $\bp>0$, $\bfield(z)$, and $\lm(z)$ that satisfy the conditions of \cref{th:ih1-bounding-problem} and make the quantity $U(\bp,\bfield,\lm)$ as small as possible. To simplify this task, we restrict $\bfield$ to take the form
\begin{equation}
\label{e:psi}
\bfield(z):= 
\begin{dcases}
1-\frac{z}{\delta} ,&  0\leq z \leq \delta\\
A\, \ln\!{\left(\frac{z (1-\delta)}{\delta (1-z)} \right)}, & \delta \leq z \leq 1-\varepsilon \\
A\, \ln\!\left(\frac{(1-\varepsilon)(1-\delta)}{\varepsilon\delta}\right) \, \left(\frac{1-z}{\varepsilon}\right), & 1-\varepsilon \leq z \leq 1
\end{dcases}
\end{equation}
and $\lm$ to be given by
\begin{equation}
\label{e:q_profile}
\lm(z):= 
\begin{dcases}
-\frac{1}{\delta} ,&  0\leq z \leq \delta,\\
0, & \delta \leq z \leq 1.
\end{dcases}
\end{equation}
These piecewise-defined functions, sketched in \cref{fig:psi_log_inf_pr}, are fully specified by the bottom boundary layer width $\delta \in (0,\frac12)$, the top boundary layer width $\varepsilon \in (0,\frac12)$, and the parameter $A > 0$ that determines the amplitude of $\bfield$ in the bulk of the layer.

\begin{figure}
	\centering
	\begin{tikzpicture}[every node/.style={scale=0.9}, scale=0.75]
	\draw[->,black,thick] (-12.25,0.5) -- (-5,0.5) node [anchor=north] {$z$};
	\draw[->,black,thick] (-12,0.4) -- (-12,3.8) node [anchor=south] {$\bfield(z)$};
	\draw[matlabblue,very thick] (-12,3.3) -- (-11.2,0.5) ;
	\draw[matlabblue,very thick] (-6.3,2) -- (-5.5,0.5);
	\draw [matlabblue,very thick] plot [smooth, tension = 1] coordinates {(-11.2,0.5) (-9.75,1.2) (-7.25,1.5) (-6.3,2)};
	\node[anchor=north] at (-5.5,0.5) {$1$};
	\node[anchor=east] at (-12,3.2) {$1$};
	\node[anchor=north] at (-11.2,0.5) {$\delta$};
	\draw[dashed] (-6.3,0.5) node[anchor=north] {$1-\varepsilon$} -- (-6.3,2);
	\draw[->,black,thick] (-3.25,3.4) -- (4,3.4) node [anchor=west] {$z$};
	\draw[->,black,thick] (-2.5,0.4) -- (-2.5,3.8) node [anchor=south] {$\lm(z)$};
	\draw[colorbar12,very thick] (-2.5,0.5) -- (-1.5,0.5);
	\draw[colorbar12,very thick] (-1.5,3.4) -- (3.6,3.4) ;
	\draw[colorbar12,dotted] (-1.5,3.4) -- (-1.5,0.5) ;
	\node[anchor=south] at (3.6,3.4) {$1$};
	\node[anchor=south] at (-1.5,3.4) {$\delta$};
	\node[anchor=east] at (-2.5,0.5) {$-1/\delta$};
\end{tikzpicture}
	\caption{Sketches of the functions $\bfield(z)$ in~\eqref{e:psi} and $\lm(z)$ in~\eqref{e:q_profile} used to prove \cref{thm:ih1-main-result}.}
	\label{fig:psi_log_inf_pr}
\end{figure}

We also fix 
\begin{equation}\label{e:b-choice-ih1}
\bp := {\volav{|\bfield'- \lm|^2}^{\frac12}}{\volav{|z - \tfrac12|^2}^{-\frac12}} = 2\sqrt{3} \volav{|\bfield'- \lm|^2}^{\frac12}.
\end{equation}
This choice is motivated by the desire to minimize the right-hand side of the inequality
\begin{equation}
\frac{1}{4\bp} \volav{ |  \bfield' - \lm - \bp \left(z - \tfrac{1}{2}\right)  |^2 } \leq \frac{\bp}{2} \volav{|z - \tfrac{1}{2}|^2 } + \frac{1}{2\bp} \volav{ |\bfield'- \lm|^2 },
\label{e:inequality_2norm_ih1}
\end{equation}
which is used later in \cref{th:ih1-U-bounds} by estimating from above the value of the bound $U(\bp,\bfield,\lm)$ for our choices of  $\bfield$ and $\lm$.

For any choice of the parameters $\delta$, $\varepsilon$, and $A$, the function $\bfield$ satisfies the boundary conditions in~\eqref{e:psi-bc-ih1}, while $\lm$ is nondecreasing and satisfies the normalization condition $\volav{\lm} = -1$. To establish~\cref{thm:ih1-main-result} using \cref{th:ih1-bounding-problem}, we only need to specify parameter values such that $U(\bp,\bfield,\lm) \leq \frac12 - O(\Ra^{-2})$ while ensuring that the pair $(\bp,\bfield)$ satisfies the spectral constraint. For the purposes of simplifying the algebra in what follows, we shall fix
\begin{equation}\label{e:A-choice}
A(\delta) = \frac{3\sqrt{15}}{20} \delta^{\frac32}
\end{equation}
from the outset. As explained in \cref{remark:ih1-A-value} below, this choice arises when insisting that the upper estimate on $U(\bp,\bfield,\lm)$ derived in \cref{th:ih1-U-bounds} be strictly less than $\frac12$ for some values of $\delta$ and $\varepsilon$, at least when all other constraints on these parameters are ignored.

\subsection{Preliminary estimates}\label{ss:lemmas-IH1}

We now derive a series of auxiliary results that make it simpler to specify the boundary layer widths $\delta$ and $\varepsilon$. The first result gives estimates on the value of $\bp$ in~\eqref{e:b-choice-ih1}.

\begin{lemma}[Estimates on $\bp$]\label{th:ih1-b-bounds}
	Let $\bfield(z)$, $\lm(z)$ and $\bp$ be given by~\eqref{e:psi}, \eqref{e:q_profile}, and~\eqref{e:b-choice-ih1} with $A$ specified by~\eqref{e:A-choice}. Suppose that the boundary layer widths $\delta$ and $\varepsilon$ satisfy
	\begin{subequations}
	    \begin{equation}
	    \delta \leq \frac16, \qquad
	    \varepsilon \leq \frac13, \qquad
		\delta \ln^{2}{\left(\frac{1}{\delta^2}\right)} \leq \varepsilon.
		\tag{\theequation a,b,c}
	    \end{equation}
	\label{e:eps_constraint_lemma}
	\end{subequations}
	Then,
	\begin{equation}
    \frac{9}{2\sqrt{5}} \delta   \leq \bp \leq \frac{9}{2} \delta.
	\label{e:b_upper_lower}
	\end{equation}
\end{lemma}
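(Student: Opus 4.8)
The plan is to evaluate the volume average $\volav{|\bfield'-\lm|^{2}}$ appearing in~\eqref{e:b-choice-ih1} in closed form and then to pin $\bp^{2}$ between the two constant multiples of $\delta^{2}$ required by~\eqref{e:b_upper_lower}, using hypotheses (a)--(c).

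First I would differentiate~\eqref{e:psi} and subtract~\eqref{e:q_profile} piecewise. On the bottom layer $0<z<\delta$ one has $\bfield'(z)=-1/\delta=\lm(z)$, so $\bfield'-\lm\equiv0$ there; this cancellation is precisely the reason for the choice~\eqref{e:q_profile}. On the bulk $\delta<z<1-\varepsilon$ one has $\lm=0$ and $\bfield'(z)=A\big(\tfrac1z+\tfrac1{1-z}\big)$, and on the top layer $1-\varepsilon<z<1$ one has $\lm=0$ and $\bfield'(z)=-\tfrac{A}{\varepsilon}M$, where $M:=\ln\!\big(\tfrac{(1-\varepsilon)(1-\delta)}{\varepsilon\delta}\big)$. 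Using the antiderivative $\int\big(\tfrac1z+\tfrac1{1-z}\big)^{2}\dz=-\tfrac1z+\tfrac1{1-z}+2\ln\tfrac{z}{1-z}$, an elementary integration gives $\volav{|\bfield'-\lm|^{2}}=A^{2}B$ with $B:=\tfrac1\delta+\tfrac1\varepsilon-\tfrac1{1-\delta}-\tfrac1{1-\varepsilon}+2M+\tfrac{M^{2}}{\varepsilon}$. Since $\volav{|z-\tfrac12|^{2}}=\tfrac1{12}$, definition~\eqref{e:b-choice-ih1} yields $\bp^{2}=12A^{2}B$, and inserting $A^{2}=\tfrac{27}{80}\delta^{3}$ from~\eqref{e:A-choice} turns~\eqref{e:b_upper_lower} into the equivalent claim $1\le\delta B\le5$, where
\begin{equation*}
\delta B \;=\; 1+\frac{\delta}{\varepsilon}-\frac{\delta}{1-\delta}-\frac{\delta}{1-\varepsilon}+2\delta M+\frac{\delta M^{2}}{\varepsilon}.
\end{equation*}

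For the lower bound I would use that (a)--(b) force $M\ge\ln10>0$, so the terms $2\delta M+\delta M^{2}/\varepsilon$ are nonnegative and may be dropped; then $\delta/\varepsilon\ge3\delta$ by~(b), while $\delta/(1-\delta)\le\tfrac65\delta$ by~(a) and $\delta/(1-\varepsilon)\le\tfrac32\delta$ by~(b), so $\delta B\ge1+(3-\tfrac65-\tfrac32)\delta=1+\tfrac3{10}\delta\ge1$. For the upper bound I would discard the two negative terms and control the rest using~(c). Writing $L:=\ln(1/\delta^{2})$, hypothesis~(c) reads $\varepsilon\ge\delta L^{2}$; since $(1-\varepsilon)(1-\delta)<1$ this gives $M\le\ln\tfrac1{\varepsilon\delta}\le\ln\tfrac1{\delta^{2}L^{2}}\le L$, using $L\ge2\ln6>1$ from~(a). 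Consequently $\delta/\varepsilon\le L^{-2}$, $\delta M^{2}/\varepsilon\le1$, and $2\delta M\le2\delta L=4\delta\ln(1/\delta)$, the latter being increasing on $(0,\tfrac16]$ because $t\mapsto t\ln(1/t)$ increases on $(0,e^{-1})$, hence at most $\tfrac23\ln6$. Therefore $\delta B\le 2+\tfrac1{4\ln^{2}6}+\tfrac23\ln6<5$, which completes the argument.

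There is no genuine obstacle here, only careful bookkeeping, but two points need attention: getting the closed form $\volav{|\bfield'-\lm|^{2}}=\tfrac{27}{80}\delta^{3}B$ exactly right, and exploiting~(c) in the precise shape $\varepsilon\ge\delta\ln^{2}(1/\delta^{2})$, which is what simultaneously keeps $\delta/\varepsilon$, $\delta M$, and $\delta M^{2}/\varepsilon$ bounded in the upper estimate; this is the single constraint whose failure (with $\varepsilon$ too small relative to $\delta$) would break the bound.
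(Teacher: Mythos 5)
Your proof is correct and follows essentially the same route as the paper: both reduce the claim to two-sided bounds on $\volav{|\bfield'-\lm|^2}$ (in your notation, $1\le\delta B\le 5$), with hypothesis (c) controlling the top-layer contribution $\delta M^2/\varepsilon$ and (a)--(b) supplying the remaining uniform constants. The only difference is cosmetic: you evaluate the integral exactly and then estimate, whereas the paper bounds the piecewise integrals directly (dropping the cross term for the lower bound and using $(a+b)^2\le 2a^2+2b^2$ for the upper), and your intermediate estimate $\delta B\le 2+\tfrac{1}{4\ln^{2}6}+\tfrac23\ln 6<5$ is in fact slightly sharper than needed.
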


\begin{remark}
Condition~\subeqref{e:eps_constraint_lemma}{c} is key to prove the auxiliary results of this section. The other two restrictions, instead, are introduced to more easily keep track of constants in our estimates, which is necessary to obtain an explicit prefactor for the $O(\Ra^{-2})$ term. We have not attempted to optimize this prefactor.
\end{remark}

\begin{remark}\label{r:delta_less_eps}
Condition~\subeqref{e:eps_constraint_lemma}{c} implies that $\delta \leq \varepsilon$. We will use this fact often in the proofs of this section.
\end{remark}

\begin{proof}[Proof of \cref{th:ih1-b-bounds}]
    It suffices to estimate $\volav{ |\bfield' - \lm|^2 }^{\frac12}$ from above and below. For a lower bound, substitute our choices for $\bfield$ and $\lm$ to estimate
    \begin{equation*}
        \volav{ |\bfield' - \lm|^2} 
        = \int^{1}_{\delta}|\bfield'(z)|^2 \,\dz
        \geq \int^{1-\varepsilon}_{\delta} |\bfield'(z)|^2 \,\dz
        = A^2 \int^{1-\varepsilon}_{\delta} \left|\frac{1}{z} + \frac{1}{1-z}\right|^2 \,\dz.
    \end{equation*}
    Expanding the square, dropping the nonnegative term $\smash{\frac{2}{z(1-z)}}$, and recalling that $\varepsilon\leq \frac13$ and $\delta\leq \frac16 < \frac13$ by assumption, we can further estimate
    \begin{eqnarray}
        \volav{ |\bfield' - \lm|^2}
        \geq  A^2\left( 
                \int^{2/3}_{\delta} \frac{1}{z^2} \,\dz 
              + \int^{2/3}_{1/3} \frac{1}{(1-z)^2} \,\dz
              \right)
        = \frac{A^2}{\delta}.
        \label{e:lower_bound_L2_ih1}
    \end{eqnarray}
    Taking the square root of both sides and substituting for the value of $A$ from \eqref{e:A-choice} gives $\smash{\volav{|\bfield' - \lm|^2}^{1/2}} \geq  (3\sqrt{15}/20)\,\delta$, which combined with \eqref{e:b-choice-ih1} proves the lower bound on $\bp$ stated in~\eqref{e:b_upper_lower}.
    
    For the upper bound on $\bp$, recall that condition~\subeqref{e:eps_constraint_lemma}{c} implies $\delta \leq \varepsilon$, so $1-\varepsilon\leq 1-\delta\leq1$. Then,
    \begin{eqnarray*}
        \frac{\volav{ |\bfield'- \lm|^2 }}{A^2} 
        &=& \int^{1-\varepsilon}_{\delta} \left| \frac{1}{z} + \frac{1}{1-z} \right|^2 \textrm{d}z + \frac{1}{\varepsilon} \ln^{2}{\left(\frac{(1-\varepsilon)(1-\delta)}{\varepsilon\delta} \right)}
        \nonumber \\
        &\leq&  \int^{1-\delta}_{\delta} \left| \frac{1}{z} + \frac{1}{1-z} \right|^2 \textrm{d}z 
            + \frac{1}{\varepsilon} \ln^{2}\left(\frac{1}{\delta^2} \right).
    \end{eqnarray*}
    Using the inequality $(a+b)^2 \leq 2a^2 + 2b^2$ we can further estimate
    \begin{eqnarray*}
        \frac{\volav{ |\bfield'- \lm|^2 } }{A^2}
        &\leq& 
            2 \left(\int^{1-\delta}_{\delta} \frac{1}{z^2} \, \dz 
             + \int^{1-\delta}_{\delta} \frac{1}{(1-z)^2} \,\dz \right)  
             + \frac{1}{\varepsilon} \ln^{2}\left(\frac{1}{\delta^2} \right)
        \nonumber \\
        &=& \frac{4}{\delta}\left(\frac{1-2\delta}{1-\delta} \right) + \frac{1}{\varepsilon} \ln^{2}\left(\frac{1}{\delta^2} \right).
    \end{eqnarray*}
    Finally, we observe that $\frac{1-2\delta}{1-\delta} \leq 1$ for all $\delta\geq0$ and apply  \subeqref{e:eps_constraint_lemma}{c} to arrive at
    \begin{equation}\label{e:bf-lm-upper-bound}
        \volav{ |\bfield'- \lm|^2 }  \leq \frac{5A^2}{\delta}.
    \end{equation}
    Substituting our choice of $A$ from~\eqref{e:A-choice} and taking a square root gives $\smash{\smallvolav{|\bfield' - \lm|^2}^{1/2}} \leq  (3\sqrt{3}/4) \delta$, which combined with \eqref{e:b-choice-ih1} yields the upper bound on $\bp$ in~\eqref{e:b_upper_lower}.
\end{proof}

Our second auxiliary result estimates the upper bound $U(\bp,\bfield,\lm)$ on \wT\ from \cref{th:ih1-bounding-problem} in terms of the bottom boundary layer width $\delta$ alone.

\begin{lemma}[Estimates on $U(\bp,\bfield,\lm)$]\label{th:ih1-U-bounds}
	Let $\bfield(z)$, $\lm(z)$ and $\bp$ be given by~\eqref{e:psi}, \eqref{e:q_profile}, and~\eqref{e:b-choice-ih1} with $A$ specified by~\eqref{e:A-choice}. Suppose the boundary layer widths $\delta$ and $\varepsilon$ satisfy the conditions of \cref{th:ih1-b-bounds}. Then, 
	\begin{equation}
	U(\bp,\bfield,\lm) \leq \frac12 - \frac{\delta}{8}.
	\label{e:bound_eq}
	\end{equation}
\end{lemma}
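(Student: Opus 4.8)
\emph{Proof strategy.} The plan is to estimate the two contributions to
\[
U(\bp,\bfield,\lm) = \tfrac12 + \volav{ \tfrac{1}{4\bp} | \bfield' - \lm - \bp(z-\tfrac12) |^2 - \bfield }
\]
separately: bound the quadratic term from above, and $\volav{\bfield}$ from below, then add.

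For the quadratic term I would start from the elementary inequality \eqref{e:inequality_2norm_ih1} and observe that the choice of $\bp$ in \eqref{e:b-choice-ih1} makes the two terms on its right-hand side equal, so that
\[
\tfrac{1}{4\bp} \volav{ | \bfield' - \lm - \bp(z-\tfrac12) |^2 } \leq \volav{|\bfield'-\lm|^2}^{\frac12} \volav{|z-\tfrac12|^2}^{\frac12} = \tfrac{1}{2\sqrt3}\,\volav{|\bfield'-\lm|^2}^{\frac12},
\]
using $\volav{|z-\tfrac12|^2} = \tfrac{1}{12}$. Under the hypotheses of \cref{th:ih1-b-bounds}, the estimate \eqref{e:bf-lm-upper-bound} established in its proof gives $\volav{|\bfield'-\lm|^2} \leq 5A^2/\delta$, and substituting $A = \tfrac{3\sqrt{15}}{20}\delta^{3/2}$ from \eqref{e:A-choice} yields $\volav{|\bfield'-\lm|^2}^{\frac12} \leq \tfrac{3\sqrt3}{4}\delta$. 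Hence the quadratic term in $U$ is at most $\tfrac{3\delta}{8}$.

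For the lower bound on $\volav{\bfield} = \int_0^1 \bfield\,\dz$ I would split the integral over the three pieces of \eqref{e:psi}. The bottom boundary layer contributes exactly $\int_0^\delta (1-z/\delta)\,\dz = \tfrac{\delta}{2}$. On $[\delta,1-\varepsilon]$ the integrand $A\ln\!\big(\tfrac{z(1-\delta)}{\delta(1-z)}\big)$ is nonnegative, because $z\mapsto z/(1-z)$ is increasing so $z/(1-z)\ge \delta/(1-\delta)$ there; and on $[1-\varepsilon,1]$ the integrand equals a nonnegative constant (nonnegative since $(1-\varepsilon)(1-\delta)\ge\varepsilon\delta$ whenever $\delta+\varepsilon\le1$) times $(1-z)\ge0$. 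Thus $\volav{\bfield}\ge\tfrac{\delta}{2}$, and combining the two estimates gives $U(\bp,\bfield,\lm)\le \tfrac12+\tfrac{3\delta}{8}-\tfrac{\delta}{2} = \tfrac12-\tfrac{\delta}{8}$.

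I do not expect a genuine obstacle: the argument is routine once the key inputs (\eqref{e:inequality_2norm_ih1}, the value of $\bp$, and the bound \eqref{e:bf-lm-upper-bound}) are in hand. The only points needing care are checking that the cubic scaling $A\propto\delta^{3/2}$ produces an $O(\delta)$ bound on the quadratic term with a prefactor strictly smaller than the one coming from $\volav{\bfield}$ — precisely the balance that motivated \eqref{e:A-choice} — and verifying that the bulk and top-layer parts of $\bfield$ are nonnegative, so that discarding them when lower-bounding $\volav{\bfield}$ only loses a nonnegative amount.
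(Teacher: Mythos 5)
Your proposal is correct and follows essentially the same route as the paper: bound the quadratic term via \eqref{e:inequality_2norm_ih1} with the balanced choice of $\bp$, invoke \eqref{e:bf-lm-upper-bound} and the choice of $A$ to get $\tfrac{3\delta}{8}$, and lower-bound $\volav{\bfield}$ by the bottom-layer contribution $\tfrac{\delta}{2}$ using nonnegativity of $\bfield$. Your explicit check that the bulk and top-layer pieces of $\bfield$ are nonnegative is a welcome detail the paper merely asserts, but the argument is otherwise identical.
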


\begin{proof}
    Using inequality \eqref{e:inequality_2norm_ih1}, our choice of $\bp$ from \eqref{e:b-choice-ih1}, and the upper bound on $\smallvolav{ |\bfield' - \lm|^2}$ from~\eqref{e:bf-lm-upper-bound} yields
    \begin{align*}
        U(\bp,\bfield,\lm) 
        \leq \frac12 +  \frac{\sqrt{3}}{6} \, \volav{ |\bfield' - \lm|^2}^{\frac12}   - \volav{\tau}
        \leq \frac12 +  \frac{\sqrt{15}A}{6} \,\delta^{-\frac12}   - \volav{\tau}.
    \end{align*}
    Moreover, since we have chosen $\bfield$ to be a non-negative function we can estimate $\volav{\tau} = \int_0^1 \tau(z) \,\dz \geq \int_0^\delta \tau(z) \,\dz = \frac{1}{2} \delta$
    %
    to obtain
    \begin{equation}\label{e:ih1-U-estimate-with-A}
        U(\bp,\bfield,\lm) \leq \frac12 +  \frac{\sqrt{15}A}{6}\,\delta^{-\frac12}   - \frac{1}{2} \delta.
    \end{equation}
    Substituting our choice of $A$ from~\eqref{e:A-choice} into this inequality gives~\eqref{e:bound_eq}.
\end{proof}

\begin{remark}\label{remark:ih1-A-value}
    The right-hand side of~\eqref{e:ih1-U-estimate-with-A} can be strictly smaller than $\frac12$ only if $A \lesssim \delta^{3/2}$. It is this observation that dictates the choice of $A$ in~\eqref{e:A-choice}. For any fixed value of $R$, one should choose $A \sim \delta^{3/2}$ with a (possibly $R$-dependent) prefactor that optimises the balance between the positive and negative terms, subject to constraints on $A$, $\delta$ and all other parameters that ensure the spectral constraint. To simplify our proof, however, we choose to fix this prefactor \textit{a priori} irrespective of $R$.
\end{remark}

Our final auxiliary result gives sufficient conditions on $\delta$ and $\varepsilon$ that ensure the spectral constraint (cf. \cref{def:sc}) is satisfied. 

\begin{lemma}[Sufficient conditions for the spectral constraint]\label{th:ih1-sc}
	Let $\bfield(z)$, $\lm(z)$ and $\bp$ be given by~\eqref{e:psi}, \eqref{e:q_profile}, and~\eqref{e:b-choice-ih1} with $A$ specified by~\eqref{e:A-choice}. Suppose the boundary layer widths $\delta$ and $\varepsilon$ satisfy the conditions of \cref{th:ih1-b-bounds}. Suppose further that
	\begin{subequations}
		\begin{equation}
		\delta \leq (24\sqrt{3})^2\, R^{-2} 
		\qquad\text{and}\qquad
		\varepsilon^3 \delta^\frac12 \, \ln^2\!\left(\frac{1}{\delta^{2}}\right) \leq 8\sqrt{3} R^{-1}.
		\tag{\theequation a,b}
		\end{equation}
		\label{e:delta_eps_cons}
	\end{subequations}
	Then, the pair $(\bp,\bfield)$ satisfies the spectral constraint.
\end{lemma}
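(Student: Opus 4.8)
The plan is to reduce the spectral constraint \eqref{e:sc} to a family of one-dimensional inequalities indexed by the horizontal wavenumber, and then to bound the sign-indefinite term $\volav{\bfield'wT}$ separately over the bottom boundary layer $[0,\delta]$, the logarithmic bulk $[\delta,1-\varepsilon]$, and the top boundary layer $[1-\varepsilon,1]$, each time using the Hardy--Rellich estimate of \cref{lemma:doering_estimate} to convert the relation $\Delta^2 w=-R\Delta_h T$ together with the clamped conditions \eqref{e:w-bcs} into control of $w$ by $R$ times $T$ with a prefactor that the constraints on $\delta$ and $\varepsilon$ render small enough to absorb into $\bp\volav{|\nabla T|^2}$.

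First I would expand $T$, and hence $w=-R\Delta^{-2}\Delta_h T$, in horizontal Fourier modes. The wavenumber-zero mode has $\hat w_0\equiv0$, because the clamped biharmonic operator has trivial kernel, so it contributes only $\bp\int_0^1|\partial_z\hat T_0|^2\,\dz\geq0$ to \eqref{e:sc} and may be discarded. For each wavenumber $k>0$ the constraint decouples, and with $\hat T_k$, $\hat w_k$ the corresponding profiles it suffices to prove
\[
\bp\int_0^1\!\left(|\partial_z\hat T_k|^2+k^2|\hat T_k|^2\right)\dz \;\geq\; -\int_0^1\bfield'(z)\,\mathrm{Re}\!\left(\hat w_k\,\overline{\hat T_k}\right)\dz,
\]
where $(\partial_z^2-k^2)^2\hat w_k=Rk^2\hat T_k$, with $\hat w_k=\partial_z\hat w_k=0$ at $z\in\{0,1\}$ and $\hat T_k(0)=\hat T_k(1)=0$.

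Next I would split the right-hand integral according to the three pieces of $\bfield'$ read off from \eqref{e:psi}. In the bulk and the top layer the key observation is that $|\bfield'(z)\,z(1-z)|$ is at most of order $A$ --- with an extra factor $\ln(\delta^{-2})$ near $z=1$, controlled by \subeqref{e:eps_constraint_lemma}{c} --- because the $1/z$ and $1/(1-z)$ singularities of $\bfield'$ in the bulk are cancelled by that weight; pairing these singular factors with the weighted bound on $\hat w_k$ furnished by \cref{lemma:doering_estimate} (which must be sharp enough in $k$ that the $k^2|\hat T_k|^2$ term on the left can absorb the wavenumber dependence of $\hat w_k$) bounds the bulk and top-layer contributions by a constant times $R$, times $A$, times powers of $\varepsilon$ and $\ln(\delta^{-2})$, times $\int_0^1(|\partial_z\hat T_k|^2+k^2|\hat T_k|^2)\,\dz$. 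In the bottom layer $|\bfield'|=1/\delta$ is large, but $\hat w_k$ vanishes there to second order and $\hat T_k$ to first order, so a Cauchy--Schwarz in $z$ together with a Poincar\'e inequality on $[0,\delta]$ and the near-boundary form of \cref{lemma:doering_estimate} bounds that contribution by a constant times $R$, times a power of $\delta$, times $\int_0^1(|\partial_z\hat T_k|^2+k^2|\hat T_k|^2)\,\dz$.

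Finally, invoking $\bp\geq\tfrac{9}{2\sqrt{5}}\delta$ from \cref{th:ih1-b-bounds} and the choice $A=\tfrac{3\sqrt{15}}{20}\delta^{3/2}$ from \eqref{e:A-choice}, and requiring each of the three regional prefactors to be at most $\tfrac{9}{2\sqrt{5}}\delta$, yields exactly the two inequalities \subeqref{e:delta_eps_cons}{a} (from the bottom-layer estimate, which needs $\delta$ small relative to $R^{-2}$) and \subeqref{e:delta_eps_cons}{b} (from the bulk and top-layer estimates); condition \subeqref{e:eps_constraint_lemma}{c}, already in force through \cref{th:ih1-b-bounds}, guarantees $\delta\leq\varepsilon$ and keeps the logarithmic factors comparable, so the estimates combine and \eqref{e:sc} holds for every $k$. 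I expect the main obstacle to be the bulk term: because $\bfield'>0$ there it is not sign-definite against $wT$ and cannot simply be discarded, as it can in a Rayleigh--B\'enard argument where the bulk stratification is stabilising, so one must extract from \cref{lemma:doering_estimate} a bound on $\hat w_k$ whose dependence on $k$ is sharp enough to be uniform across all horizontal wavenumbers, and then track the attendant constants carefully through the three regions; that bookkeeping, rather than any conceptual leap, is where the effort lies.
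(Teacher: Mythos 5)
Your plan rests on a misreading of what \cref{lemma:doering_estimate} provides. You describe it as converting $\Delta^2 w=-R\Delta_h T$ into ``control of $w$ by $R$ times $T$'' with a prefactor small enough to absorb into $\bp\volav{|\nabla T|^2}$, and you propose to apply it ``separately'' on $[0,\delta]$, $[\delta,1-\varepsilon]$ and $[1-\varepsilon,1]$. The inequalities \subeqref{doering_estimate}{a,b} do neither of these things: they are lower bounds for the sign-indefinite integrals $\volav{wT/z}$ and $\volav{wT/(1-z)}$ by the nonnegative quantities $\tfrac4R\volav{w^2/z^3}$ and $\tfrac4R\volav{w^2/(1-z)^3}$, taken over the \emph{whole} layer; they give no magnitude bound on $w$ (or on $\hat w_k$, uniformly in $k$ or otherwise) in terms of $T$, and they do not localize to subintervals. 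So the two pillars of your region-by-region absorption argument are not supplied by the cited lemma, and the substitutes you would need (per-wavenumber elliptic estimates for the clamped problem with the correct weights, uniform in $k$) are exactly the content you defer as ``bookkeeping'' without deriving. Your closing claim that demanding the three regional prefactors be at most $\tfrac{9}{2\sqrt5}\delta$ ``yields exactly'' \subeqref{e:delta_eps_cons}{a} and \subeqref{e:delta_eps_cons}{b} is asserted with no computation, and the estimates you sketch would not produce conditions of that form (the top-layer and bottom-layer scalings come out differently if one only absorbs into $\bp\volav{|\nabla T|^2}$).

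The paper's proof has the opposite structure from the one you anticipate as the ``main obstacle.'' The positive bulk part of $\bfield'$ is not a danger to be estimated away: writing $\bfield'=A\bigl(\tfrac1z+\tfrac1{1-z}\bigr)-f(z)-A\,g(z)$ with $f,g$ supported only in the two boundary layers, the bulk stratification is extended over the whole domain and fed directly into \cref{lemma:doering_estimate}, producing the coercive terms $\tfrac{4A}{R}\volav{w^2/z^3}$ and $\tfrac{4A}{R}\volav{w^2/(1-z)^3}$. The boundary-layer corrections $\volav{f\,wT}$ and $A\volav{g\,wT}$ are then estimated using $\volav{T^2}_h\le z\volav{|\nabla T|^2}$ (and its mirror), Cauchy--Schwarz with the weights $z^{3/2}$, $(1-z)^{3/2}$, and Young's inequality, so that each is split between $\tfrac\bp2\volav{|\nabla T|^2}$ and a multiple of $\volav{w^2/z^3}$ or $\volav{w^2/(1-z)^3}$; conditions \subeqref{e:delta_eps_cons}{a} and \subeqref{e:delta_eps_cons}{b} are precisely what make the Hardy--Rellich terms dominate those multiples. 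In other words, the absorption happens against the $w$-weighted coercive terms generated by the bulk, not against $\bp\volav{|\nabla T|^2}$ alone, and no horizontal Fourier decomposition is needed at this stage. To repair your proposal you would either have to adopt this global splitting of $\bfield'$, or else prove from scratch $k$-uniform weighted bounds on $\hat w_k$ in terms of $\hat T_k$ and redo the three regional estimates honestly, in which case the resulting sufficient conditions would not coincide with \eqref{e:delta_eps_cons} as stated.
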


The proof of this result relies on Hardy--Rellich inequalities established in Ref.~\cite{whitehead2011internal}, which extract a positive term from the \textit{a priori} indefinite term $\smallvolav{\bfield' w T}$. 
\begin{lemma}[Hardy--Rellich inequalities~\cite{whitehead2011internal}]\label{lemma:doering_estimate}
	Let $T,w:\Omega \to \bR$ be horizontally periodic functions such that $\Delta^2 w = -R \Delta_h T$ subject to velocity boundary conditions~\eqref{e:w-bcs}. Then,
	\begin{subequations}\label{doering_estimate}
		\begin{align}
	    \volav{\frac{w T}{z}} &\geq \frac{4}{R} \volav{\frac{w^2}{z^3}}
		&\text{and}&&
		\volav{\frac{w T}{1-z}}  &\geq \frac{4}{R} \volav{\frac{w^2}{(1-z)^3}}.
		\tag{\theequation a,b}
		\end{align}
	\end{subequations}
\end{lemma}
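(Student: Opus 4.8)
The plan is to reduce the two three-dimensional inequalities to a one-parameter family of one-dimensional inequalities by expanding in horizontal Fourier modes, and then to prove each of these by an integration-by-parts identity that exposes a sum of squares.

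First I would write $w(\vx) = \sum_{\boldsymbol k} \hat w_{\boldsymbol k}(z)\,e^{i\boldsymbol k\cdot(x,y)}$ and similarly for $T$, where $\boldsymbol k$ ranges over the horizontal wavenumbers and $k=|\boldsymbol k|$. The relation $\Delta^2 w = -R\Delta_h T$ then reads $(\partial_z^2-k^2)^2\hat w_{\boldsymbol k} = R k^2\,\hat T_{\boldsymbol k}$ mode by mode. The zero mode has $k=0$, so $\Delta^2 \hat w_0 = 0$ with the clamped conditions \eqref{e:w-bcs} forces $\hat w_0\equiv 0$ and contributes nothing. For $k>0$ I solve $\hat T_{\boldsymbol k} = (Rk^2)^{-1}(\partial_z^2-k^2)^2\hat w_{\boldsymbol k}$. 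By Parseval in the horizontal directions, the claimed inequality \subeqref{doering_estimate}{a} follows, after summing over modes, from the single-mode estimate
\[
\int_0^1 \frac{w\,(\partial_z^2-k^2)^2 w}{z}\,\dz \;\geq\; 4k^2\int_0^1 \frac{w^2}{z^3}\,\dz ,
\]
where I now abbreviate $\hat w_{\boldsymbol k}$ by $w$, a real function of $z$ (real and imaginary parts are treated separately) with $w=w'=0$ at $z=0,1$; near $z=0$ the clamped conditions give $w=O(z^2)$, and primes denote $\partial_z$.

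The key step is an exact identity. Setting $\phi:=w/z$, which is $O(z)$ near the bottom wall with finite slope $\phi'(0)=\tfrac12 w''(0)$, I integrate by parts twice, moving one factor of the self-adjoint operator $\partial_z^2-k^2$ off $w$ and onto $\phi$. All boundary terms at $z=1$ vanish because $\phi(1)=\phi'(1)=0$, while those at $z=0$ combine into a single positive contribution $\phi'(0)^2$. Using $w=z\phi$ together with the vanishing of $[\phi^2]_0^1$ and the $z=1$ part of $[(\phi')^2]_0^1$, this gives
\[
\int_0^1 \frac{w\,(\partial_z^2-k^2)^2 w}{z}\,\dz = \phi'(0)^2 + \int_0^1 z\,\bigl[(\partial_z^2-k^2)\phi\bigr]^2\,\dz ,
\]
and a further integration by parts rewrites the last integral as $\int_0^1 z\,[(\phi'')^2 + 2k^2(\phi')^2 + k^4\phi^2]\,\dz$. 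Since $w^2/z^3=\phi^2/z$, the single-mode estimate is therefore equivalent to the weighted one-dimensional inequality
\begin{equation*}
\phi'(0)^2 + \int_0^1 z\Bigl[(\phi'')^2 + 2k^2(\phi')^2 + k^4\phi^2\Bigr]\,\dz \;\geq\; 4k^2\int_0^1\frac{\phi^2}{z}\,\dz . \tag{$\star$}
\end{equation*}

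The hard part is establishing $(\star)$ with the sharp constant $4$, and this is where I expect the main obstacle to lie. The difficulty is that the natural measure here is $z\,\dz$, for which the first-order Hardy inequality relating $\int_0^1 z(\phi')^2\,\dz$ to $\int_0^1 z^{-1}\phi^2\,\dz$ is exactly critical and therefore degenerate: no term-by-term estimate — dropping $(\phi'')^2$ or $\phi^2$, or applying a single weighted Hardy inequality — yields a positive constant, as the trial profiles $\phi\sim z^s$ with $s\downarrow 0$ reveal. The curvature term $(\phi'')^2$ and the boundary contribution $\phi'(0)^2$ must thus be used jointly with the weight. I would prove $(\star)$ either by a completion of squares tailored to this critical weight or, equivalently, by a ground-state substitution $\phi=\rho\,g$ built from the positive extremal $\rho$ of the associated Euler--Lagrange problem, so that the left-hand side minus the right-hand side becomes a manifestly nonnegative weighted integral of $g'$ plus the collected endpoint term. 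The delicate point is that the extremiser sits at the borderline of integrability, so the $+\phi'(0)^2$ term is indispensable and the endpoint bookkeeping at $z=0$ must be shown to close with no spurious singular contributions. Finally, the companion inequality \subeqref{doering_estimate}{b} needs no separate argument: the operator $(\partial_z^2-k^2)^2$ and the boundary conditions \eqref{e:w-bcs} are invariant under $z\mapsto 1-z$, so applying the result just proved to $\tilde w(z):=w(1-z)$ yields it immediately.
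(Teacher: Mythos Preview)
The paper itself does not prove this lemma; it is stated with a citation to Whitehead \& Doering (2011) and used as a black box. So there is no in-paper proof to compare against, and the question becomes whether your outline actually establishes the result.

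Your Fourier reduction and the identity obtained via the substitution $\phi=w/z$ are both correct; I checked the boundary bookkeeping and indeed
\[
\int_0^1 \frac{w\,(\partial_z^2-k^2)^2 w}{z}\,\dz \;=\; \phi'(0)^2 + \int_0^1 z\bigl[(\partial_z^2-k^2)\phi\bigr]^2\,\dz,
\]
with the further rewriting into $(\star)$ also valid. The symmetry argument for part~(b) is fine.

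The gap is exactly where you place it: you do not prove $(\star)$. You correctly diagnose that the weight $z\,\dz$ is critical for the first-order Hardy inequality, so no term-by-term estimate can work, and you propose either a tailored completion of squares or a ground-state substitution $\phi=\rho g$. But neither is carried out, and both face the obstacle you yourself flag: any natural candidate for the square or for $\rho$ introduces a factor behaving like $1/z$ near the origin, producing a non-integrable $\phi^2/z^3$ contribution unless $\phi'(0)=0$. You assert that the $\phi'(0)^2$ boundary term absorbs this, but showing that the endpoint bookkeeping ``closes with no spurious singular contributions'' is precisely the content of the lemma, not a routine verification. As it stands, the proposal is a correct reduction followed by a statement of what remains to be done; the actual analytic work that makes the cited result nontrivial is absent. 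To complete the argument you would need to exhibit the explicit square (or the explicit $\rho$) and verify all the resulting integrals and boundary terms converge and combine with the stated sign.
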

    
\begin{proof}[Proof of \cref{th:ih1-sc}]
    Let $\mathbbm{1}_{(a,b)}$ denote the indicator function of the interval $(a,b)$. Define the functions
    \begin{subequations}
    	\begin{gather}
    	\label{e:ih1-f}
    	f(z):= \left[ \frac{1}{\delta}+\frac{A}{z(1-z)} \right] \mathbbm{1}_{(0,\delta)}(z),\\
    	g(z):= \left[ \frac{1}{\varepsilon} \ln\!{\left(\frac{(1-\varepsilon)(1-\delta)}{\varepsilon\delta} \right)} +\frac{1}{z(1-z)} \right]\mathbbm{1}_{(1-\varepsilon,1)}(z).
    	\end{gather}
    \end{subequations}
    Given our choice of $\bfield$ from~\eqref{e:psi}, we can rewrite the spectral constraint as
    \begin{equation*}
    0 \leq\volav{ \bp|\nabla T|^2 + \bfield' w T  } = \mathcal{F}\{T\} + \mathcal{G}\{T\},
    \end{equation*}
    where
    \begin{subequations}
        \begin{gather}
        \mathcal{F}\{T\} :=  \frac{\bp}{2}  \volav{ |\nabla T|^2 } + A  \volav{ \frac{wT}{z} } -  \volav{ f(z) wT  },
        \\
        \mathcal{G}\{T\} := \frac{\bp}{2}  \volav{ |\nabla T|^2 }  + A  \volav{ \frac{wT}{1-z} } - A \volav{g(z) wT }.
        \end{gather}
    \end{subequations}
    Here, $w$ is determined as a function of $T$ by solving~\eqref{e:w_and_T_eq} subject to the boundary conditions in~\eqref{e:w-bcs}. We shall prove that $\mathcal{F}$ and $\mathcal{G}$ are individually non-negative. 
    
    First, let us consider $\mathcal{F}$. The Hardy--Rellich inequality~\subeqref{doering_estimate}{a} gives
    \begin{equation}\label{e:ih1-B-estimate}
        \mathcal{F}\{T\} \geq 
        \frac{\bp}{2}   \volav{ |\nabla T|^2}  + \frac{4 A}{R}  \volav{\frac{w^2}{z^3}}   - \volav{ f(z) wT } .
    \end{equation}
    Next, we estimate $\smallvolav{ f(z) wT }$. 
    Since $T$ vanishes at $z=0$ by virtue of the thermal boundary conditions in~\eqref{bc_T_IH1}, we can use the fundamental theorem of calculus and the Cauchy-Schwarz inequality to estimate $|T(\cdot,z)|\leq\smash{\sqrt{z}(\int^{1}_{0}|\nabla T|^2 \textrm{d}z )^{1/2}}$. Squaring both sides and taking the horizontal average of which gives $\volav{T^2}_h \leq z \volav{|\nabla T|^2}$.
    Then, use of the Cauchy--Schwarz inequality, substitution for $\volav{T^2}_h$ and Youngs inequality gives
    \begin{align*}
         \volav{ f(z) wT } \leq \int^{1}_{0} |f(z)| \volav{|wT|}_h \textrm{d}z &\leq
         \int^{1}_0 |f(z)| \volav{w^2}^{1/2}_h \volav{T^2}^{1/2}_h \textrm{d}z \nonumber \\
         &\leq
         \volav{|\nabla T|^2}^{\frac12}\int^{1}_{0} f(z) z^2 \volav{\frac{w^2}{z^{3}}}^{\frac12}_h \textrm{d}z  \nonumber \\     
         &\leq
         \volav{|\nabla T|^2}^{\frac12}  \volav{ f(z)^2 z^{4} }^{\frac12}   \volav{\frac{w^2}{z^3} }^{\frac12}\nonumber\\
         &\leq
         \frac{\bp}{2} \volav{|\nabla T|^2 } + \frac{1}{2\bp}  \volav{ f(z)^2 z^{4}} \volav{\frac{w^2}{z^3} } .
    \end{align*}
    Upon using the lower bound on $\bp$ from \eqref{e:b_upper_lower} to estimate the last term from above we obtain
    \begin{equation}
        \volav{ f(z) wT } \leq \frac{\bp}{2} \volav{|\nabla T|^2 } + \frac{\sqrt{5}}{9\delta} \volav{ f(z)^2 z^{4}} \volav{\frac{w^2}{z^3} } .
    \end{equation}
    This can be substituted into~\eqref{e:ih1-B-estimate} along with our choice of $A$ from~\eqref{e:A-choice} to find
    \begin{equation}
        \mathcal{F}\{T\} \geq \frac{\sqrt{5}}{9}\delta^{\frac32}\left( 
        \frac{27\sqrt{3}}{5 R} \, 
        - \frac{\volav{ f(z)^2 z^{4}}}{\delta^{\frac52}}\right) 
        \volav{ \frac{w^2}{z^3} }.
    \end{equation}
    To conclude, we show that the term in parentheses is nonnegative when $\delta$ satisfies $\delta \leq \frac16$ and~\subeqref{e:delta_eps_cons}{a}.
    To do this, we observe that the function $f(z)$ in~\eqref{e:ih1-f} is nonnegative function and that it is nonzero only if $z \leq \delta$. We can therefore bound it from above on the interval $(0,\delta)$ using the estimates $\frac{1}{1-z} \leq  \frac{1}{1-\delta} \leq \frac65$ and, consequently, obtain
     \begin{equation*}
        \frac{\volav{ f(z)^2 z^{4}}}{\delta^{\frac52}}
        \leq \frac{1}{\delta^{\frac52}}\int^{\delta}_{0} \left(\frac{1}{\delta} + \frac{6A}{5z} \right)^2 z^4 \, \dz
        = \bigg(\frac15 + \frac{9\sqrt{15}}{100} \delta^{\frac32} + \frac{81}{500} \delta^3\bigg)  \delta^\frac12.
    \end{equation*}
    Using the assumption that $\delta \leq \frac16$ to estimate the expression in parentheses by its value at $\delta=\frac16$, followed by an application of assumption~\subeqref{e:delta_eps_cons}{a} to estimate the remaining $\delta^{1/2}$ term in terms of $\Ra$ we arrive at the desired inequality
    \begin{equation*}
        \frac{\volav{ f(z)^2 z^{4}}}{\delta^{\frac52}}
        \leq \frac{9\delta^\frac12}{40}
        \leq \frac{27\sqrt{3}}{5 \Ra}.
    \end{equation*}

    Analogous arguments show that $\mathcal{G}\{T\}$ is nonnegative. Using the Hardy--Rellich inequality~\subeqref{doering_estimate}{b} we have
    \begin{equation}\label{e:ih1-T-estimate}
    \mathcal{G}\{T\} \geq
    \frac{\bp}{2} \volav{  |\nabla T|^2 } + \frac{4 A}{R} \volav{ \frac{|w|^2}{(1-z)^3} } - A \volav{ g(z) wT } .
    \end{equation}
    To estimate the last term, we use (in order) the inequality $\volav{T^2}_h\leq \smash{(1-z)\volav{ |\nabla T|^2} }$, the Cauchy--Schwarz inequality, and Young's inequality:
    \begin{align*}
        \volav{g(z) w T} \leq \int^{1}_{0} |g(z)| \volav{|wT|}_h \textrm{d}z &\leq \int^{1}_{0} |g(z)| \volav{w^2}^{1/2}_h \volav{T^2}^{1/2}_h \textrm{d}z \nonumber \\
         &\leq  \volav{|\nabla T |^2}^{\frac12} \int^{1}_{0} g(z)(1-z)^2 \volav{\frac{w^2}{(1-z)^{3}} }_{h}^{\frac12} \textrm{d}z
        \nonumber \\
        &\leq \volav{|\nabla T|^2}^{\frac12} \volav{ g(z)^2(1-z)^4 }^{\frac12} \volav{ \frac{w^2}{(1-z)^3}}^{\frac12} 
        \nonumber \\
        &\leq \frac{\bp}{2A} \volav{ |\nabla T|^2}  + \frac{A}{2\bp} \volav{ g(z)^2 (1-z)^4 } \volav{ \frac{w^2}{(1-z)^3} }   .    
    \end{align*}
    Using the lower bound on $\bp$ from \eqref{e:b_upper_lower} gives
    \begin{equation}
        \volav{g(z) w T} \leq
        \frac{\bp}{2A} \volav{ |\nabla T|^2}  + \frac{\sqrt{5}A}{9 \delta} \volav{ g(z)^2 (1-z)^4 } \volav{ \frac{w^2}{(1-z)^3} },
    \end{equation}
    which can be substituted into \eqref{e:ih1-T-estimate} along with the value of $A$ from~\eqref{e:A-choice} to obtain
    \begin{equation}
        \mathcal{G}\{T\} \geq 
        \frac{3 \delta^\frac32}{16 \sqrt{5}}  \left( \frac{16\sqrt{3}}{\Ra}  -  \delta^{\frac12} \volav{ g(z)^2 (1-z)^4 } \right)\volav{  \frac{w^2}{(1-z)^3} }.
        \label{e:T_1}
    \end{equation}
    To conclude the argument we show that the term in parentheses is non-negative. To demonstrate this, we first estimate $g(z)$ on the interval $(1-\varepsilon,1)$ from above using the assumption that $\varepsilon\leq \frac13$, so $\frac23\leq z \leq1$ and $\frac{1}{z(1-z)} \leq \frac{3}{2(1-z)}$. Thus,
    \begin{equation*}
        g(z) \leq \frac{1}{\varepsilon} \ln{\left(\frac{1}{\delta^{2}}\right)} + \frac{3}{2(1-z)} \qquad \forall z \in (1-\varepsilon,1).
    \end{equation*}
    Then, we use the assumptions $\delta\leq\frac16$ and $\delta\leq \varepsilon$ (cf. \cref{r:delta_less_eps}) to observe that $\ln{(\frac{1}{\varepsilon\delta})} \leq \ln{(\frac{1}{\delta^2})}\leq  \ln^{2}{(\frac{1}{\delta^2})} $ and $\frac34 \leq  \frac34 \ln^{2}{(\frac{1}{\delta^2})}$. Combining these estimates with the upper bound on $g$ derived above gives
    \begin{align*}
       \delta^{\frac12} \volav{ g(z)^2 (1-z)^4 }
       &\leq 
       \delta^{\frac12}\int^{1}_{1-\varepsilon} \left( \frac{1}{\varepsilon} \ln{\left(\frac{1}{\delta^{2}}\right)} + \frac{3}{2(1-z)} \right)^2 (1-z)^4 \, \textrm{d}z \nonumber \\
       & = \delta^{\frac12}\varepsilon^3 \left(  \frac15 \ln^{2}{\left(\frac{1}{\delta^2}\right) } +  \frac34\ln{\left(\frac{1}{\delta^{2}}\right)} + \frac34  \right) \nonumber\\
      & \leq   2\delta^{\frac12} \varepsilon^3 \ln^{2}{\left( \frac{1}{\delta^{2}} \right)} \nonumber \\
      (\text{\textrm{by} \subeqref{e:delta_eps_cons}{b}}) \qquad & \leq \frac{16\sqrt{3}}{R},
    \end{align*}
    as desired. This concludes the proof of \cref{th:ih1-sc}.
\end{proof}

\subsection{Proof of \texorpdfstring{ \cref{thm:ih1-main-result}}{Theorem 1} }
\label{ss:proof-IH1}

It is now straightforward to prove the upper bound on \wT\ by specifying boundary layer widths $\delta$ and $\varepsilon$ that satisfy the conditions of \cref{th:ih1-b-bounds,,th:ih1-U-bounds,,th:ih1-sc}. 

\begin{figure}
    \centering
    \includegraphics[scale=1]{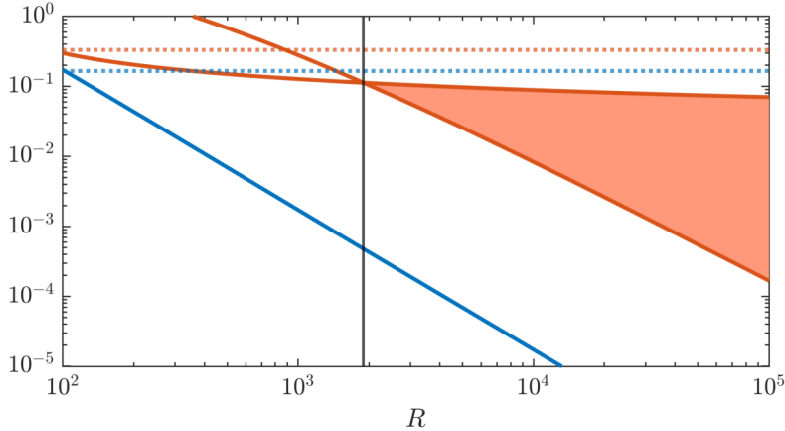}
    \caption{Variation with \Ra\ of the allowed values for the bottom boundary layer width $\varepsilon$ (shaded region), determined by condition~\subeqref{e:eps_constraint_lemma}{c} in \cref{th:ih1-b-bounds} and condition~\subeqref{e:delta_eps_cons}{b} in~\cref{th:ih1-sc} when the bottom boundary layer width $\delta$ ({\color{matlabblue}\solidrule}) is chosen as in~\eqref{e:choice_vars_final}. Also shown are the uniform upper bounds $\delta\leq \frac16$ ({\color{matlabblue}\dottedrule}) and $\varepsilon\leq\frac13$ ({\color{matlabred}\dottedrule}) imposed on these variables. A black vertical line marks the Rayleigh number $\Ra_0 \simeq 1891.35$ above which all constraints on $\delta$ and $\varepsilon$ are satisfied.  }
    \label{fig:ih1_conditions}
\end{figure}

Since the estimate for the resulting upper bound obtained in \cref{th:ih1-U-bounds} is minimized when $\delta$ is as large as possible, we choose the largest value consistent with \subeqref{e:delta_eps_cons}{a},
\begin{equation}
\label{e:choice_vars_final}
\delta = (24\sqrt{3})^2\, R^{-2}.
\end{equation}
With this choice of $\delta$, conditions~\subeqref{e:eps_constraint_lemma}{c} and~\subeqref{e:delta_eps_cons}{b} require $\varepsilon$ to satisfy %
\begin{equation}
    \frac{27\,648}{R^{2}}\,\ln^2{\left(\frac{R}{24\sqrt{3}} 
    \right)} \leq \varepsilon \leq  \left(\frac{1}{48}\right)^{\frac13} \ln^{-\frac23}{\left(\frac{R}{24\sqrt{3}} 
    \right)},
    \label{e:eps_range}
\end{equation}
which is possible for $\Ra>\Ra_0 \simeq 1891.35$ (cf. \cref{fig:ih1_conditions}). For $\Ra > \Ra_0$, any choice of $\varepsilon$ in this range is feasible. The optimal value could be determined at the expense of more complicated algebra either by optimizing the full bound $U(\bp,\bfield,\lm)$, or by deriving better $\varepsilon$-dependent estimates for it. However, we expect that any $\varepsilon$-dependent terms will contribute only higher-order corrections to our bound on $\mean{wT}$. 

To conclude the proof of \cref{thm:ih1-main-result}, there remains to verify that our choice of $\delta$ is no larger than
$\frac16$ and that any $\varepsilon$ satisfying~\eqref{e:eps_range} is no larger than $\frac13$. It is easily checked that both conditions hold when $\Ra \geq \Ra_0$ (see \cref{fig:ih1_conditions} for an illustration). For all such values of $R$, therefore, \cref{th:ih1-U-bounds} and our choice of $\delta$ yield the upper bound $\mean{wT} \leq U(\bp,\bfield,\lm) \leq \frac12 - cR^{-2}$ with $c = 216$.

\section{Bounds for the IH3 configuration}\label{sec:IH3}

We now move on to studying IH convection in the IH3 configuration, where the top boundary is maintained at constant (zero) temperature and the bottom boundary is insulating. 
First, in \cref{ss:bm-IH3}, we derive a bounding framework for $\smash{\mean{wT}}$ following steps similar to those used for the IH1 case (cf. \cref{ss:bm-IH1}). In \cref{ss:ansatze-IH3} we present ans\"atze for $\bfield$ and $\lm$, with which we obtain crucial estimates in \cref{ss:lemmas_ih3} , which give the bound in \cref{ss:analtic_bound_ih3}.  Throughout this section, $T$ belongs to  $\Tspace_3$. Observe that this changes the set of temperature fields over which the spectral constraint in \cref{def:sc} is imposed.
The notation $\Tspace_+$ still denotes the subset of temperature fields in $\Tspace_3$ that are nonnegative pointwise almost everywhere.

\subsection{Bounding framework}\label{ss:bm-IH3}
Upper bounds on $\smash{\mean{wT}}$ for the IH3 configuration can be derived using a quadratic auxiliary function $\mathcal{V}\{T\}$ similar to that used for the IH1 case. Precisely, we still take $\mathcal{V}$ to be defined as in~\eqref{e:V-IH1}, where the positive constant $\bp$ and the piecewise-differentiable square-integrable function $\bfield(z)$ are tunable parameters. However, this time we impose only the boundary condition
\begin{equation}\label{e:psi-bc-ih3}
    \bfield(1)=0.
\end{equation}
These changes result in the following family of parametrized upper bounds on $\mean{wT}$.

\begin{proposition}[Bounding framework, IH3]\label{th:ih3-bounding-problem}
	Suppose that the pair $(\bp,\bfield)$ satisfies the spectral constraint (cf. \cref{def:sc}) and the boundary condition in~\eqref{e:psi-bc-ih3}. Further, let $\lm \in L^2(0,1)$ be a nondecreasing function such that $\lm(0) = -1$.
	Then,
	\begin{equation*}
	\mean{wT} \leq 
	\tfrac12 + 
	\volav{ \tfrac{1}{4\bp} \abs{ \bfield' - \lm - \bp z }^2 - \bfield } =: U(\bfield,\lm,\bp)
	\end{equation*}
\end{proposition}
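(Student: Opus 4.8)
The plan is to mirror the proof of \cref{th:ih1-bounding-problem}, tracking carefully the boundary terms that change because of the insulating condition at $z=0$. First I would start from the auxiliary-function identity $\mean{wT} = \timeav{\volav{wT} + \tfrac{\rm d}{{\rm d}t}\mathcal{V}\{T(t)\}}$ with $\mathcal{V}$ as in~\eqref{e:V-IH1}, substitute the temperature equation~\eqref{nondim_energy} for $\partial_t T$, and integrate by parts using incompressibility~\eqref{continuit}, the velocity boundary conditions, and the thermal conditions~\eqref{bc_T_IH3}. The boundary terms vanish for two reasons: at $z=1$ the factor $\bfield(z)+z-1$ equals $\bfield(1)=0$ while $T$ also vanishes, and at $z=0$ the Neumann condition $\partial_z T|_{z=0}=0$ annihilates every term that would otherwise involve $\bfield(0)$. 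This is precisely why only the single condition~\eqref{e:psi-bc-ih3} is needed here, in contrast to the IH1 case where the Dirichlet data at $z=0$ forced $\bfield(0)=1$.

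Carrying out these manipulations, and using $\volav{T}=-\volav{z\,\partial_z T}$ (still valid since $T|_{z=1}=0$), I expect to arrive at
\begin{equation*}
\mean{wT} = \tfrac12 + \mean{ -\bp\abs{\nabla T}^2 - \bfield' wT + (\bfield' + 1 - \bp z)\,\partial_z T - \bfield }.
\end{equation*}
The extra ``$+1$'' in the coefficient of $\partial_z T$, absent in the IH1 formula, is exactly the trace of $\volav{\partial_z T}=-\volav{T|_{z=0}}_h$ no longer being zero. Bounding the infinite-time average by the supremum over the global attractor, which the minimum principle confines to $\Tspace_+\subset\Tspace_3$, then gives
\begin{equation*}
\mean{wT} \leq \tfrac12 + \sup_{\substack{T\in\Tspace_+\\ w=-R\Delta^{-2}\Delta_h T}} \volav{ -\bp\abs{\nabla T}^2 - \bfield' wT + (\bfield' + 1 - \bp z)\,\partial_z T - \bfield }.
\end{equation*}

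Next I would evaluate this supremum exactly as in \cref{ss:duality-ih1}. The quadratic-in-$T$ part $\volav{-\bp\abs{\nabla T}^2-\bfield'wT}$ is nonpositive precisely when $(\bp,\bfield)$ satisfies the spectral constraint, so the supremum is finite in that case. The linear term is handled by introducing a nondecreasing Lagrange multiplier $\lm(z)$ dual to the pointwise constraint $T\geq0$: writing $\bfield'+1-\bp z=(\bfield'-\lm-\bp z)+(\lm+1)$ and integrating $\volav{(\lm+1)\partial_z T}$ by parts, the boundary contribution at $z=1$ vanishes because $T|_{z=1}=0$, and the one at $z=0$ vanishes iff $\lm(0)+1=0$; this is the origin of the normalization $\lm(0)=-1$, which replaces the mean constraint $\volav{\lm}=-1$ of the IH1 analysis (there $T$ vanished at $z=0$, so no such condition was needed, and a compensating shift $z\mapsto z-\tfrac12$ was possible instead). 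The remaining boundary integral $-\int \overline{T}\,{\rm d}\lm$ is nonpositive since $\overline{T}\geq0$ and $\lm$ is nondecreasing, so it may be discarded, and what is left, $\volav{-\bp\abs{\nabla T}^2-\bfield'wT+(\bfield'-\lm-\bp z)\partial_z T}$, is bounded above by $\volav{\tfrac{1}{4\bp}\abs{\bfield'-\lm-\bp z}^2}$ via the completion-of-the-square/duality argument of \cref{ss:duality-ih1}. Adding back $-\volav{\bfield}$ and the constant $\tfrac12$ yields $U(\bfield,\lm,\bp)$.

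The main obstacle is the convex-duality step that evaluates the constrained supremum over $T$ — in particular, justifying that the pointwise nonnegativity constraint can be dualized with a one-dimensional multiplier $\lm(z)$ and pinning down the normalization forced by the IH3 boundary conditions; once this is imported from \cref{ss:duality-ih1}, the rest is bookkeeping. A secondary point to verify is that the minimum principle holds in the IH3 configuration, so the global attractor indeed lies in $\Tspace_+$, which is what legitimises replacing the time average by a supremum over nonnegative temperature fields.
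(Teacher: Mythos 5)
Your proposal is correct, and its first half (the derivation of the primal form via $\mathcal{V}$, integration by parts, and the observation that the Neumann condition at $z=0$ removes the need for $\bfield(0)=1$ while $T|_{z=1}=0$ still gives $\volav{T}=-\volav{z\,\partial_z T}$, leaving the extra $+1$ in the coefficient of $\partial_z T$) is exactly the paper's route; your expression $(\bfield'+1-\bp z)\partial_z T$ agrees with the functional $\Phi$ used in \cref{ss:duality-ih3}, so the $\bfield$ appearing in \eqref{e:primal-form-ih3} is evidently a typo for $\bfield'$. Where you genuinely diverge is in handling the supremum: the paper establishes the full strong-duality identity \eqref{e:ih3-duality-inf-sup} in \cref{ss:duality-ih3} (Fenchel--Rockafellar plus the representation \Cref{th:ih3-representation} of translation-invariant positive functionals, with the shift $\tilde\lm=\lm-1$ producing the constraint $\lm\geq-1$), and then the proposition follows because a nondecreasing $\lm$ with $\lm(0)=-1$ is feasible for that infimum. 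You instead prove only the one-sided (weak duality) estimate directly: split $\bfield'+1-\bp z=(\bfield'-\lm-\bp z)+(\lm+1)$, integrate the $(\lm+1)\partial_z T$ piece by parts so that $\lm(0)=-1$ and $T|_{z=1}=0$ kill the boundary terms and $-\int\volav{T}_h\,{\rm d}\lm\leq0$ by the minimum principle and monotonicity, then bound the remainder by $\tfrac{1}{4\bp}\volav{|\bfield'-\lm-\bp z|^2}$. This is more elementary and suffices for the stated inequality, but note that this last step is not a pointwise completion of the square: because of the coupling term $-\bfield' wT$ you still need the horizontal-mean decomposition $T=\eta+\xi$ (with the spectral constraint absorbing the $\xi$ part and the square completed only in $\eta'$), which is precisely the computation you are importing from the appendix; with that made explicit, your argument is complete, at the price of not yielding the equality between the sup over $\Tspace_+$ and the inf over multipliers that the paper's duality argument provides.
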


\begin{proof}
    Proceeding as in the proof of \cref{th:ih1-bounding-problem} shows that
    \begin{equation}\label{e:primal-form-ih3}
        \mean{wT} \leq \tfrac12  
        + \sup_{\substack{T \in \Tspace_+ \\ w = -R\Delta^{-2}\Delta_h T}}\volav{  
            - b \abs{\nabla T}^2 - \bfield' w T  + \left(\bfield - \bp z + 1 \right)\partial_z T 
            - \bfield
            }.
    \end{equation}
    The supremum on the right-hand side can be evaluated using the convex duality argument summarized in \cref{ss:duality-ih3}, leading to the equivalent inequality
    \begin{equation}\label{e:dual-form-ih3}
        \mean{wT} \leq \tfrac12  
        + \inf_{ \substack{\lm \in L^2(0,1) \\ \lm \text{ \rm nondecreasing} \\ \lm \geq -1}}
        \volav{ \tfrac{1}{4\bp} \abs{ \bfield' - \lm - \bp z }^2 - \bfield }.
    \end{equation}
    This clearly implies the upper bound stated in the proposition.
\end{proof}

\subsection{Ans\"atze}\label{ss:ansatze-IH3}
The procedure for the proof of an upper bound on $\mean{wT}$ is the same as that employed for isothermal boundaries. We construct $\bp >0$, $\bfield(z)$ and $\lm(z)$ that satisfy the conditions of \cref{th:ih3-bounding-problem}, while trying to minimize the corresponding bound $U(\bp,\bfield,\lm)$. Due to the Neumann boundary condition on $T$ at $z=0$, we can no longer employ the Poincar\'e estimates used in \cref{ss:lemmas-IH1} to control the sign-indefinite term in the spectral constraint at the bottom boundary. Instead we modify $\bfield(z)$ in $(\delta,\frac12)$ to increase slower than logarithmically in $z$ and use results established in~\cite{Whitehead2014mixed}.
The function $\bfield(z)$ is hence chosen to have the form
\begin{equation}
\label{e:psi_3}
    \bfield(z):= 
\begin{dcases}
    \delta -z ,&  0\leq z \leq \delta,\\
    \frac{A}{1-\alpha}\left(z^{1-\alpha} - \delta^{1-\alpha} \right) - A \ln{\left(\frac{1-z}{1-\delta}\right)},  & \delta \leq z \leq 1-\varepsilon, \\
   \frac{A\,B}{\varepsilon} \left(1-z\right), & 1-\varepsilon \leq z \leq 1,
\end{dcases}
\end{equation}
where
\begin{align}
    B=B(\epsilon,\delta,\alpha) := \frac{(1-\varepsilon)^{1-\alpha}-\delta^{1-\alpha}}{1-\alpha} + \ln{\left(\frac{1-\delta}{\varepsilon} \right)}.
    \label{e:B_form}
\end{align}
On the other hand, the Lagrange multiplier $\lm(z)$ is still chosen to be
\begin{equation}
\label{e:q_3}
    \lm(z):= 
\begin{dcases}
    -1, &  0\leq z \leq \delta,\\
    0  , & \delta \leq z \leq 1 .
\end{dcases}
\end{equation}
These piecewise functions, sketched in \cref{fig:psi_log_inf_pr_ih3}, are fully specified by the bottom and top boundary layer widths  $\delta, \varepsilon \in (0,\frac12)$, the constant $A>0$, and the exponent $\alpha \in (0,1)$ driving the behaviour of $\bfield(z)$ in the bulk. 

For $\bp$ we take
\begin{equation}
    \bp := \volav{ |\bfield' - \lm|^2}^{\frac12} \volav{z^2}^{-\frac12}  = \sqrt{3} \volav{ |\bfield' - \lm|^2 }^{\frac12}.
    \label{e:b_choice_ih3}
\end{equation}
This choice is motivated by minimizing the right hand side of the estimate
\begin{equation}
    \frac{1}{4\bp} \volav{ |  \bfield' - \lm  - \bp z|^2} \leq \frac{\bp}{2} \volav{ z^2}  + \frac{1}{2\bp} \volav{ |\bfield'- \lm |^2 } = \frac{1}{\sqrt{3}} \volav{ |\bfield'- \lm |^2 }^{\frac12} = \frac{\beta}{3},
    \label{e:est_2norm_starting_point}
\end{equation}
which is used in \cref{lemma:U_ih3} below to estimate the value of the bound $U(\bp,\bfield,\lm)$ from above when $\bfield$ and $\lm$ are define by \eqref{e:psi_3} and \eqref{e:q_3} respectively.

For any choice of the parameters $\delta$, $\varepsilon$, $A$, and $\alpha$, the function $\bfield$ satisfies the boundary conditions in \eqref{e:psi-bc-ih3}, while $\lm$ is nondecreasing and satisfies the condition $\lm(0)=-1$. Thus, to establish \cref{thm:ih3} using \cref{th:ih3-bounding-problem} we need only specify parameter values such that $U(\bp,\bfield,\lm) \leq \frac12 - O(R^{-4})$ while ensuring that $(\bp,\bfield)$ satisfy the spectral constraint. For the purposes of simplifying the algebra in what follows, we shall fix
\begin{equation}
    A(\delta,\alpha) = \frac{2\sqrt{3}}{9} \sqrt{2\alpha-1}\, \delta^{\alpha+\frac32}
    \label{e:A_choice_ih3} 
\end{equation}
from the outset. This choice arises when insisting that the upper estimate on $U(\bp,\bfield,\lm)$ derived in \cref{lemma:U_ih3} below should be strictly less than $\frac12$ for suitable choices of $\delta$ and $\varepsilon$, at least when all other constraints on these parameters are ignored.
\begin{figure}
    \centering
    \begin{tikzpicture}[every node/.style={scale=0.9}, scale=0.75]
   	\draw[->,black,thick] (-12.25,0.5) -- (-5,0.5) node [anchor=north] {$z$};
	\draw[->,black,thick] (-12,0.4) -- (-12,3.8) node [anchor=south] {$\bfield(z)$};
	\draw[matlabblue,very thick] (-12,2.3) -- (-11.2,0.5) ;
	\draw[matlabblue,very thick] (-6.3,1.8) -- (-5.5,0.5);
	\draw [matlabblue,very thick] plot [smooth, tension = 1] coordinates {(-11.2,0.5) (-9.75,1.2) (-7.25,1.5) (-6.3,1.8)};
	\node[anchor=north] at (-5.5,0.5) {$1$};
	\node[anchor=east] at (-12,2.3) {$\delta$};
	\node[anchor=north] at (-11.2,0.5) {$\delta$};
	\draw[dashed] (-6.3,0.5) node[anchor=north] {$1-\varepsilon$} -- (-6.3,1.8);
	\draw[->,black,thick] (-3.25,3.4) -- (4,3.4) node [anchor=west] {$z$};
	\draw[->,black,thick] (-2.5,0.4) -- (-2.5,3.8) node [anchor=south] {$\lm(z)$};
	\draw[colorbar12,very thick] (-2.5,0.5) -- (-1.5,0.5);
	\draw[colorbar12,very thick] (-1.5,3.4) -- (3.6,3.4) ;
	\draw[colorbar12,dotted] (-1.5,3.4) -- (-1.5,0.5) ;
	\node[anchor=south] at (3.6,3.4) {$1$};
	\node[anchor=south] at (-1.5,3.4) {$\delta$};
	\node[anchor=east] at (-2.5,0.5) {$-1$};
\end{tikzpicture}
    \caption{Sketch of the functions $\bfield(z)$ \eqref{e:psi_3} and $\lm(z)$ in \eqref{e:q_3} used to prove \cref{thm:ih3}.  }
    \label{fig:psi_log_inf_pr_ih3}
\end{figure}
%

\subsection{Preliminary Estimates}\label{ss:lemmas_ih3}
We now derive auxiliary results that simplify the choice of the exponent $\alpha$ and of the boundary widths $\delta$ and $\varepsilon$. The first gives estimates on the value of $\bp$ in~\eqref{e:b_choice_ih3}.

\begin{lemma}[Estimates on $\bp$]\label{lemma:b_ih3} Let $\bfield(z)$, $\lm(z)$ and $\bp$ be given by \eqref{e:psi_3}, \eqref{e:q_3} and \eqref{e:b_choice_ih3} with $A$ specified in \eqref{e:A_choice_ih3}. Suppose  that $\alpha \in (\frac12,1)$ and the boundary layer widths $\delta$ and $\varepsilon$ satisfy
	\begin{subequations}
	    \begin{equation}
	    \delta \leq \frac13 \left(\frac12\right)^{\frac{1}{2\alpha -1}}, \qquad
	    \varepsilon \leq \frac13, \qquad
	    (2\alpha-1)\,\delta^{2\alpha-1}B(\epsilon,\delta,\alpha)^2 \leq \varepsilon.
		\tag{\theequation a,b,c}
	    \end{equation}
	   \label{e:eps_bound_conditions}
	\end{subequations}
Then
\begin{equation}
    \frac{\sqrt{2}}{3}\, \delta^2   \leq \bp \leq \frac43\, \delta^2.
    \label{e:b_lower_upper_ih3}
\end{equation}
\end{lemma}

\begin{remark}
Condition \subeqref{e:eps_bound_conditions}{a} and the bounds on $\alpha$ imposed in the Lemma imply that $0 \leq \delta \leq \frac16$. These uniform bounds will be used repeatedly in the following proofs.
\end{remark}

\begin{proof}[Proof of \cref{lemma:b_ih3}]
It suffices to estimate $\volav{|\bfield'-\lm|^2}^{\frac12}$ from above and below. For a lower bound, we can substitute our choices of $\bfield$ and $\lm$ and then estimate  
\begin{eqnarray}
    \volav{ |\bfield' - \lm |^2} = \int^{1}_{\delta} |\bfield'(z)|^2\textrm{d}z \geq \int^{1-\varepsilon}_{\delta} |\bfield'(z)|^2\textrm{d}z
    = A^2 \int^{1-\varepsilon}_{\delta} \left|\frac{1}{z^{\alpha}} + \frac{1}{1-z} \right|^2 \textrm{d}z . \nonumber 
\end{eqnarray}
Dropping the positive term $1/(1-z)$ from the integrand and integrating the rest gives
\begin{equation*}
     \volav{ |\bfield' - \lm |^2} \geq A^2 \left(\frac{\delta^{1-2\alpha}-(1-\varepsilon)^{1-2\alpha}}{2\alpha-1}  \right).
\end{equation*}
For every $\alpha \in (\frac12,1)$, the second term inside the parentheses can be estimated upon observing that constraints \subeqref{e:eps_bound_conditions}{a-b} imply
\begin{equation}
    \label{e:est_1_eps}
    (1-\varepsilon)^{1-2\alpha}\leq \left(\frac23 \right)^{1-2\alpha }  \leq   \frac{1}{2^{2\alpha}}  \delta^{1-2\alpha} \leq \frac12  \delta^{1-2\alpha} .
\end{equation}
Thus, we obtain
\begin{equation} \label{e:lower_bound_L2_ih3}
    \volav{|\bfield' - \lm|^2} 
    \geq A^2 \frac{\frac12 \delta^{1-2\alpha}}{2\alpha - 1} .
\end{equation}
Taking the square root of \eqref{e:lower_bound_L2_ih3} and using \eqref{e:A_choice_ih3} gives $\volav{| \bfield' - \lm|^2}^{\frac12} \geq \frac{\sqrt{6}}{9}\, \delta^2$, which combined with \eqref{e:b_choice_ih3} proves the lower bound on $\bp$ stated in \eqref{e:b_lower_upper_ih3}.

To prove the upper bound on $\bp$, we start by using the inequality $(a+b)^2 \leq 2a^2 + 2b^2$, evaluating exactly the integral of $\bfield'-\lm$, and  dropping the negative terms to get
\begin{align}
    \frac{\volav{ |\bfield' - \lm|^2}}{A^2}  &= \int^{1-\varepsilon}_{\delta} \left|\frac{1}{z^{\alpha}} + \frac{1}{1-z} \right|^2 \textrm{d}z + \frac{ B^2}{\varepsilon} \nonumber \\
    &\leq \int^{1-\varepsilon}_{\delta} \frac{2}{z^{2\alpha}} + \frac{2}{(1-z)^2} \textrm{d}z +  \frac{ B^2}{\varepsilon} \nonumber \\ 
 \text{(since $\alpha > \frac12$)} \quad   &\leq \frac{2\delta^{1-2\alpha}}{2\alpha-1} + \frac{2}{\varepsilon} + \frac{B^2}{\varepsilon}. 
    \label{e:upper_1_est_ih3}
\end{align}
Using assumption \subeqref{e:eps_bound_conditions}{c}, the second and final term in \eqref{e:upper_1_est_ih3} can be estimated from above to arrive at
\begin{equation}
      \frac{\volav{|\bfield' - \lm|^2}}{A^2} \leq   \frac{\delta^{1-2\alpha}}{2\alpha -1}\left(2+ \frac{2}{B^2}+1 \right).
      \label{e:2_norm_est_ih3}
\end{equation}
Next, we observe that for all $\varepsilon \in (0,\frac13)$, $\delta \in (0,\frac16)$, and $\alpha \in (\frac12,1)$ we can estimate
\begin{equation*}
    B(\varepsilon,\delta,\alpha) \geq B\left( \frac13, \frac16, \frac12 \right) = \frac{\sqrt{6}}{3} + \ln{\left(\frac52\right)} > \sqrt{2},
\end{equation*}
so $B^{-2} \leq 1/2$. Using this estimate in \eqref{e:2_norm_est_ih3}, taking a square root, and substituting in the value of $A$ given in \eqref{e:A_choice_ih3} leads to the inequality $\volav{ |\bfield'- \lm|^2}^{1/2} \leq (4\sqrt{3}/9)\, \delta^2$. Combining this with \eqref{e:b_choice_ih3} yields the upper bound on $\bp$ stated in \eqref{e:b_lower_upper_ih3} and concludes the proof of \cref{lemma:b_ih3}.
\end{proof}

The second auxiliary result of this section estimates the upper bound $U(\bp,\bfield,\lm)$ on $\mean{wT}$ given by \cref{th:ih3-bounding-problem} using only the bottom boundary layer width $\delta$.

\begin{lemma}[Estimates on $U(\bp,\bfield,\lm)$]\label{lemma:U_ih3} Let $\bfield(z)$, $\lm(z)$ and $\bp$ be specified by \eqref{e:psi_3}, \eqref{e:q_3} and \eqref{e:b_choice_ih3} with $A$ given by \eqref{e:A_choice_ih3}. Suppose $\alpha$ and the boundary layer widths $\delta$, $\varepsilon$ satisfy the conditions of \cref{lemma:b_ih3}. Then,
  \begin{equation}
        U(\bp,\bfield,\lm) \leq \frac12 - \frac{\delta^2}{18}.
        \label{e:U_final_ih3}
    \end{equation}
\end{lemma}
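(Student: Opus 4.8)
The plan is to mirror the proof of \cref{th:ih1-U-bounds}: bound the quadratic term in $U$ by a multiple of $\bp$ using the elementary estimate \eqref{e:est_2norm_starting_point}, control $\bp$ via \cref{lemma:b_ih3}, and bound $\volav{\bfield}$ from below using the nonnegativity of $\bfield$. Starting from the expression for $U$ in \cref{th:ih3-bounding-problem} and using \eqref{e:est_2norm_starting_point} together with the choice of $\bp$ in \eqref{e:b_choice_ih3}, one obtains
\[
    U(\bp,\bfield,\lm) = \tfrac12 + \frac{1}{4\bp}\volav{|\bfield' - \lm - \bp z|^2} - \volav{\bfield} \leq \tfrac12 + \frac{\bp}{3} - \volav{\bfield},
\]
so it remains to estimate $\bp$ from above and $\volav{\bfield}$ from below in terms of $\delta$ alone.

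For the first estimate I would simply invoke the upper bound $\bp \leq \tfrac{4}{3}\delta^2$ from \eqref{e:b_lower_upper_ih3}, which applies because the hypotheses of \cref{lemma:b_ih3} are assumed here; this gives $\tfrac{\bp}{3} \leq \tfrac{4}{9}\delta^2$. For the second estimate, the key observation is that the function $\bfield$ defined in \eqref{e:psi_3} is nonnegative throughout $[0,1]$: on $[0,\delta]$ one has $\bfield(z)=\delta-z\geq 0$; on $[\delta,1-\varepsilon]$ both terms in \eqref{e:psi_3} are nonnegative, since $z^{1-\alpha}\geq\delta^{1-\alpha}$ (as $z\geq\delta$ and $1-\alpha>0$) and $\ln\!\big(\tfrac{1-z}{1-\delta}\big)\leq 0$ (as $z\geq\delta$); and on $[1-\varepsilon,1]$ nonnegativity follows from $A>0$ together with $B>0$, the latter being the estimate $B>\sqrt2$ already established inside the proof of \cref{lemma:b_ih3} under the present standing assumptions. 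Consequently
\[
    \volav{\bfield} = \int_0^1 \bfield(z)\,\dz \geq \int_0^\delta (\delta - z)\,\dz = \frac{\delta^2}{2},
\]
and combining the two bounds yields $U(\bp,\bfield,\lm) \leq \tfrac12 + \tfrac49\delta^2 - \tfrac12\delta^2 = \tfrac12 - \tfrac1{18}\delta^2$, which is \eqref{e:U_final_ih3}.

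I do not anticipate a genuine obstacle here — the argument is essentially bookkeeping once \cref{lemma:b_ih3} is in hand. The one delicate point is quantitative rather than structural: the prefactor $\tfrac{2\sqrt3}{9}\sqrt{2\alpha-1}$ in the choice of $A$ in \eqref{e:A_choice_ih3} is precisely what forces $\bp = O(\delta^2)$ with a constant small enough that $\tfrac{\bp}{3}\leq\tfrac49\delta^2<\tfrac12\delta^2$, so that the negative contribution $-\volav{\bfield}\leq-\tfrac12\delta^2$ dominates and a strictly negative $O(\delta^2)$ correction to $\tfrac12$ survives; this is also what pins down the exponent $\alpha+\tfrac32$ in \eqref{e:A_choice_ih3}. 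Any larger choice of $A$ would make the estimate on $U$ fail to fall below $\tfrac12$, so there is no slack to exploit and nothing subtler to prove.
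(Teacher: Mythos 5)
Your proposal is correct and follows essentially the same route as the paper: apply \eqref{e:est_2norm_starting_point} to get $U \leq \tfrac12 + \tfrac{\bp}{3} - \volav{\bfield}$, use $\bp \leq \tfrac43\delta^2$ from \eqref{e:b_lower_upper_ih3}, and lower-bound $\volav{\bfield}$ by $\int_0^\delta(\delta-z)\,\dz = \tfrac12\delta^2$ via nonnegativity of $\bfield$. Your explicit piecewise check that $\bfield \geq 0$ is a small addition the paper leaves implicit, but the argument is identical.
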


\begin{proof}
Inequality \eqref{e:est_2norm_starting_point} and the upper bound on $\beta$ from \eqref{e:b_lower_upper_ih3} give
\begin{equation}\label{e:U_lemma_v3}
    U(\bp,\bfield,\lm) \leq \frac12 + \frac{\beta}{3} - \volav{\bfield} \leq \frac12 + \frac{4}{9}\delta^2 - \volav{\bfield}. 
\end{equation}
The result follows upon observing that $\bfield$ is non-negative, so $\volav{\bfield} = \int^{1}_0 \bfield(z) \textrm{d}z \geq \int^{\delta}_0 \bfield(z) \textrm{d}z = \frac12 \delta^2$. 
\end{proof}

\begin{remark}
The right hand side of \eqref{e:U_lemma_v3} can be strictly smaller than $\frac12$ when $\delta^2$ is small only if $A \lesssim \delta^{\alpha + 3/2}$. This observation dictates the choice of $A$ in \eqref{e:A_choice_ih3}. For any fixed value of $R$, one should choose $A \sim \delta^{\alpha + 3/2}$ with a (possibly $R$-dependent) prefactor that optimises the balance between the positive and negative terms, subject to constraints on $A$, $\delta$ and all other parameters that ensure the spectral constraint. To simplify our proof, however, we choose to fix this prefactor \textit{a priori} irrespective of $R$.

\end{remark}

Our final auxiliary result gives the sufficient conditions on $\delta$ and $\varepsilon$ that ensure the spectral constraint (cf. \cref{def:sc}) is satisfied. 

\begin{lemma}[Sufficient conditions for spectral constraint] \label{lemma:spectral_ih3} Let $\bfield(z)$, $\lm(z)$ and $\bp$ be specified by \eqref{e:psi_3}, \eqref{e:q_3} and \eqref{e:b_choice_ih3} with $A$ given by \eqref{e:A_choice_ih3}. Suppose that $\alpha$ and the boundary layer widths satisfy the conditions of \cref{lemma:b_ih3}. Further, let
\begin{equation}
     h(\alpha) =2(2\alpha -1 )(1-\alpha^2).
    \label{e:h_a_lem}
\end{equation}
and suppose that
\begin{subequations}
    \begin{gather}
        \delta \leq \varepsilon,\\
        \delta \leq   h(\alpha)  \,   R^{-2},\\
        \frac{\sqrt{2\alpha - 1}}{(1-\alpha)^2}\,\delta^{\alpha-\frac12}\varepsilon^3\ln^2{(\delta^{-1})} \leq \frac43 \sqrt{6} R^{-1}.
    \end{gather}
    \label{e:delta_eps_lemma}
\end{subequations}
Then, the pair $(\bp,\bfield)$ satisfies the spectral constraint.
\end{lemma}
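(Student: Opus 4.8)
The plan is to follow the proof of \cref{th:ih1-sc} closely, replacing the Hardy--Rellich inequality \subeqref{doering_estimate}{a} near the bottom boundary by the version from \cite{Whitehead2014mixed} suited to the milder $z^{-\alpha}$ singularity of $\bfield'$ in the bulk. Using the explicit form \eqref{e:psi_3}, I would first write
\[
\bfield'(z) = A\,z^{-\alpha} + \frac{A}{1-z} - f(z) - A\,g(z),
\]
where $f(z) = [\,1 + A z^{-\alpha} + \tfrac{A}{1-z}\,]\,\mathbbm{1}_{(0,\delta)}(z)$ records the bottom-layer defect (there $\bfield'\equiv-1$, since $\bfield(z)=\delta-z$) and $g(z) = [\,z^{-\alpha} + \tfrac{1}{1-z} + \tfrac{B}{\varepsilon}\,]\,\mathbbm{1}_{(1-\varepsilon,1)}(z)$ the top-layer defect. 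Splitting $\bp\smallvolav{|\nabla T|^2}$ in half, the left-hand side of the spectral constraint splits as $\mathcal{F}\{T\} + \mathcal{G}\{T\}$, where $\mathcal{F}$ collects the term $A\smallvolav{z^{-\alpha}\,wT}$, the defect $-\smallvolav{f\,wT}$, and $\tfrac{\bp}{2}\smallvolav{|\nabla T|^2}$, and $\mathcal{G}$ collects $A\smallvolav{(1-z)^{-1}wT}$, the defect $-A\smallvolav{g\,wT}$, and $\tfrac{\bp}{2}\smallvolav{|\nabla T|^2}$. It then suffices to prove $\mathcal{F}\geq0$ and $\mathcal{G}\geq0$ for every $T\in\Tspace_3$, with $w=-R\Delta^{-2}\Delta_h T$.

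The functional $\mathcal{G}$ is controlled exactly as the functional of the same name in the proof of \cref{th:ih1-sc}: apply \subeqref{doering_estimate}{b}; use the Poincar\'e estimate $\smallvolav{T^2}_h \leq (1-z)\smallvolav{|\nabla T|^2}$, valid because $T$ vanishes at $z=1$ under \eqref{bc_T_IH3}; then Cauchy--Schwarz and Young's inequality; and finally bound $\smallvolav{g^2(1-z)^4}$ and $B$ from above using $\delta\leq\varepsilon\leq\tfrac13$ and $\alpha\in(\tfrac12,1)$. Inserting the lower bound on $\bp$ from \cref{lemma:b_ih3} and our choice of $A$ from \eqref{e:A_choice_ih3}, one finds that $\mathcal{G}\geq0$ is implied by condition \subeqref{e:delta_eps_lemma}{c}.

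For $\mathcal{F}$ the Neumann condition at $z=0$ rules out a $z$-weighted Poincar\'e estimate, so I would instead invoke the Hardy--Rellich inequality of \cite{Whitehead2014mixed}, which for the weight $z^{-\alpha}$ and fields satisfying \eqref{e:w-bcs} bounds $\smallvolav{z^{-\alpha}\,wT}$ from below by $c(\alpha)R^{-1}$ times a weighted $L^2$ norm of $w$ vanishing to sufficiently high order at $z=0$; the $\alpha$-dependence of $c(\alpha)$ and of the weighted integrals of $z^{-\alpha}$, together with the factor $\sqrt{2\alpha-1}$ in \eqref{e:A_choice_ih3}, is what produces $h(\alpha)=2(2\alpha-1)(1-\alpha^2)$. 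The defect $\smallvolav{f\,wT}$ is then estimated by Cauchy--Schwarz and Young's inequality, controlling $T$ only through $\smallvolav{T^2}_h\leq(1-z)\smallvolav{|\nabla T|^2}\leq\smallvolav{|\nabla T|^2}$ and $w$ through that same weighted norm. The estimate closes because on $(0,\delta)$ the choice $\bfield(z)=\delta-z$ makes $\bfield'\equiv-1$, so $f$ is $O(1)$ in the relevant weighted integral of $f^2$ (the $O(A)$ pieces being negligible), not $O(\delta^{-1})$ as in IH1; using the lower bound on $\bp$ from \cref{lemma:b_ih3}, the defect is absorbed by the Hardy--Rellich term precisely when $\delta\lesssim h(\alpha)R^{-2}$, i.e.\ when \subeqref{e:delta_eps_lemma}{b} holds. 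Condition \subeqref{e:delta_eps_lemma}{a} enters along the way, chiefly to replace logarithms of $\varepsilon$ by logarithms of $\delta$ and to bound $B$.

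The main obstacle is the analysis of $\mathcal{F}$: one must identify the precise Hardy--Rellich inequality of \cite{Whitehead2014mixed} for the weight $z^{-\alpha}$, check that it supplies a positive $O(R^{-1})$ term of high enough vanishing order at the bottom to dominate the $O(1)$ bottom-layer defect, and then push the $\alpha$-dependent constants through the $f^2$ and $g^2$ integrals carefully enough that \subeqref{e:delta_eps_lemma}{a}--\subeqref{e:delta_eps_lemma}{c}, together with the hypotheses carried over from \cref{lemma:b_ih3}, are exactly what is needed. The remaining work is a routine, if lengthy, repetition of the IH1 estimates with $z^{-1}$ replaced by $z^{-\alpha}$ and $\Tspace_1$ replaced by $\Tspace_3$.
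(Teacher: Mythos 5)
Your decomposition of the spectral constraint into $\mathcal{F}\{T\}+\mathcal{G}\{T\}$, with the same bottom- and top-layer defect functions $f$ and $g$, and your treatment of $\mathcal{G}$ (Hardy--Rellich \subeqref{doering_estimate}{b}, the estimate $\smallvolav{T^2}_h\leq(1-z)\smallvolav{|\nabla T|^2}$ from the Dirichlet condition at $z=1$, Cauchy--Schwarz, Young, the lower bound on $\bp$, and the $\varepsilon^3$ condition) coincide with the paper. The gap is in $\mathcal{F}$, which you yourself flag as the main obstacle. The result the paper imports from Whitehead--Wittenberg (\cref{lemma_3_main}) is \emph{not} a standalone lower bound on $\smallvolav{wT/z^{\alpha}}$ by a weighted $L^2$ norm of $w$; it is a positivity statement for the combined form $\volav{\tfrac12|\nabla T|^2+\mu\,wT/z^{\alpha}-\varphi\, wT}$, valid under a smallness condition on the first-moment quantity $\volav{\varphi\, z^{1+\alpha/2}}^2$ relative to $\mu/R$. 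The paper applies it with $\varphi=f/\bp$ and $\mu=A/\bp$, so the bottom-layer defect is absorbed \emph{inside} the Fourier-space argument, and verifying the hypothesis reduces to comparing $\volav{f z^{1+\alpha/2}}^2\lesssim\delta^{4+\alpha}$ with $\bp A R^{-1}\sim\sqrt{(2\alpha-1)}\,\delta^{\alpha+7/2}R^{-1}$ (times $\sqrt{1-\alpha^2}$), which is exactly how the condition $\delta\leq h(\alpha)R^{-2}$ and the function $h(\alpha)=2(2\alpha-1)(1-\alpha^2)$ arise.

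Your proposed two-step scheme for $\mathcal{F}$ — a standalone Hardy--Rellich-type bound for the weight $z^{-\alpha}$ followed by a separate Cauchy--Schwarz/Young absorption of $\smallvolav{f\,wT}$ — does not close. With only the Neumann condition at $z=0$, the best uniform control on $T$ in the bottom layer is $\smallvolav{T^2}_h\leq\smallvolav{|\nabla T|^2}$ (no factor of $z$), and the natural $z^{-\alpha}$ analogue of \subeqref{doering_estimate}{a} would supply at best $\smallvolav{wT/z^{\alpha}}\gtrsim R^{-1}\smallvolav{w^2/z^{2+\alpha}}$. Your Young step then requires $\tfrac{1}{2\bp}\volav{f^2 z^{2+\alpha}}\lesssim A\,R^{-1}$; since $f=O(1)$ on $(0,\delta)$, $\bp\sim\delta^2$ and $A\sim\delta^{\alpha+3/2}$ by \eqref{e:A_choice_ih3} and \cref{lemma:b_ih3}, this reads $\delta^{1+\alpha}\lesssim\delta^{\alpha+3/2}R^{-1}$, i.e.\ $R\lesssim\delta^{1/2}$ — impossible for large $R$. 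To close the argument this way you would need a weighted norm $\smallvolav{w^2/z^{p}}$ with $p\geq 3+\alpha$, which the available inequalities do not provide; the smallness of $f$ compared with the IH1 case does not rescue the scaling, because the loss comes from the missing $z$-weight on $T$, not from the size of $f$. So your assertion that the defect "is absorbed by the Hardy--Rellich term precisely when $\delta\lesssim h(\alpha)R^{-2}$" is unsubstantiated and, as sketched, false: the combined quadratic-form lemma of \cite{Whitehead2014mixed} is genuinely needed here, not merely a packaging convenience.
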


Unlike the analogous result obtained in \cref{ss:lemmas-IH1}, \cref{lemma:spectral_ih3} cannot be proven using only the Hardy--Rellich inequalities stated in \cref{lemma:doering_estimate}. The lack of a fixed boundary temperature at $z=0$, makes it impossible to gain sufficient control on the contribution of the bottom boundary layer to the quadratic form in~\cref{e:sc}. This difficulty can be overcome using
the following result, obtained as a particular case of a more general analysis by Whitehead and Wittenberg~\citep[][Eqs. (59) \& (77)]{Whitehead2014mixed}, which upon setting (in their notation) $\nu_1 = \frac{\alpha}{2} - \nu_2$ and $\nu_2 = \frac12\left(-1+\sqrt{2(1-\alpha^2)}\right)$.
\begin{lemma}[Adapted from~\cite{Whitehead2014mixed}]\label{lemma_3_main}
Fix $\alpha \in (\frac12,1)$ and $\mu,R>0$. Suppose $\varphi(z): [0,1] \rightarrow \bR$, be a non-negative function, satisfying
\begin{equation}\label{e:lemma_cons_WW}
    \volav{ \varphi(z) z^{1+\alpha/2}}^2  \leq \frac{\mu}{R} \frac{\sqrt{2}(3-\alpha)(2+\alpha) \sqrt{1-\alpha^2}}{3+\alpha}.
\end{equation}
Then, for every $w$ and $T$ solving~\eqref{e:w_and_T_eq} subject to the boundary conditions \eqref{e:w-bcs},
\begin{equation}
     \volav{\frac12 |\nabla T|^2  +
      \mu\frac{wT }{z^{\alpha}}  - \varphi(z) wT }\geq 0.
\end{equation}
\end{lemma}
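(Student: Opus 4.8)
The plan is to obtain the inequality as the specialization of the two-parameter family of quadratic-form estimates of Whitehead and Wittenberg~\cite{Whitehead2014mixed}, reconstructing the two mechanisms on which that family rests. Writing $\nu := \nu_1+\nu_2 = \tfrac\alpha2$, the weight $z^{-\alpha}$ in the statement is $z^{-2\nu}$ and the moment weight $z^{1+\alpha/2}$ in~\eqref{e:lemma_cons_WW} is $z^{1+\nu}$; these identities are the signal that the substitution $\nu_1 = \tfrac\alpha2-\nu_2$, $\nu_2 = \tfrac12(-1+\sqrt{2(1-\alpha^2)})$ places us exactly on the relevant slice of their family. First I would horizontally Fourier-decompose $w$ and $T$, reducing~\eqref{e:w_and_T_eq} to the clamped fourth-order problems $(D^2-k^2)^2\hat w_k = R k^2 \hat T_k$ with $\hat w_k = D\hat w_k = 0$ at $z\in\{0,1\}$, where $D=\mathrm{d}/\mathrm{d}z$. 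All volume averages then become sums over horizontal wavenumbers $k$ of one-dimensional $z$-integrals, and it suffices to prove the inequality mode by mode; the zero mode has $w\equiv 0$ and contributes only the nonnegative term $\tfrac12\volav{|\nabla T|^2}$.

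The first ingredient is a weighted Hardy--Rellich inequality generalizing \cref{lemma:doering_estimate} from the weight $z^{-1}$ to $z^{-\alpha}$, namely
\begin{equation*}
\volav{\frac{wT}{z^{\alpha}}} \ge \frac{\kappa(\alpha)}{R}\,\volav{\frac{w^2}{z^{2+\alpha}}}
\end{equation*}
for an explicit $\kappa(\alpha)>0$. I would prove this exactly as the plain case is proved: substitute $R k^2 \hat T_k = (D^2-k^2)^2\hat w_k$ into the integrand, integrate by parts against the weight $z^{-\alpha}$ using the clamped conditions to discard every boundary term, and reduce the result to one-dimensional weighted Hardy/Rellich inequalities for the clamped function $\hat w_k$. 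The exponents in those one-dimensional inequalities are $\nu_1$ and $\nu_2$, and the role of the value $\nu_2 = \tfrac12(-1+\sqrt{2(1-\alpha^2)})$ is to make the competing weighted terms combine into a perfect square, producing the sharp constant; under $\nu_1+\nu_2=\tfrac\alpha2$ the admissibility range of the one-dimensional constants collapses precisely to $\alpha\in(\tfrac12,1)$ (with $\nu_2>-\tfrac12$ holding exactly when $\alpha<1$), which is why this is the range in the hypotheses.

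The second ingredient absorbs the indefinite term $\volav{\varphi wT}$ using the positive term just extracted together with $\tfrac12\volav{|\nabla T|^2}$. The aim is a Cauchy--Schwarz estimate of the schematic form
\begin{equation*}
\volav{\varphi wT} \le C\,\volav{\varphi\, z^{1+\alpha/2}}\,\volav{|\nabla T|^2}^{\frac12}\volav{\frac{w^2}{z^{2+\alpha}}}^{\frac12},
\end{equation*}
in which the weight $z^{1+\alpha/2}$ is recovered as the first moment appearing in~\eqref{e:lemma_cons_WW}. A single application of Young's inequality then splits the right-hand side into a piece $\tfrac12\volav{|\nabla T|^2}$ and a piece proportional to $\volav{w^2/z^{2+\alpha}}$, and the smallness hypothesis~\eqref{e:lemma_cons_WW} is exactly the condition guaranteeing that this second piece is dominated by the term $\tfrac{\mu\kappa}{R}\volav{w^2/z^{2+\alpha}}$ supplied by the weighted Hardy--Rellich inequality. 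Combining the three estimates yields the claimed nonnegativity.

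The hard part is twofold. The delicate structural point is the estimate above: unlike the IH1 argument, here no boundary condition on $T$ is available at $z=0$, so the pointwise control of $\volav{wT}_h(z)$ by $z^{1+\alpha/2}$ must be extracted from the $w$--$T$ relation itself rather than from a Poincar\'e estimate on $T$ — this is precisely the obstruction flagged before the lemma. The remaining difficulty is the bookkeeping of constants, since the final Young step must reproduce the exact prefactor $\sqrt2(3-\alpha)(2+\alpha)\sqrt{1-\alpha^2}/(3+\alpha)$. Rather than rebuild every constant, the efficient route is to quote~\cite[Eqs.~(59)~and~(77)]{Whitehead2014mixed} for the general two-parameter inequality and verify three things: that $\nu_1=\tfrac\alpha2-\nu_2$ and $\nu_2=\tfrac12(-1+\sqrt{2(1-\alpha^2)})$ lie in their admissible range for every $\alpha\in(\tfrac12,1)$; that this choice turns the $wT$ weight into $z^{-\alpha}$ and the moment weight into $z^{1+\alpha/2}$; and that their smallness constant simplifies, after the substitution, to the stated expression. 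This last algebraic collapse is where essentially all the real computation lies.
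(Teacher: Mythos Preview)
Your proposal is correct and takes essentially the same approach as the paper: the paper does not give an independent proof of this lemma but simply states it as the specialization of~\cite[Eqs.~(59) and~(77)]{Whitehead2014mixed} under the substitution $\nu_1 = \tfrac{\alpha}{2}-\nu_2$, $\nu_2 = \tfrac12(-1+\sqrt{2(1-\alpha^2)})$, which is exactly the ``efficient route'' you describe at the end. Your additional discussion of the underlying Hardy--Rellich and Cauchy--Schwarz mechanisms is accurate context for what the cited equations encode, but the paper itself omits all of it and relies purely on the citation.
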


We are now ready to prove \cref{lemma:spectral_ih3}.

\begin{proof}[Proof of \cref{lemma:spectral_ih3}]
Let $\mathbbm{1}_{(a,b)}$ denote the indicator function of the interval $(a,b)$ and define the functions
\begin{subequations}
\begin{gather}
f(z) :=  \left[1 + \frac{A(\delta,\alpha)}{z^{\alpha}} + \frac{A(\delta,\alpha)}{1-z} \right] \mathbbm{1}_{(0,\delta)}(z) ,\label{e:j}\\
g(z) :=  \left[\frac{B(\varepsilon,\delta,\alpha)}{\varepsilon}+ \frac{1}{z^\alpha} + \frac{1}{1-z} \right] \mathbbm{1}_{(1-\varepsilon,1)}(z)  . \label{e:k}
\end{gather}
\end{subequations}
Given our choice of $\bfield$, we can rewrite the spectral constraint as 
\begin{equation*}
    0 \leq \volav{\bp |\nabla T|^2 + \bfield' w T } = \mathcal{F}\{T\} + \mathcal{G}\{T\}, 
\end{equation*}
where 
\begin{subequations}
\begin{gather}
    \mathcal{F}\{T\} := \frac{\bp}{2} \volav{|\nabla T|^2}    + A \volav{\frac{w T }{z^{\alpha}}}\,  -  \volav{f(z)\, w T }, \\
    \mathcal{G}\{T\} := \frac{\bp}{2} \volav{|\nabla T|^2}    + A \volav{ \frac{w T }{1-z}} -  \,   A\volav{ g(z)\, w T}.
\end{gather}
\end{subequations}
Observe that $\mathcal{F}\{T\}$ and $\mathcal{G}\{T\}$ are functionals of the temperature field only because $w$ is determined as a function of $T$ by solving~\eqref{e:w_and_T_eq} subject to the boundary conditions in~\eqref{e:w-bcs}. We shall prove that $\mathcal{F}\{T\}$ and $\mathcal{G}\{T\}$ are individually non-negative for all temperatures $T$ from the space $\Tspace_3$, which is sufficient for the spectral constraint to hold.  

To prove that $\mathcal{F}\{T\} \geq 0$, we apply \cref{lemma_3_main} with $\varphi(z) = f(z)/\bp $ and $\mu = A / \bp$, where $f$ is given by \eqref{e:j} and $A$ given by \eqref{e:A_choice_ih3}. We therefore need to check that
 \begin{equation} \label{eq:f_ww_cond}
 \volav{f(z)z^{1+\frac{\alpha}{2}}}^2 \leq \frac{\beta A}{R} \frac{\sqrt{2}(3-\alpha)(2+\alpha)\sqrt{1-\alpha^2}}{3+\alpha}.
 \end{equation}
 To verify this inequality, we first bound from above the weighted integral on the left-hand side. By assumption we have $0\leq z \leq \delta \leq \frac{1}{6}$ and $\alpha \in (\frac12, 1)$, from which we obtain  $\frac{1}{1-z} \leq  \frac{1}{z^{\alpha}}$. Using this estimate and the definition of $A$ from \eqref{e:A_choice_ih3} we can therefore estimate
\[
    \volav{f(z)z^{1+\frac{\alpha}{2}}} \leq   \int^{\delta}_0 z^{1+\frac{\alpha}{2}} + 2A z^{1-\frac{\alpha}{2}}\textrm{d}z =  \left[\frac{2}{4+\alpha} + \frac{8\sqrt{3}}{9} \frac{ \sqrt{2\alpha-1}}{(4-\alpha)}\delta^{\frac32 }\right] \delta^{2+\frac{\alpha}{2}} .
\]    
Using again that $0 \leq \delta \leq \frac16$ and $\alpha \in (\frac12,1)$, the bracketed expression can be bounded from above to obtain
\begin{equation} \label{e:est_integral_f}
     \volav{f(z)z^{1+\frac{\alpha}{2}}} \leq  \left[ \frac{4}{9} + \frac{8 \sqrt{3}}{9} \cdot \frac{1}{3} \cdot \left(\frac16\right)^\frac32 \right] \delta^{2+\frac{\alpha}{2}} < \frac12 \delta^{2 + \frac{\alpha}{2}}. 
\end{equation}

Next, we bound from below the right-hand-side of \eqref{eq:f_ww_cond}. Using the lower bound on $\bp$ from \cref{lemma:b_ih3}, the definition  \eqref{e:A_choice_ih3} of $A$, and the fact that $\alpha \in (\frac12,1)$, we have 
\begin{align}
\frac{\beta A \sqrt{2} (3-\alpha)(2+\alpha)}{(3+\alpha)} &\geq \frac{4 \sqrt{3}}{27} \delta^{\alpha + \frac{7}{2}} \sqrt{2\alpha-1} \cdot \frac{(3-\alpha)(2+\alpha)}{(3+\alpha)} \nonumber \\
& \geq \frac{5}{9\sqrt{3}} \delta^{\alpha + \frac72} \sqrt{2\alpha-1}. \label{eq:spec_lb}
\end{align}
Combining \eqref{e:est_integral_f} and \eqref{eq:spec_lb}, we conclude that \eqref{eq:f_ww_cond}  holds if  
\[
\delta^{\frac12} \leq \frac{20}{9 \sqrt{3}} \frac{\sqrt{(2\alpha-1)(1-\alpha^2)}}{R} \leq \sqrt{2} \frac{\sqrt{(2\alpha-1)(1-\alpha^2)}}{R} ,
\]
which is true because $\delta$ satisfies \subeqref{e:delta_eps_lemma}{a} by assumption. This proves that $\mathcal{F}\{T\} \geq 0$, as desired.

We now prove that $\mathcal{G}\{T\}$ is also nonnegative. This can be done following the same steps used in \cref{ss:lemmas-IH1}. The Hardy--Rellich inequality \subeqref{doering_estimate}{b} gives 
\begin{equation}
    \mathcal{G}\{T\} \geq \frac{\bp}{2} \volav{|\nabla T|^2} + \frac{4A}{R} \volav{ \frac{w^2}{(1-z)^3}} - A \volav{g(z) wT}.
    \label{e:T_upper_ih3}
\end{equation}
To estimate the last term, as before we use the inequality $\volav{T^2}_h\leq \smash{(1-z)\volav{ |\nabla T|^2} }$, the Cauchy--Schwarz inequality, and Young's inequality:
\begin{align*}
        \volav{g(z) w T} 
        \leq \frac{\bp}{2A} \volav{ |\nabla T|^2}  + \frac{A}{2\bp} \volav{ g(z)^2 (1-z)^4 } \volav{ \frac{w^2}{(1-z)^3} } .    
\end{align*}
Using the lower bound on $\bp$ from \eqref{e:b_lower_upper_ih3} gives
\begin{equation*}
     \volav{g(z) w T}  \leq \frac{\bp}{2A} \volav{ |\nabla T|^2}  + \frac{3\sqrt{2}\, A}{4 \delta^2}\volav{ g(z)^2 (1-z)^4 } \volav{ \frac{w^2}{(1-z)^3} } ,
\end{equation*}
which can be substituted into \eqref{e:T_upper_ih3} along with the value of $A$ from \eqref{e:A_choice_ih3} to obtain
\begin{equation} \label{e:T_exp_ih3}
    \mathcal{G}\{T\} \geq  \frac{\sqrt{6(2\alpha - 1)}\, A}{6} \left( \frac{4\sqrt{6}}{\sqrt{2\alpha-1}\, R} -\delta^{\alpha - \frac12} \volav{ g(z)^2 (1-z)^4 }\right)\volav{ \frac{|w|^2}{(1-z)^3}} .
\end{equation}
To conclude the argument we need to show that term in the parentheses is non-negative. To demonstrate this, we first estimate from above the function $g(z)$ given in \eqref{e:k} on the interval $(1-\varepsilon,1)$. 
Our assumption that $\delta\leq \varepsilon$ implies that $1-\varepsilon\leq 1-\delta \leq1$ and $\ln(\frac{1}{\varepsilon})\leq \ln(\frac{1}{\delta})$. Thus, for all $\delta\leq\frac16$ and $\alpha \in (\frac12,1)$ the first term in $g(z)$ can be bounded as
\begin{align}\label{e:esti_g_int}
    \frac{B}{\varepsilon} \leq \frac{1}{\varepsilon}\left(\frac{1}{1-\alpha} + \ln{\left(\frac{1}{\delta} \right)}\right)
    &\leq \frac{2\ln{\left(\frac{1}{\delta}\right)}}{\varepsilon(1-\alpha)} . 
\end{align}
To estimate the other terms in $g(z)$, we observe that the assumptions $\varepsilon\leq \frac13$ and $\alpha\in(\frac12,1)$ imply that  $\frac{1}{z^{\alpha}} \leq \frac{1}{2(1-\alpha)(1-z)}$ and  $\frac{1}{1-z} \leq \frac{1}{2(1-\alpha)(1-z)}$. Consequently, we arrive at
\begin{equation}\label{e:g_1st_est}
    g(z) \leq \frac{2\ln{(\frac{1}{\delta})}}{\varepsilon(1-\alpha)} + \frac{1}{(1-\alpha)(1-z)}.
\end{equation}
Finally, using \eqref{e:g_1st_est} and evaluating the integral in the parentheses of \eqref{e:T_exp_ih3} with the fact that $\ln(\frac{1}{\delta}) \leq  \ln^{2}{(\frac{1}{\delta})}$ and $\frac13 \leq \frac13 \ln{(\frac{1}{\delta})}$ gives
\begin{eqnarray}
    \delta^{\alpha - \frac12} \volav{g(z)^2(1-z)^4} &\leq& \delta^{\alpha - \frac12} \int^{1}_{1-\varepsilon} \left(\frac{2
    \ln{\left(\frac{1}{\delta}\right)}}{\varepsilon(1-\alpha)} + \frac{1}{(1-\alpha)(1-z)} \right)^2(1-z)^4\,\textrm{d}z  \nonumber\\
    &=& \frac{\delta^{\alpha - \frac12} \varepsilon^{3}}{(1-\alpha)^2} \left[\frac45 \ln^{2}{\left(\frac{1}{\delta}\right)} + \ln{\left(\frac{1}{\delta}\right)} + \frac13 \right]  \nonumber\\
    &\leq&   \frac{3\,\delta^{\alpha - \frac12}\varepsilon^{3} \ln^{2}{\left(\frac{1}{\delta}\right)}}{(1-\alpha)^2} \nonumber \\ {(\text{by \subeqref{e:delta_eps_lemma}{b}})} \qquad
    &\leq& \frac{4\sqrt{6}}{\sqrt{2\alpha - 1}\, R}  .
    \label{e:Phi_integral_est}
\end{eqnarray}
This concludes the proof of \cref{lemma:spectral_ih3}.

\end{proof}

\subsection{Proof of \texorpdfstring{\cref{thm:ih3}}{Theorem 2} } 
\label{ss:analtic_bound_ih3}
To prove \cref{thm:ih3}, we only need to specify $R$-dependent values for $\alpha$ and for the boundary layer widths $\delta$ and $\varepsilon$ that satisfy the conditions of \cref{lemma:b_ih3,,lemma:U_ih3,,lemma:spectral_ih3}. 

\begin{figure}
    \centering
    \includegraphics[scale=1]{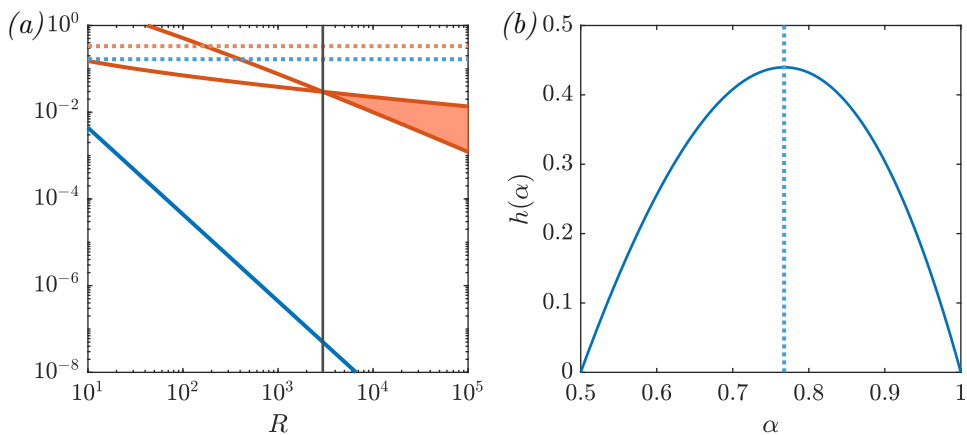}
    \begin{tikzpicture}[overlay]
        \node at (-12.5,5.4) {\textit{(a)}};
        \node at (-6,5.4) {\textit{(b)}};
    \end{tikzpicture}
    \caption{\textit{(a)} Variation with \Ra\ of the allowed values for the bottom boundary layer width $\delta$ \eqref{e:boundary_layers_ih3} ({\color{matlabblue}\solidrule}) and the feasible region of $\varepsilon$ \eqref{e:eps_range_alpha} (shaded region). Also shown are uniform upper bounds of $\delta\leq \frac{1}{6}$ ({\color{matlabblue}\dottedrule}), and $\varepsilon\leq\frac13$ ({\color{matlabred}\dottedrule}) imposed on the variables. A black vertical line marks the Rayleigh number, $\Ra_0\approx 2960.89 $ above which all constraints on are satisfied. \textit{(b)} Plot of the function $h(\alpha)$  \eqref{e:h_a_lem} ({\color{matlabblue}\solidrule}).
    Shown also is the optimal $\alpha^* = (1+\sqrt{13})/6$ ({\color{matlabblue}\dottedrule}).  }
    \label{fig:ih3_conditions}
\end{figure}

Motivated by the desire to minimize the upper bound on $U(\bp,\bfield,\lm)$ stated in \cref{lemma:U_ih3}, we choose 
\begin{equation}
    \delta =  h(\alpha^*)  R^{-2}
    \label{e:boundary_layers_ih3}
\end{equation}
where $\alpha^* = (1+\sqrt{13})/6$ is the unique maximizer of $h(\alpha)$ on the interval $(\frac12,1)$ (see \cref{fig:ih3_conditions}(b)). 
With these choices, conditions \subeqref{e:eps_bound_conditions}{c} and \subeqref{e:delta_eps_lemma}{b} require $\varepsilon$ to satisfy
\begin{eqnarray}
    c_0 R^{-4\alpha^* +2} B(\delta,\varepsilon,\alpha^*)^2 \leq \varepsilon \leq c_2 R^{-\frac{2-2\alpha^*}{3}} \ln^{-\frac23}{(\delta^{-1})},
\end{eqnarray}
where $c_0$, $c_1$ and $c_2$ are non-negative constants independent of \Ra. Using the upper bound on $B$ from~\eqref{e:esti_g_int}, it suffices to find $\varepsilon$ such that
\begin{eqnarray}
     \frac{4c_0}{(1-\alpha^*)^2} R^{-4\alpha^* +2}  \ln^2{\left(\frac{R^2}{h(\alpha^*)}\right)} 
     \leq \varepsilon 
     \leq c_2 R^{-\frac{2-2\alpha^*}{3}} \ln^{-\frac23}\left( \frac{R^2}{h(\alpha^*)} \right).
    \label{e:eps_range_alpha}
\end{eqnarray}
\Cref{fig:ih3_conditions} shows that suitable values of $\varepsilon$ exist when $\Ra \geq \Ra_0 \approx 2960.89$. One can also check that for all such values of \Ra\ and any $\varepsilon$ in the range given by \cref{e:eps_range_alpha} one has $\delta \leq \frac16$,  $\varepsilon \leq \frac13$, and $\delta \leq \varepsilon$. We have therefore verified all conditions of \cref{lemma:b_ih3,,lemma:U_ih3,,lemma:spectral_ih3}.

To conclude the proof of \cref{thm:ih3}, we simply substitute our choice of $\delta$ from \eqref{e:boundary_layers_ih3} into \cref{lemma:U_ih3} to find the upper bound $\mean{wT} \leq U(\bp,\bfield,\lm) \leq \frac12 - c R^{-4}$, 
where $c = \frac{h^2(\alpha^*)}{18}  \simeq 0.0107$.

\begin{remark}
 The top boundary layer width $\varepsilon$ is not uniquely determined in our construction. Its optimal value could be obtained by considering more refined estimates on $U(\bp,\bfield,\lm)$ than \cref{lemma:U_ih3}, but we expect such estimates to provide only higher-order corrections to the eventual bound on \wT.
\end{remark}

\section{Conclusions}\label{sec:conclusion}

We have proven upper bounds on the mean vertical convective heat transport \wT\ for two configurations of infinite-Prandtl-number convection driven by uniform internal heating between no-slip boundaries. In the first case, where both boundaries are held at a constant temperature, we find $\wT \leq \frac12 - O(\Ra^{-2})$ for all sufficiently large \Ra\ (cf. \cref{thm:ih1-main-result}). This result implies that the outward heat fluxes through the top and bottom are bounded by $\mathcal{F}_T \leq 1 - O(\Ra^{-2})$ and $\mathcal{F}_B \geq O(\Ra^{-2})$, respectively. In the second configuration, where the top boundary remains isothermal but the bottom one is insulating (no-flux condition), we find $\wT \leq \frac12 - O(\Ra^{-4})$ (cf. \cref{thm:ih3}). In this case, we conclude from \eqref{e:Nu_ih3} that the Nusselt number is bounded above by $\Nu \leq O(\Ra^4)$. Explicit suboptimal values for the prefactors in the Rayleigh-dependent terms were also obtained (cf. \cref{rem:prefactors-ih1,rem:prefactors-ih3}).

All of these results were derived using the background method, which we formulated as a search over quadratic auxiliary functionals of the form \eqref{e:V-IH1} and augmented using a minimum principle for the fluid's temperature. Similar to previous works on infinite-Prandtl-number Rayleigh--B\'enard convection, the background temperature fields used vary linearly in thin boundary layers, and increase either logarithmically (IH1 configuration) or as a power law (IH3 configuration) in the bulk of the fluid layer. This bulk behaviour enables us to use Hardy--Rellich inequalities  from~\cite{whitehead2011internal} (\cref{lemma:doering_estimate}) and an integral estimate from~\cite{Whitehead2014mixed} (\cref{lemma_3_main}) that were originally developed in the context of Rayleigh--B\'enard convection. In contrast to the latter, however, our background fields lack symmetry in the vertical direction, which reflects the lack of vertical symmetry of IH convection problems.

In our choice of background fields, allowing the bottom boundary layer width $\delta$ to be smaller than the top boundary layer width $\varepsilon$ is essential to prove \cref{thm:ih1-main-result,thm:ih3}.
For the IH1 configuration, forcing $\delta = \varepsilon$ worsens the \Ra-dependent correction to $\frac12$ in \cref{thm:ih1-main-result} to $O(\Ra^{-2}\ln^{-2}{(\Ra)})$. For the IH3 configuration, instead, no upper bound on \wT\ that asymptotes to $\frac12$ from below as \Ra\ increases can be obtained with our method of proof if $\delta=\varepsilon$. This boundary layer asymmetry contrasts the construction of background fields for IH convection at finite \Pr~\cite{kumar2021ihc}, where taking $\delta \neq \varepsilon$ appears to bring no qualitative improvement to the exponentially-varying upper bounds on \wT\ . 
%
We also stress that the \textit{a priori} uniform limits on the allowed values of $\delta$ and $\varepsilon$ imposed throughout \cref{sec:IH1,sec:IH3} have been chosen with the only goal of simplifying the algebra in our proofs. Varying these limits affects the prefactors of the \Ra-dependent terms in \cref{thm:ih1-main-result,thm:ih3}, as well as the range of \Ra\ values for which they hold. Both could be optimized further if desired.

One crucial difference between our constructions for the IH1 and IH3 configurations is the leading-order behaviour of the background temperature fields---or, more precisely, of the function $\bfield(z)$---as the bottom boundary layer edge is approached from the bulk region. 
For the IH1 configuration, it suffices for $\bfield$ to have the same logarithmic behaviour as the background temperature fields used to study Rayleigh--B\'enard convection~\cite{doering2006bounds}. For the IH3 configuration, however, this choice does not work due to the loss of control on the temperature of the bottom boundary, and we are instead forced to take $\bfield(z) \sim z^{1-\alpha}$ with $\alpha \in (0,1)$. This modification was already used in the context of Rayleigh--B\'enard convection between imperfectly conducting boundaries~\cite{Whitehead2014mixed}, where the optimal exponent $\alpha$ depended logarithmically on the Rayleigh number. Within our proof, instead, the optimal $\alpha$ is a constant. Whether this difference is due to our choice of estimates or the inherent differences between Rayleigh-B\'enard and IH convection remains an open question.

More generally, we do not know whether the upper bounds on \wT\ stated in \cref{thm:ih1-main-result,thm:ih3} are qualitatively sharp. To check if the $O(\Ra^{-2})$ and $O(\Ra^{-4})$ corrections to the asymptotic value of $\frac12$ are optimal within our bounding framework, one could employ a variation of the computational approach taken in~\cite{arslan2021IH1} and optimize the tunable parameters $\bfield$, $\bp$, and $\lm$ in full (see also~\cite{Fantuzzi2022} and references therein for more details on the numerical optimization of bounds). A more interesting but also more challenging problem is to identify which convective flows maximize \wT\, and the corresponding optimal scaling of this quantity with \Ra. Considerable insight in this direction can be gained through (i) direct numerical simulations, which to the best of our knowledge are currently lacking; (ii) the calculation of steady but unstable solution of the Boussinesq equations~\eqref{e:governing-equations} that, as recently observed in the context of Rayleigh--B\'enard convection~\cite{Wen2020,wen_goluskin_doering_2022}, may transport heat more efficiently than turbulence; and (iii) the explicit design of optimally-cooling flows~\cite{Tobasco2017,doering2019optimal,Tobasco2022}. Finally, it would be interesting to investigate if more sophisticated PDE analysis techniques used for Rayleigh--B\'enard convection~\cite{choffrut2016upper} can be extended to IH convection to interpolate between the algebraic bounds on \wT\ proved in this paper for infinite-\Pr\ fluids with the finite-\Pr\ exponential bounds obtained in~\cite{kumar2021ihc}.



\vspace{2ex}\noindent
\textbf{Acknowledgements} 
A.A. acknowledges funding by the EPSRC Centre for Doctoral Training in Fluid Dynamics across Scales (award number EP/L016230/1). G.F. was supported by an Imperial College Research Fellowship and would like to thank the Isaac Newton Institute for Mathematical Sciences, Cambridge, for support and hospitality during the programme ``Mathematical aspects of turbulence: where do we stand?'' (EPSRC grant number EP/R014604/1) where work on this paper was undertaken.



\appendix

\section{Convex duality for the IH1 configuration}\label{ss:duality-ih1}

The equivalence between \eqref{e:primal-form-ih1} and \eqref{e:dual-form-ih1} follow from a relatively standard convex duality argument. It will be enough to show that
\begin{equation}\label{e:ih1-duality-inf-sup}
    \sup_{T \in \Tspace_+}
    -\Phi\{T\}
    =
    \inf_{ \substack{\lm \in L^2(0,1) \\ \lm \text{ \rm nondecreasing} \\ \langle \lm \rangle = -1}}
    \tfrac{1}{4\bp} \volav{  \abs{ \bfield' - \lm - \bp \left(z - \tfrac{1}{2}\right) }^2},
\end{equation}
where
\begin{equation*}
    \Phi\{T\} := \volav{
        \bp \abs{\nabla T}^2 - R \bfield' (\Delta^{-2}\Delta_h T) T 
        - (\bfield' - \bp z) \partial_z T
        }.
\end{equation*}

To establish this identity, we start by rewriting the maximization on the left-hand side as a minimization problem for the Legendre transform of $\Phi$. Recall that the Legendre transform of a functional $\Lambda:\Tspace \to \bR$ is the functional
\begin{equation*}
\Lambda^*\{\mu\} := \sup_{T \in \Tspace} \left( \mu\{T\} - \Lambda\{T\}\right),
\end{equation*}
which acts on the dual space $\Tspace^*$ of bounded linear functionals $T \mapsto \mu\{T\}$ on $\Tspace$. We shall write $\Tspace^*_+ \subset \Tspace^*$ for the subset of nonnegative bounded linear functionals on $\Tspace$, meaning that $\mu \in \Tspace^*_+$ if and only if $\mu \in \Tspace^*$ and $\mu\{T\} \geq 0$ for all $T \in \Tspace_+$.

\begin{lemma}\label{th:ih1-duality}
	If the pair $(\bp,\bfield)$ satisfies the spectral constraint,
	$\sup_{T \in \Tspace_+} -\Phi\{T\}
	= \inf_{\mu \in \Tspace^*_+} \Phi^*\{\mu \}.$
\end{lemma}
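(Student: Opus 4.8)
The plan is to read the claimed identity as an instance of Fenchel--Rockafellar convex duality for the sum of $\Phi$ and the indicator function $\iota_{\Tspace_+}$ of the cone $\Tspace_+$ (equal to $0$ on $\Tspace_+$ and $+\infty$ elsewhere), so that $\sup_{T\in\Tspace_+}(-\Phi\{T\}) = -\inf_{T\in\Tspace}\left(\Phi\{T\}+\iota_{\Tspace_+}\{T\}\right)$. The first step is to verify that $\Phi$ is a proper, convex, lower semicontinuous functional on the Hilbert space $\Tspace$. This is precisely where the spectral constraint enters: the degree-two part of $\Phi$, namely $T\mapsto\volav{\bp|\nabla T|^2 - R\bfield'(\Delta^{-2}\Delta_h T)T} = \volav{\bp|\nabla T|^2 + \bfield' wT}$, is exactly the quadratic form appearing in \cref{def:sc} and is therefore nonnegative on $\Tspace$; adding the remaining linear term $-\volav{(\bfield'-\bp z)\partial_z T}$ preserves convexity. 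Continuity (hence lower semicontinuity, and properness) follows because $T\mapsto w=-R\Delta^{-2}\Delta_h T$ is a bounded linear map of $\Tspace$ into $L^2(\Omega)$ --- it gains derivatives from the clamped-plate problem \eqref{e:w_and_T_eq}--\eqref{e:w-bcs} --- and because $\bfield'$ and $\bp z$ lie in $L^\infty(0,1)$; the only \textit{a priori} unbounded candidate, the bulk value $\bfield'(z)=A\left(z^{-1}+(1-z)^{-1}\right)$ on $(\delta,1-\varepsilon)$, stays bounded there. Thus $\Phi$ is finite and continuous on all of $\Tspace$.

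The second step collects two elementary facts about cones: the Legendre transform of $\iota_{\Tspace_+}$ is $\iota_{(\Tspace_+)^{\circ}}$, the indicator of the polar cone $(\Tspace_+)^{\circ}=\{\mu\in\Tspace^*:\mu\{T\}\le 0\text{ for all }T\in\Tspace_+\}$; and $(\Tspace_+)^{\circ}=-\Tspace_+^*$, directly from the definition of $\Tspace_+^*$. With these in hand, apply the Fenchel--Rockafellar theorem to $\Phi$ and $\iota_{\Tspace_+}$. Since $\Phi$ is finite and continuous at a point of $\operatorname{dom}\iota_{\Tspace_+}=\Tspace_+$ (for instance at $T=0$), the constraint qualification holds and there is no duality gap: $\inf_{T\in\Tspace}\left(\Phi\{T\}+\iota_{\Tspace_+}\{T\}\right) = -\inf_{\mu\in\Tspace^*}\left(\Phi^*\{\mu\}+\iota_{\Tspace_+}^*\{-\mu\}\right)$. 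Using $\iota_{\Tspace_+}^*\{-\mu\}=\iota_{(\Tspace_+)^{\circ}}\{-\mu\}=\iota_{-\Tspace_+^*}\{-\mu\}=\iota_{\Tspace_+^*}\{\mu\}$ to eliminate the dual constraint, this reads $\inf_{T\in\Tspace_+}\Phi\{T\}=-\inf_{\mu\in\Tspace_+^*}\Phi^*\{\mu\}$, which after negation is exactly the assertion of \cref{th:ih1-duality}. (The equality is understood in $[0,+\infty]$: since $-\Phi\{0\}=0$ and $\Phi^*\ge 0$, both sides are nonnegative, and either may equal $+\infty$; if the spectral constraint holds only marginally and $-\Phi$ is unbounded above on $\Tspace_+$, both sides are $+\infty$, but this case does not occur for the $\bfield,\bp$ built in \cref{sec:IH1}.)

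The main obstacle is the infinite-dimensional constraint-qualification step: one must know that $\Phi$ is genuinely finite and continuous on $\Tspace$, which hinges on the mapping properties of the biharmonic solution operator $T\mapsto w$ in \eqref{e:w_and_T_eq}--\eqref{e:w-bcs} and on the boundedness of the coefficients $\bfield'$ and $\bp z$; everything else is routine manipulation of Legendre transforms and polar cones. An alternative, machinery-free route is to prove the two inequalities separately --- ``$\le$'' is weak duality (Fenchel--Young), and ``$\ge$'' amounts to producing, for a near-optimal $T\in\Tspace_+$, a dual-feasible $\mu$ with $\Phi^*\{\mu\}\le -\Phi\{T\}$, which is a Hahn--Banach separation argument of exactly the kind that Fenchel--Rockafellar packages. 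I would state the result through the named theorem for brevity, leaving the explicit identification of the optimal multiplier $\mu$ (the nondecreasing function $\lm$) and the resulting closed-form expression for $\Phi^*$ to the computation that turns \eqref{e:primal-form-ih1} into \eqref{e:dual-form-ih1}.
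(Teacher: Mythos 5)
Your proposal is correct and follows essentially the same route as the paper: both decompose the problem as $\Phi$ plus the indicator of $\Tspace_+$ and invoke the Fenchel--Rockafellar theorem, with the spectral constraint supplying convexity of $\Phi$ and continuity of $\Phi$ at a point of $\Tspace_+$ serving as the constraint qualification. The only cosmetic differences are that you phrase the dual-side bookkeeping via polar cones rather than computing the indicator's Legendre transform directly, and you justify continuity through $L^\infty$ bounds on $\bfield'$ (valid for the specific ansatz) where the paper estimates the biharmonic solution operator $T\mapsto\Delta^{-2}\Delta_h T$ with Poincar\'e-type inequalities.
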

%
\begin{proof}
	Define a functional $\Psi$ from $\Tspace$ into $\bR \cup \{+\infty\}$ via
	\begin{equation*}
	\Psi\{T\} := \begin{cases}
	0 & T \in \Tspace_+,\\
	+\infty & \text{otherwise}.
	\end{cases}
	\end{equation*}
	Its Legendre transform is
	\begin{equation*}
	\Psi^*\{\mu\} 
	=
	\begin{cases}
	0 &\text{if } -\mu \in \Tspace^*_+,\\
	+\infty &\text{otherwise}.
	\end{cases}
	\end{equation*}
	We claim that
	\begin{align*}
	\sup_{T \in \Tspace_+} -\Phi\{T\}
	&= -\inf_{T \in \Tspace_+} \Phi\{T\} \\
	&= -\inf_{T \in \Tspace} \left(\Phi\{T\} + \Psi\{T\} \right) \\
	&= \inf_{\mu \in \Tspace^*} \left( \Phi^*\{-\mu \} + \Psi^*\{\mu\} \right)\\
	&= \inf_{\mu \in \Tspace^*_+} \Phi^*\{\mu \}.
	\end{align*}
	The first, second and fourth equalities are immediate consequences of the definitions of $\inf$, $\sup$, $\Psi$, and $\Psi^*$. The third one, instead, follows from the Fenchel--Rockafellar minmax theorem when $\Tspace$ is viewed as a Hilbert space with the inner product $\volav{\nabla T_1 \cdot \nabla T_2}$ and norm $\langle \abs{\nabla T}^2 \rangle$. To apply this theorem as stated in~\cite[Theorem~1.12]{Brezis2010}, we need to verify that the functionals $\Phi$ and $\Psi$ are convex, and that $\Phi$ is continuous (with respect to the norm on $\Tspace$) at some $T_0 \in {\rm dom}(\Phi) \cap {\rm dom}(\Psi) = \Tspace_+$.
 
	The convexity of $\Psi$ is obvious, while that of $\Phi$ follows from the assumption that the pair $(\bp,\bfield)$ satisfies the spectral constraint (cf. \cref{def:sc}). 
	To see this, write $\Phi\{T\} = \mathcal{Q}\{T,T\} - (\bfield' - \bp z) \partial_z T$ where $\mathcal{Q}$ is the bilinear form
	\begin{equation*}
	    \mathcal{Q}\{T_1,T_2\} := \volav{\beta \nabla T_1 \cdot \nabla T_2 - R \bfield' (\Delta^{-2}\Delta_h T_1) T_2 },
	\end{equation*}
	and observe that the spectral constraint ensures $\mathcal{Q}\{T,T\} \geq 0$ for all $T$ in the linear space $\Tspace$. Thus, for any $\lambda \in [0,1]$ we can set $\bar{\lambda} = 1-\lambda$ and estimate
	\begin{align*}
	    \Phi\{\lambda T_1 + \bar{\lambda} T_2\} 
	    &= \mathcal{Q}\{\lambda T_1 + \bar{\lambda} T_2, \lambda T_1 + \bar{\lambda} T_2\} 
	    - (\bfield' - \bp z) \partial_z [\lambda T_1 + \bar{\lambda} T_2]\\
	    &= \lambda\Phi\{T_1, T_1\}+\bar{\lambda}\Phi\{T_2,T_2\}- \lambda \bar{\lambda} \mathcal{Q}\{T_1-T_2, T_1-T_2\}\\
	    &\leq \lambda \Phi\{T_1, T_1\}+\bar{\lambda} \Phi\{T_2,T_2\},
	\end{align*}
	proving that $\Phi$ is convex.
	The continuity of $\Phi$ at any $T_0 \in \Tspace_+$ follows because $\Phi$ is continuous on the whole space $\Tspace$. Indeed, the terms $\smallvolav{\abs{\nabla T}^2}$ and $\smallvolav{(\bfield'-\bp z)\partial_z T}$ in the expression for $\Phi\{T\}$ are clearly continuous on $\Tspace$. To see that the remaining term is also continuous, it is enough to establish that $T_k \to T$ in $\Tspace$ implies $\varphi_k := \smash{\Delta^{-2}\Delta_h T_k} \to \smash{\Delta^{-2}\Delta_h T} =: \varphi$ in $\Tspace$. This can be shown by combining the Poincar\'e inequalities $\smallvolav{\varphi^2} \lesssim \smallvolav{\abs{\Delta \varphi}^2}$ and $\smallvolav{T^2} \lesssim \smallvolav{\abs{\nabla T}^2}$ with the estimate
	\begin{equation*}
	\smallvolav{ \abs{\Delta \varphi}^2 }
	= \abs{\smallvolav{ \varphi \Delta^2 \varphi }}
	= \abs{\smallvolav{ \varphi \Delta_h T }}
	= \abs{\smallvolav{ \Delta_h \varphi T } }
	\leq \smallvolav{ \abs{\Delta \varphi}^2 }^\frac12 \smallvolav{ \abs{T}^2 }^\frac12,
	\end{equation*}
	which implies $\smallvolav{\abs{\Delta \varphi}^2} \leq \smallvolav{T^2}$. This concludes the proof of \cref{th:ih1-duality}.
\end{proof}

Next, we prove that since $\Phi\{T\}$ is invariant under horizontal translations of the temperature field $T$, the minimization of its Legendre transform $\Phi^*$ can be restricted to functionals $\mu \in \Tspace^*_+$ that are translation invariant. Specifically, for any real numbers $r,s$ define the translation map $\translation{r,s}:\Tspace \to \Tspace$ and its adjoint $\translation{r,s}^*:\Tspace^* \to \Tspace^*$ via
\begin{align*}
\translation{r,s} T &:= T(x+r, y+s, z) \\
\translation{r,s}^* \mu &:= \mu \circ \translation{r,s}.
\end{align*}
The functional $\mu \in \Tspace^*$ is translation invariant if $\translation{r,s}^* \mu= \mu$ for all $r$ and $s$. 

\begin{lemma}\label{th:ih1-invariance}
	Suppose $\Phi:\Tspace \to \bR$ satisfies $\Phi\{\translation{r,s}T\} = \Phi\{T\}$ for all $r,s \in \bR$. Then,
	\begin{equation*}
	\inf_{\mu \in \Tspace^*} \Phi^*\{\mu \} = \inf_{\substack{\mu \in \Tspace^* \\ \mu \text{ \rm transl. inv.}}} \Phi^*\{\mu \}.
	\end{equation*}
\end{lemma}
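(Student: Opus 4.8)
The plan is to use a standard symmetrisation argument: given an arbitrary $\mu \in \Tspace^*$, I would average its horizontal translates over the torus to produce a translation-invariant functional $\tilde\mu \in \Tspace^*$ with $\Phi^*\{\tilde\mu\} \le \Phi^*\{\mu\}$. Since the translation-invariant functionals form a subset of $\Tspace^*$, the inequality $\inf_{\mu\in\Tspace^*}\Phi^*\{\mu\} \le \inf_{\mu \text{ transl. inv.}}\Phi^*\{\mu\}$ is automatic, so the construction of such a $\tilde\mu$ for every $\mu$ gives the reverse inequality and hence equality.

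First I would record the two properties of the maps $\translation{r,s}$ that drive the argument. Because $\Omega$ is horizontally periodic, $\translation{r,s}$ commutes with $\nabla$ and preserves volume averages, so it is a linear isometry of $\Tspace$ (with inner product $\volav{\nabla T_1 \cdot \nabla T_2}$) onto itself; moreover, for fixed $T$ the orbit map $(r,s) \mapsto \translation{r,s}T$ is continuous from the compact torus $\mathbbm{T}_{[0,L_x]}\times\mathbbm{T}_{[0,L_y]}$ into $\Tspace$ (continuity of translation in $H^1$). Consequently $(r,s)\mapsto \mu\{\translation{r,s}T\}$ is continuous, and I would define
\begin{equation*}
\tilde\mu\{T\} := \frac{1}{L_xL_y}\int_0^{L_x}\!\!\int_0^{L_y} \mu\{\translation{r,s}T\}\,\mathrm{d}r\,\mathrm{d}s.
\end{equation*}
Linearity in $T$ is clear, and since each $\translation{r,s}$ is an isometry one has $|\tilde\mu\{T\}| \le \|\mu\|\,\|T\|_{\Tspace}$, so $\tilde\mu \in \Tspace^*$ with $\|\tilde\mu\|\le\|\mu\|$. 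Translation invariance then follows from a change of variables together with the $L_x,L_y$-periodicity of the integrand: $\tilde\mu\{\translation{a,b}T\} = \frac{1}{L_xL_y}\int\!\!\int \mu\{\translation{r+a,s+b}T\}\,\mathrm{d}r\,\mathrm{d}s = \tilde\mu\{T\}$ for all $a,b$.

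Next I would verify $\Phi^*\{\tilde\mu\} \le \Phi^*\{\mu\}$. Using the hypothesis $\Phi\{\translation{r,s}T\}=\Phi\{T\}$, I can also write $\Phi\{T\}$ as its average over $(r,s)$, so that
\begin{equation*}
\Phi^*\{\tilde\mu\} = \sup_{T\in\Tspace}\left(\tilde\mu\{T\}-\Phi\{T\}\right) = \sup_{T\in\Tspace}\frac{1}{L_xL_y}\int_0^{L_x}\!\!\int_0^{L_y}\left(\mu\{\translation{r,s}T\}-\Phi\{\translation{r,s}T\}\right)\mathrm{d}r\,\mathrm{d}s .
\end{equation*}
For every $T$ and every $(r,s)$ the integrand is bounded above by $\sup_{T'\in\Tspace}\left(\mu\{T'\}-\Phi\{T'\}\right)=\Phi^*\{\mu\}$, so the average is $\le \Phi^*\{\mu\}$ and hence $\Phi^*\{\tilde\mu\}\le\Phi^*\{\mu\}$; this is just Jensen's inequality for the convex functional $\Phi^*$. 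Since $\tilde\mu$ is translation invariant, $\inf_{\mu\text{ transl. inv.}}\Phi^*\{\mu\}\le\Phi^*\{\tilde\mu\}\le\Phi^*\{\mu\}$ for every $\mu\in\Tspace^*$, and taking the infimum over $\mu$ yields the claim.

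I do not expect a genuine obstacle: the argument is a routine group-averaging symmetrisation. The only points needing care are the measure-theoretic ones — that $\tilde\mu$ is a well-defined element of $\Tspace^*$, which relies precisely on $\translation{r,s}$ being an isometry and the orbit map being continuous (hence on the periodicity of $\Omega$) — and, should one wish the statement in the form actually used in \cref{ss:duality-ih1} with $\Tspace^*_+$ in place of $\Tspace^*$, the observation that $\Tspace_+$ is translation invariant, so $\mu\in\Tspace^*_+$ forces $\translation{r,s}^*\mu\in\Tspace^*_+$ and therefore $\tilde\mu\in\Tspace^*_+$ as well.
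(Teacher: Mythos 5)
Your proposal is correct and follows essentially the same route as the paper: both symmetrise an arbitrary $\mu\in\Tspace^*$ by averaging $\translation{r,s}^*\mu$ over horizontal shifts and exploit the translation invariance of $\Phi$ to compare Legendre transforms. The only difference is that you prove just the one-sided estimate $\Phi^*\{\tilde\mu\}\le\Phi^*\{\mu\}$, which indeed suffices for the equality of infima, whereas the paper additionally establishes the reverse inequality (via a Jensen-type argument) to get $\Phi^*\{\tilde\mu\}=\Phi^*\{\mu\}$; your extra care about well-definedness of the averaged functional and the remark on $\Tspace^*_+$ are both sound.
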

\begin{proof}
	It suffices to show that for every $\nu \in \Tspace^*$ there exists a translation-invariant $\mu \in \Tspace^*$ such that $\Phi^*\{\mu\} = \Phi^*\{\nu\}$. Since every temperature field $T \in \Tspace$ is horizontally periodic, such a $\mu$ can be constructed simply by averaging the functionals $\translation{r,s}^*\nu$ over horizontal translations $r$ and $s$, i.e., by letting
	\begin{equation*}
	\mu(T) := \fint_{r,s} \translation{r,s}^*\nu\{T\} = \fint_{r,s} \nu\{\translation{r,s} T\} \qquad \forall T \in\Tspace.
	\end{equation*}
	To show that $\Phi^*\{\mu\} = \Phi^*\{\nu\}$, we establish the complementary inequalities $\Phi^*\{\mu\} \leq \Phi^*\{\nu\}$ and $\Phi^*\{\mu\} \geq \Phi^*\{\nu\}$. For the first one, use the translation invariance of $\Phi$ and the definition of the Legendre transform to estimate
	\begin{equation*}
	\nu\{\translation{r,s}T\} - \Phi\{T\} = \nu\{\translation{r,s}T\} - \Phi\{\translation{r,s} T\} \leq \Phi^*\{\nu\}
	\end{equation*}
	for all $T \in \Tspace$ and all horizontal shifts $r,s \in \bR$. Averaging over horizontal shifts shows that $\mu\{T\} - \Phi\{T\} \leq \Phi^*\{\nu\}$ for all $T \in \Tspace$, which implies $\Phi^*\{\mu\} \leq \Phi^*\{\nu\}$.
	
	To obtain the reverse inequality observe that, by definition of $\Phi^*$, for any $\varepsilon>0$ there exists $T_\varepsilon \in \Tspace$ such that $\nu\{T_\varepsilon\} - \Phi\{T_\varepsilon\} \geq \Phi^*\{\nu\} - \varepsilon$. Then, since $\Phi$ is translation invariant,
	\begin{align*}
	\Phi^*\{\nu\} - \varepsilon 
	&\leq \nu\{T_\varepsilon\} - \Phi\{T_\varepsilon\} \\
	&= (\translation{r,s}^*\nu)\{\translation{-r,-s} T_\varepsilon\} - \Phi\{\translation{-r,-s} T_\varepsilon\}\\
	&\leq \Phi^*\{\translation{r,s}^*\nu\}.
	\end{align*}
	Upon averaging this inequality over all horizontal shifts $r,s \in \bR$ and applying Jensen's inequality to $\Phi^*$, which is concave because it is the supremum of linear functions, we find
	\begin{equation*}
	\Phi^*(\nu) - \varepsilon \leq 
	\fint_{r,s} \Phi^*\left(\translation{r,s}^*\nu \right)
	\leq \Phi^*\!\left(\fint_{r,s} \translation{r,s}^*\nu \right)
	= \Phi^*(\mu).
	\end{equation*}
	Letting $\varepsilon \to 0$ yields $\Phi^*\{\mu\} \geq \Phi^*\{\nu\}$, as desired. \Cref{th:ih1-invariance} is therefore proved.
\end{proof}

To establish identity~\eqref{e:ih1-duality-inf-sup} we now need to show that its right-hand side coincides with the infimum of $\Phi^*$ over translation-invariant functionals $\mu \in \Tspace^*_+$. For this, we use a characterization of such $\mu$ established in~\cite[Appendix C]{arslan2021IH1}.

\begin{lemma}\label{th:ih1-representation}
    Let $\Tspace^*_+$ be the set of positive linear functionals on the temperature space $\Tspace$ defined in~\eqref{e:T-space-ih1}.
	If $\mu \in \Tspace^*_+$ is translation invariant, there exists a nondecreasing function $\lm \in L^2(0,1)$ with $\volav{\lm}=-1$ such that $\mu\{T\} = \volav{-{\lm}(z)\, \partial_z T}$.
\end{lemma}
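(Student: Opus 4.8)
The plan is to combine the Hilbert-space structure of $\Tspace$ with the Riesz representation theorem, exploit translation invariance to reduce the Riesz representer to a function of the vertical coordinate alone, and then read off the monotonicity and the normalization of $\lm$ from the positivity of $\mu$ on the cone $\Tspace_+$. As already noted in the proof of \cref{th:ih1-duality}, the bilinear form $\volav{\nabla T_1 \cdot \nabla T_2}$ is an inner product on $\Tspace = \Tspace_1$ whose induced norm is equivalent to the full $H^1(\Omega)$ norm --- by the Poincar\'e inequality, valid thanks to the Dirichlet condition at $z=1$ --- so $\Tspace$ is a Hilbert space. Hence there is a unique $g \in \Tspace$ with $\mu\{T\} = \volav{\nabla g \cdot \nabla T}$ for all $T \in \Tspace$.

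Next I would exploit translation invariance. Since $\nabla$ commutes with horizontal shifts and the integral over the horizontally periodic domain is shift-invariant, for every $T \in \Tspace$ and all $r,s \in \bR$ one has
\[
\volav{\nabla g \cdot \nabla T} = \mu\{T\} = \mu\{\translation{r,s}T\} = \volav{\nabla g \cdot \translation{r,s}(\nabla T)} = \volav{\nabla(\translation{-r,-s}g)\cdot \nabla T}.
\]
Since $\translation{-r,-s}g \in \Tspace$, uniqueness of the Riesz representer forces $\translation{-r,-s}g = g$ in $\Tspace$ for all $r,s$, so $g$ is invariant under all horizontal translations and hence agrees almost everywhere with a function of $z$ only. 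Writing $g = g(z) \in H^1(0,1)$ and setting $\lm := -g' \in L^2(0,1)$ yields $\mu\{T\} = \volav{g'(z)\,\partial_z T} = \volav{-\lm(z)\,\partial_z T}$, which is the asserted representation.

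It then remains to show that $\lm$ may be taken nondecreasing with $\volav{\lm} = -1$. For monotonicity, I would test $\mu$ against temperature fields of the form $T(x,y,z) = \phi(z)$ with $0 \le \phi \in C_c^\infty(0,1)$: such $T$ lie in $\Tspace_+$, so $0 \le \mu\{T\} = -\int_0^1 \lm(z)\,\phi'(z)\,\dz$, which says precisely that the distributional derivative $\lm'$ is a nonnegative distribution on $(0,1)$, hence a nonnegative Radon measure. The classical fact that a locally integrable function on an interval with nonnegative distributional derivative has a nondecreasing representative then lets us relabel $\lm$ as that representative. Finally, for every $T \in \Tspace_1$ the Dirichlet conditions give $\volav{\partial_z T} = \tfrac{1}{L_x L_y}\int_0^{L_x}\!\int_0^{L_y}\big(T|_{z=1} - T|_{z=0}\big)\,\textrm{d}y\,\textrm{d}x = 0$, so adding any constant $c$ to $\lm$ leaves $\mu$ unchanged; choosing $c = -1 - \volav{\lm}$ gives a (still nondecreasing) representative with $\volav{\lm} = -1$.

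I expect the only genuinely delicate point to be the monotonicity step: one must check that the cone $\Tspace_+$ contains enough one-dimensional test functions to conclude $\lm' \ge 0$ in the sense of distributions, and then appeal to the standard measure-theoretic characterization of such functions. The translation-invariance reduction and the fixing of the additive constant are routine, the latter relying entirely on the identity $\volav{\partial_z T} = 0$, which is valid for all $T \in \Tspace_1$.
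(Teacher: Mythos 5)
Your argument is correct, and it is genuinely self-contained, which is more than this paper provides: here the lemma is not proved at all but quoted from \cite[Appendix~C]{arslan2021IH1}, and the closest in-paper analogue, \cref{th:ih3-representation} in \cref{ss:duality-ih3}, is obtained by ``straightforward modifications'' of the cited lemmas, with positivity exploited by testing $\mu$ against explicit piecewise-linear profiles $T_\varepsilon$ and invoking Lebesgue differentiation. Your route is different in two ways. First, you get the representation $\mu\{T\}=\volav{-\lm(z)\,\partial_z T}$ directly from the Riesz theorem on $(\Tspace,\volav{\nabla T_1\cdot\nabla T_2})$ (the same Hilbert structure already used in the proof of \cref{th:ih1-duality}) together with uniqueness of the representer under the adjoint action of $\translation{r,s}$, which cleanly reduces $g$ to a function of $z$; the cited reference reaches the same representation by a more hands-on argument. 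Second, you obtain monotonicity from the distributional inequality $-\int_0^1 \lm\,\phi'\,\dz\ge 0$ for $0\le\phi\in C_c^\infty(0,1)$ and the standard fact that an $L^1_{\rm loc}$ function with nonnegative distributional derivative has a nondecreasing representative, rather than from explicit test fields plus Lebesgue differentiation as in \cref{th:ih3-representation}; the two are equivalent in content, yours being slightly more abstract and shorter. Your treatment of the normalization is also the right one and worth highlighting: since $T$ vanishes at both $z=0$ and $z=1$ for the IH1 space, $\volav{\partial_z T}=0$, so $\lm$ is only determined up to an additive constant and $\volav{\lm}=-1$ can be imposed without loss of generality --- precisely the point where IH1 differs from IH3, where the gauge freedom is absent and one instead ends up with the constraint $\lm\ge -1$ after shifting. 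The only points to state explicitly if this were written out in full are the routine ones you already flag: boundedness of $\mu$ with respect to the Dirichlet norm (built into the definition of $\Tspace^*$), and the fact that a horizontally translation-invariant $L^2$ function on the periodic domain coincides a.e.\ with a function of $z$ (e.g.\ by horizontal Fourier expansion).
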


Thanks to this representation, all that remains to do is to calculate
\begin{equation}\label{e:Psi-transform-initial}
\Phi^*\{\mu\} = \sup_{T \in \Tspace} \volav{
	(\bfield' - \lm + \bp z) \partial_z T
	- \bp \abs{\nabla T}^2 
	+ R \bfield' (\Delta^{-2}\Delta_h T) T
}.
\end{equation}
To solve this maximization problem, let $\eta(z) = \smallvolav{T}_h(z)$ be the horizontal mean of $T$ and set $\xi = T - \eta$. Since $\Delta_h \eta =0$ and $\smallvolav{\xi}_h(z) = 0$ by construction, we can therefore substitute $T=\eta+\xi$ in~\cref{e:Psi-transform-initial} and solve the equivalent problem
\begin{multline}\label{e:Psi-transform-expanded}
\Phi^*\{\mu\} = 
    \sup_{\substack{\eta=0 \text{ if }z \in\{0,1\} \\ \xi = 0 \text{ if }z \in\{0,1\} \\ \smallvolav{\xi}_h=0}} 
    \Big\{
    \volav{
    	(\bfield' - \lm + \bp z) \eta'
    	- \bp \abs{\eta'}^2 
    	- \bp \abs{\nabla \xi}^2 
    	+ R \bfield' (\Delta^{-2}\Delta_h \xi) \xi
    }
\\[-5ex]
    \volav{
        (\bfield' - \lm + \bp z) \partial_z \xi
        - 2\bp\,  \eta'\, \partial_z \xi
        + R \bfield' \eta \,(\Delta^{-2}\Delta_h \xi)
    }
    \Big\}.
\end{multline}
The boundary conditions on $\eta$ and $\xi$ follow from those on $T$. 
The three terms on the second line vanish identically because $\smallvolav{\xi}_h(z) = 0$ at all $z\in[0,1]$. To verify this claim, observe that
\begin{equation*}
    \volav{f(z) \partial_z \xi} = \int_0^1 f(z) \volav{\xi}_h'(z) \dz
    = 0
\end{equation*}
for any function $f(z)$ that depends only on the vertical direction. Similarly, one can show that
\begin{equation*}
    \volav{R \bfield' \eta\, (\Delta^{-2}\Delta_h \xi)} = 0
\end{equation*}
because the function $w = -R\smash{\Delta^{-2}\Delta_h T} = -R\smash{\Delta^{-2}\Delta_h \xi}$ also has zero horizontal mean. Indeed, taking the horizontal average of~\eqref{e:w_and_T_eq} yields the ODE $\smallvolav{w}_h''(z)=0$, whose only solution satisfying the boundary conditions in~\eqref{e:w-bcs} is $\smallvolav{w}_h(z)=0$. The minimization in~\eqref{e:Psi-transform-expanded} therefore simplifies into
\begin{align}\label{e:Psi-transform-middle}
\Phi^*\{\mu\} 
= 
\sup_{\substack{\eta=0 \text{ if }z \in\{0,1\} \\ \xi = 0 \text{ if }z \in\{0,1\} \\ \smallvolav{\xi}_h=0}} 
\volav{
	(\bfield' - \lm - \bp z) \eta'
	- \bp \abs{\eta'}^2
	- \bp \abs{\nabla \xi}^2 
	- R \bfield' w \xi
}.
\end{align}
Since the pair $(\bp,\bfield)$ was assumed to satisfy the spectral constraint (cf. \cref{def:sc}), the choice $\xi=0$ is optimal. The optimal $\eta$, instead, satisfies the Euler--Lagrange equation
$2\bp \eta'' = (\bfield' - \lm - \bp z)'$. Solving this equation using the boundary conditions, the constraints $\bfield(0)=1$ and $\bfield(1)=0$, and the normalization $\smallvolav{\lm} = -1$ gives $\eta' = \frac{1}{2b}[\bfield' - \lm - b(z-\frac12)]$, which can be substituted back into~\eqref{e:Psi-transform-middle} to give
\begin{equation}
\Phi^*\{\mu\} = \tfrac{1}{4\bp} \langle \abs{ \bfield' - \lm' - \bp \left( z-\tfrac12 \right) }^2 \rangle.
\end{equation}
%
Minimizing the left-hand side over translation invariant $\mu$ in $\Tspace^*_+$ is the same as minimizing the right-hand side over $\lm$ satisfying the conditions in \cref{th:ih1-representation}, which is exactly the problem on the right-hand side of~\eqref{e:ih1-duality-inf-sup}.

\section{Convex duality for the IH3 configuration}\label{ss:duality-ih3}

The equivalence between the upper bounds~\cref{e:primal-form-ih3} and~\cref{e:dual-form-ih3} for the IH3 configuration follows from the identity
\begin{equation}\label{e:ih3-duality-inf-sup}
    \sup_{T \in \Tspace_+}
    -\Phi\{T\}
    =
    \inf_{ \substack{\lm \in L^2(0,1) \\ \lm \text{ \rm nondecreasing} \\ \lm \geq -1}}
    \volav{ \tfrac{1}{4\bp} \abs{ \bfield' - \lm - \bp z }^2 },
\end{equation}
where
\begin{equation}
    \Phi\{T\} := \volav{  
            \bp \abs{\nabla T}^2 
            -R \bfield' (\Delta^{-2}\Delta_h T) T  
            - \left( \bfield' - \bp z + 1 \right)\partial_z T
            }.
\end{equation}

This identity can be proven using a convex duality argument analogous to that in \cref{ss:duality-ih1}. Indeed, \cref{th:ih1-duality,th:ih1-invariance} apply to the functional $\Phi$ considered in this section with no changes to their proofs. Consequently,
\begin{equation}
    \sup_{T \in \Tspace_+} \label{e:ih3-duality-initial}
    -\Phi\{T\}
    =
    \inf_{\substack{\mu \in \Tspace^* \\ \mu \text{ \rm transl. inv.}}} \Phi^*\{\mu \}.
\end{equation}
To calculate the Legendre transform $\Phi^*$, however, we must replace \cref{th:ih1-representation} with a different characterization of translation-invariant linear functionals $\mu \in \Tspace^*_+$. This is due to the different boundary conditions imposed on the temperature space $\Tspace$.

\begin{lemma}\label{th:ih3-representation}
    Let $\Tspace^*_+$ be the set of positive linear functionals on the temperature space $\Tspace$ defined in~\eqref{e:T-space-ih1}.
	If $\mu \in \Tspace^*_+$ is translation invariant, there exists a nondecreasing function $\lm \in L^2(0,1)$ nonnegative almost everywhere and such that $\mu\{T\} = \volav{-{\lm}(z)\, \partial_z T}$.
\end{lemma}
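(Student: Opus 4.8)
The plan is to follow the blueprint of the proof of \cref{th:ih1-representation} given in~\cite[Appendix~C]{arslan2021IH1}, modifying only the final step so as to exploit the Neumann condition imposed on $T$ at $z=0$ in the space $\Tspace_3$. Throughout I identify a function $\eta:[0,1]\to\bR$ with the element of $\Tspace$ that is constant in the horizontal variables and equal to $\eta(z)$; with this convention the horizontal average $\volav{T}_h$ of any $T\in\Tspace$ is again an element of $\Tspace$, since horizontal averaging preserves both the $H^1$ regularity and the condition $T|_{z=1}=0$.

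First I would show that $\mu$ depends on its argument only through the horizontal average. Since $\mu$ is bounded and linear and satisfies $\mu\{T\}=\mu\{\translation{r,s}T\}$ for all real $r,s$, averaging this identity over $r\in[0,L_x]$ and $s\in[0,L_y]$ and interchanging $\mu$ with the averaging integral (legitimate because $(r,s)\mapsto\translation{r,s}T$ is continuous into $\Tspace$ and $\mu$ is a bounded functional) gives $\mu\{T\}=\mu\{\volav{T}_h\}$ for every $T\in\Tspace$. Next I would observe that $V:=\{T\in\Tspace:\ \partial_x T=\partial_y T=0\}$ is a closed subspace of the Hilbert space $\Tspace$ equipped with the inner product $\volav{\nabla T_1\cdot\nabla T_2}$, and that it is isometric to $\{\eta\in H^1(0,1):\eta(1)=0\}$ with the norm $\|\eta'\|_{L^2(0,1)}$, which is a genuine norm by the Poincar\'e inequality. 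Applying the Riesz representation theorem to the bounded linear functional $\mu|_V$ yields a unique $\zeta\in V$ with $\mu\{\eta\}=\int_0^1\zeta'(z)\,\eta'(z)\,\dz$ for all $\eta\in V$; setting $\lm:=-\zeta'\in L^2(0,1)$ and using $(\volav{T}_h)'(z)=\volav{\partial_z T}_h(z)$ then gives $\mu\{T\}=\mu\{\volav{T}_h\}=\volav{-\lm(z)\,\partial_z T}$ for every $T\in\Tspace$.

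It remains to extract the two sign properties of $\lm$ from the hypothesis $\mu\in\Tspace^*_+$. The key observation is that $\volav{T}_h\geq0$ whenever $T\geq0$, while conversely every nonnegative $\eta\in H^1(0,1)$ with $\eta(1)=0$ is itself a nonnegative element of $\Tspace$; hence $\mu\in\Tspace^*_+$ is equivalent to $\int_0^1-\lm(z)\,\eta'(z)\,\dz\geq0$ for all such $\eta$. Testing with $\eta\in C^\infty_c(0,1)$ satisfying $\eta\geq0$ shows that the distributional derivative of $\lm$ on $(0,1)$ is a nonnegative Radon measure, so $\lm$ agrees almost everywhere with a nondecreasing function. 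Testing instead with $\eta_b(z):=\max\{0,\,1-z/b\}$ for $b\in(0,1)$ --- which is admissible precisely because no condition is imposed on $T$ at $z=0$ --- gives $\tfrac1b\int_0^b\lm(z)\,\dz\geq0$ for every $b$, so letting $b\to0^+$ forces $\lm(0^+)\geq0$, and then monotonicity yields $\lm\geq0$ almost everywhere. This produces the required $\lm$; the converse implication --- that every nondecreasing nonnegative $\lm\in L^2(0,1)$ defines a translation-invariant $\mu\in\Tspace^*_+$ via integration by parts --- also holds but is not needed.

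The part I expect to require the most care is not conceptually deep but technical: rigorously justifying the reduction to horizontal averages (either through the vector-valued averaging argument above or, equivalently, via density of horizontal trigonometric polynomials), and pinning down the exact admissible class $\{\eta\in H^1(0,1):\eta\geq0,\ \eta(1)=0\}$ of one-dimensional test functions. It is exactly the freedom to take $\eta(0)\neq0$ --- absent in the isothermal IH1 case of \cref{th:ih1-representation} and reflecting the insulating bottom boundary --- that upgrades the conclusion from a normalization of the form $\volav{\lm}=-1$ to the pointwise bound $\lm\geq0$.
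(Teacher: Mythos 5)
Your proposal is correct and follows the same overall strategy as the paper: reduce $\mu$ to a one-dimensional multiplier $\lm$ via translation invariance, deduce monotonicity of $\lm$ from positivity, and then exploit the fact that admissible test functions need not vanish at $z=0$ to upgrade the conclusion to $\lm\geq 0$. The paper simply cites its companion work~\cite{arslan2021IH1} for the representation $\mu\{T\}=\volav{-\lm(z)\,\partial_z T}$ and for the monotonicity of $\lm$ (your translation-averaging/Riesz argument and your distributional-derivative argument are precisely the ``straightforward modifications'' intended there), and only writes out the genuinely new step, the nonnegativity. On that step your route differs slightly: the paper fixes an a.e.\ interior point $z_0$ and tests with a plateau field equal to $2$ on $(0,z_0-\varepsilon)$, ramping linearly to $0$ across $(z_0-\varepsilon,z_0+\varepsilon)$, so that positivity gives $\tfrac{1}{\varepsilon}\int_{z_0-\varepsilon}^{z_0+\varepsilon}\lm(z)\,\dz\geq 0$ and the Lebesgue differentiation theorem yields $\lm(z_0)\geq 0$ for a.e.\ $z_0$ without invoking monotonicity; you instead use hat functions $\eta_b$ concentrated at the bottom boundary and then propagate the sign upward using monotonicity. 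Both arguments are valid and both hinge on the same freedom at $z=0$ (test fields with nonzero value there) that is absent in the IH1 case, exactly as you observe. One cosmetic caveat: your $\eta_b$ has slope $-1/b$ down to $z=0$, so if one reads the definition of $\Tspace_3$ as literally including the condition $\partial_z T\vert_{z=0}=0$ you should flatten $\eta_b$ near the bottom (e.g.\ constant on $(0,b/2)$ and linear on $(b/2,b)$), as the paper's own test function implicitly does; this changes nothing, since $\tfrac{2}{b}\int_{b/2}^{b}\lm(z)\,\dz\geq 0$ still forces $\lm\geq 0$ near $z=0$ and monotonicity finishes the argument.
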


\begin{proof}
    Straightforward modifications to the proof of~\cite[Lemma~3]{arslan2021IH1} reveal that any translation-invariant $\mu \in \Tspace^*$ admits the representation $\mu\{T\} = \volav{-{\lm}(z)\, \partial_z T}$ for some function $\lm \in L^2(0,1)$. If $\mu$ is also positive, then the argument in the proof  of~\cite[Lemma~2]{arslan2021IH1} shows that $\lm$ must be nondecreasing. To see that we must have $\lm(z_0) \geq 0$ at almost every $z_0 \in (0,1)$, fix $\varepsilon>0$ sufficiently small and consider the temperature field $T_\varepsilon \in \Tspace_+$ given by
    \begin{equation*}
        T_\varepsilon(\vx) := \begin{cases}
        2 &  z\in (0, z_0-\varepsilon),\\
        2-\varepsilon^{-1}(z-z_0+\varepsilon) & z \in (z_0-\varepsilon, z_0+\varepsilon),\\
        0 & z \in (z_0+\varepsilon, 1).
        \end{cases}
    \end{equation*}
    Then, since $\mu$ is a positive functional by assumption,
    \begin{equation*}
        0 \leq \mu(T_\varepsilon) = \frac1\varepsilon \int_{z_0-\varepsilon}^{z_0+\varepsilon} \lm(z) \dz. 
    \end{equation*}
    By Lebesgue's differentiation theorem, the right-hand side tends to $2\lm(z_0)$ for almost all $z_0 \in (0,1)$ as $\varepsilon \to 0$. Thus, we must have $\lm \geq 0$ almost everywhere on $(0,1)$.
\end{proof}

To conclude the argument, we need to calculate $ \Phi^*\{\mu \}$ for translation-invariant $\mu \in \Tspace^*_+$, which by \cref{th:ih3-representation} is given by
\begin{equation*}
\Phi^*\{\mu\} = \sup_{T \in \Tspace} 
\volav{
    \left( \bfield'  - \lm - \bp z + 1 \right)\partial_z T
    -\bp \abs{\nabla T}^2 
    +R \bfield' (\Delta^{-2}\Delta_h T) T
}.
\end{equation*}
Since the pair $(\bp,\bfield)$ was assumed to satisfy the spectral constraint, this maximization problem can be restricted to temperature fields that depend only on the vertical coordinate $z$ (this can be proven by splitting $T$ into its horizontal mean $\eta$ and a perturbation $\xi$ with zero horizontal mean, as outlined at the end of \cref{ss:duality-ih1}). The optimal value can then be shown to be 
\begin{equation*}
    \Phi^*\{\mu\} = \tfrac{1}{4\bp} \volav{ \abs{ \bfield' - \lm - \bp z + 1 }^2 }
\end{equation*}
and can be substituted into~\eqref{e:ih3-duality-initial} to arrive at
\begin{equation*}
    \sup_{T \in \Tspace_+} -\Phi\{T\}
    =
    \inf_{\substack{\lm \in L^2(0,1) \\ \lm \text{ \rm nondecreasing}\\\lm\geq 0}} 
    \tfrac{1}{4\bp} \volav{ \abs{ \bfield' - \lm - \bp z + 1 }^2 }.
\end{equation*}
Changing the optimization variable on the right-hand side to $\tilde{\lm} = \lm - 1$ yields~\eqref{e:ih3-duality-inf-sup}.


\end{document}